\def\Cov{\text{\rm Cov}}
\newtheorem*{rep@theorem}{\rep@title}
\newcommand{\newreptheorem}[2]{%
\newenvironment{rep#1}[1]{%
 \def\rep@title{#2 \ref{##1}}%
 \begin{rep@theorem}}%
 {\end{rep@theorem}}}
\newtheorem{theorem}{Theorem}
\newtheorem{lemma}[theorem]{Lemma}
\newtheorem{proposition}[theorem]{Proposition}
\newtheorem{definition}[theorem]{Definition}
\newtheorem{example}[theorem]{Example}
\newcommand{\DSBS}{\text{\rm DSBS}}
\newcommand{\fR}{\mathfrak{R}}
\newcommand{\fS}{\mathfrak{S}}
\newcommand{\dd}{\text{\rm{d}}}
\newcommand{\Var}{\text{\rm{Var}}}
\newcommand{\E}{\mathbb{E}}
\definecolor{dred}{rgb}{0.7,0,0}
\definecolor{darkblue}{rgb}{0, .07, .5}
\definecolor{darkred}{rgb}{0.5,0,0}
 \definecolor{mahogany}{rgb}{0.65, 0., 0.5}
\begin{document}

\title{$\Phi$-Entropic Measures of Correlation}
\author{Salman Beigi and Amin Gohari\\\small{
School of Mathematics, Institute for Research in Fundamental Sciences (IPM), Tehran, Iran}\\\small{
Department of Electrical Engineering, Sharif University of Technology, Tehran, Iran}
}
\date{}
\maketitle

\begin{abstract}

A measure of correlation is said to have the \emph{tensorization} property if it is unchanged when computed for i.i.d.\ copies. More precisely, a measure of correlation between two random variables $(X, Y)$ denoted by $\rho(X, Y)$, has the tensorization property if $\rho(X^n, Y^n)=\rho(X, Y)$ where $(X^n, Y^n)$ is $n$ i.i.d.\ copies of $(X, Y)$.
Two well-known examples of such measures are the maximal correlation and the hypercontractivity ribbon (HC~ribbon).  We show that the maximal correlation and HC ribbons are special cases of $\Phi$-ribbon, defined in this paper for any function $\Phi$ from a class of convex functions ($\Phi$-ribbon reduces to HC~ribbon and the maximal correlation for special choices of $\Phi$).  Any $\Phi$-ribbon is shown to be a  measures of correlation with the tensorization property.  We show that the $\Phi$-ribbon also characterizes the $\Phi$-strong data processing inequality constant introduced by Raginsky.  
We further study the $\Phi$-ribbon for the choice of $\Phi(t)=t^2$ and introduce an equivalent characterization of this ribbon. 

\end{abstract}



\tableofcontents

\section{Introduction}

A measure of correlation is called to have the \emph{tensorization} property if it is unchanged when computed for i.i.d.\ copies. 
Such measures of correlations have found applications in the non-interactive distribution simulation problem~\cite{KamathAnantharam}, distributed source and channel coding problems~\cite{KangUlukus}, as well as simulation of non-local correlation by wirings~\cite{OurPaper}. In this paper we introduce new measures of correlation with the tensorization property that generalize two previously known such measures.

Let us explain the notion of  tensorization via the example of non-interactive distribution simulation~\cite{KamathAnantharam}. Fix some bipartite distribution $p_{XY}$. Suppose that two parties, Alice and Bob, are given i.i.d.\ samples $X^n$ and $Y^n$ respectively, and they are asked to output \emph{one} sample of $A$ and $B$ respectively, distributed according to some predetermined distribution $q_{AB}$. Alice and Bob can choose $n$ to be as large as they want, but are not allowed to communicate after receiving $X^n$ and $Y^n$. The problem of deciding whether this task is feasible or not is a hard problem in general. Nevertheless, we may obtain impossibility results using the data processing inequality.  

Suppose that $I(X^n; Y^n) < I(A; B)$. In this case, by the data processing inequality, local transformation of $(X^n, Y^n)$ to $(A, B)$ is infeasible. However, note that mutual information is \emph{additive}, i.e., we have $I(X^n; Y^n) = n I(X; Y)$. Then, unless $X$ and $Y$ are independent,  by choosing $n$ to be large enough, $I(X^n; Y^n)$ becomes as large as we want and greater than $I(A; B)$. Therefore, the data processing inequality of mutual information does not give us any useful bound on this problem, simply because mutual information is additive and increases when computed on i.i.d.\ copies. So we need to use a measure of correlation with the tensorization property as defined below. 
 
Suppose that there is some function $\rho(\cdot, \cdot)$ of bipartite distributions that similar to mutual information satisfies the data processing inequality (i.e., it is a measure of correlation), but instead satisfies
\begin{align}\label{eq:def-tensor-0}
\rho(X^n, Y^n) = \rho(X, Y).
\end{align}
The above equation is called the \emph{tensorization} property. Given such a measure  we find that local transformation of $(X^n, Y^n)$ to $(A, B)$ is impossible (even for arbitrarily large $n$) if $\rho(X, Y)< \rho(A, B)$.

\paragraph{Maximal correlation:} A notable example of such a measure of correlation  is \emph{maximal correlation}~\cite{Hirschfeld, Gebelein, Renyi1, Renyi2}, which was used by Witsenhausen~\cite{Witsenhausen75} in his extension of the result of G\'acs and K\"orner on common information~\cite{GacsKorner}.
Maximal correlation $\rho(X, Y)$ of  a bipartite probability distribution $p_{XY}$ is the maximum of Pearson's correlation coefficient over all non-constant functions $f$ and $g$ of $X$ and $Y$ respectively. That is,
\begin{align}\label{eq:max-correlatoin-31}
\rho(X, Y)=& \max\frac{\E\big[(f(X)-\E[f(X)])(g(Y)-\E[g(Y)])\big]}{\sqrt{\Var[f(X)]\Var[g(Y)]}},
\end{align}
where $\E[\cdot]$ and $\Var[\cdot]$ are expected value and variance respectively; moreover, the maximum is taken over all non-constant functions $f=f(X)$ and $g=g(Y)$. 
%
Maximal correlation satsifies the following two important properties: 
\begin{enumerate}[label=(\roman*)]
\item \emph{Tensorization:} We have
\begin{align}\rho(X_1X_2, Y_1Y_2) = \max\{\rho(X_1, Y_1), \rho(X_2, Y_2)\},\label{eqn:amnewtensor}\end{align}
when $X_1Y_1$ and $X_2Y_2$ are independent, i.e., $p_{X_1X_2Y_1Y_2}=p_{X_1Y_1}\cdot p_{X_2Y_2}$. This equation in particular gives~\eqref{eq:def-tensor-0}.
\item \emph{Monotonicity:} We have
\begin{align}\rho(A, B) \leq \rho(X, Y),\label{eqn:dataprocessingrho}\end{align}
when $A- X- Y- B$ forms a Markov chain. Thus, maximal correlation can be thought of as a \emph{measure of correlation}
\end{enumerate}

\paragraph{Maximal correlation ribbon:}
Another measure of correlation that satisfies the tensorization property is the \emph{maximal correlation ribbon} (MC~ribbon) defined in~\cite{OurPaper}.  MC~ribbon $\fS(X, Y)$ is the set of $(\lambda_1, \lambda_2)\in[0,1]^2$ such that
\begin{align}\label{eq:mc-ribbon-def-i}
\Var[f]\geq \lambda_1\Var_{X}\big[\mathbb{E}[f|X]\big]+\lambda_2\Var_{Y}\big[\mathbb{E}[f|Y]\big],
\end{align}
for all functions $f=f(X, Y)$ of both $X$ and $Y$. It is shown in~\cite{OurPaper} that the MC ribbon satisfies the following properties:
\begin{enumerate}[label=(\roman*)]
\item \emph{Tensorization:} $\fS(X_1X_2, Y_1Y_2) =\fS(X_1, Y_1)\cap \fS(X_2, Y_2) $
when $X_1Y_1$ and $X_2Y_2$ are independent.
\item \emph{Monotonicity:} $\fS(X, Y)\subseteq \fS(A, B)$ 
when $A- X- Y- B$ forms a Markov chain. 
\end{enumerate}
Thus the MC~ribbon satisfies properties similar to those of maximal correlation. Indeed it is shown in~\cite{OurPaper} that the maximal correlation can be characterized in terms of the MC~ribbon: 
\begin{align}\label{eq:rho2-mc-ribbon}
\rho^2(X, Y)=\inf \frac{1-\lambda_1}{\lambda_2},
\end{align} 
over all $(\lambda_1, \lambda_2)\in \fS(X, Y)$ with $\lambda_2\neq 0$. Thus the MC~ribbon is a parent invariant of bipartite correlations which also characterizes $\rho(X, Y)$. Moreover, as will be done later in this paper, the definition~\eqref{eq:mc-ribbon-def-i} can easily be generalized to the multivariate case (more than two random variables).

\paragraph{$\Phi$-entropy:} 
Variance of a function is equal to its \emph{$\Phi$-entropy} when we take $\Phi(t)=t^2$. To explain this, note that for a function $\Phi$, the $\Phi$-entropy of $f=f(X)$ is defined by
$$H_\Phi(f):= \E[\Phi(f)] - \Phi(\E f).$$
Then for $\Phi(t)=t^2$ we have $H_\Phi(f)=\Var[f]$. Moreover, the MC~ribbon is equal to the set of $(\lambda_1, \lambda_1)\in [0,1]^2$ such that for all functions $f=f(X, Y)$ we have
$$H_\Phi(f) \geq \lambda_1 H_\Phi\big(\E[f|X]\big) + \lambda_2 H_\Phi\big(\E[f|Y]\big).$$ 
This expression for MC~ribbon suggests generalizing it for arbitrary choices of $\Phi$, or at least for convex ones. This idea would seem more reasonable once we note that another important measure of correlation that satisfies the tensorization properties, namely the \emph{hypercontractivity ribbon} (HC~ribbon), can also be expressed in the above form for the choice of $\Phi(t) = 1-h((1+t)/2)$ where $h(\cdot)$ is the binary entropy function: $h(p) = -p\log p - (1-p)\log(1-p)$ (this fact is explained in details later in Example \ref{ex:1}). As a result, the two most well-known measures of correlation that satisfy the tensorization property can be expressed in terms of $\Phi$-entropy as above.

\paragraph{Our contributions:} 
Following the above ideas, for any convex function $\Phi$ we define a $\Phi$-ribbon associated to any $k$ (correlated) random variables $(X_1, \dots, X_k)$. We prove that $\Phi$-ribbon satisfies the tensorization property as well as the monotonicity property similar to the MC~ribbon assuming that $\Phi$ satisfies an important technical condition.  Then the MC~ribbon and the HC~ribbon belong to a family of measures of correlations all of which satisfy tensorization.  

The technical condition that we require $\Phi$ to satisfy is exactly the same condition under which $\Phi$-entropy becomes \emph{subadditive}. Subadditivity of entropy for \emph{independent} random variable is a tool that is used to prove certain concentration of measure inequalities. Our $\Phi$-ribbon, defined for arbitrary \emph{correlated} random variables, can be understood as a generalization of the subadditivity inequality of $\Phi$-entropy.

Studying $\Phi$-ribbon further, we show that a quantity introduced in~\cite{Raginsky}, called the \emph{strong data processing inequality constant}, can be characterized in terms of $\Phi$-ribbon in the same way that the MC~ribbon characterizes $\rho$. Moreover, we show that the MC~ribbon, as a set, includes all other $\Phi$-ribbons and in this sense is a special one. Moreover, we prove equivalent characterizations for the MC~ribbon which help us to compute it more easily. In particular, we compute the MC~ribbon of a multivariate Gaussian distribution in terms of its covariance matrix. We also
fully characterize the MC~ribbon in the bipartite case in terms of maximal correlation.


\section{Preliminaries}

Let us first fix some notations. Sets are denoted by calligraphic letters as $\mathcal X$. Random variables are denoted by capital letters as $X$ and their values by lowercase letters as $x\in \mathcal X$. Such a random variable is determined by its distribution $p_X$, i.e., with values $p(X=x)=p(x)$ for $x\in \mathcal X$. Except otherwise stated, we restrict to random variables taking values in finite sets.

We let $[k]=\{1,2,\dots, k\}$. The tuple $(\lambda_1, \lambda_2, \dots, \lambda_k)$ is sometimes denoted by $\lambda_{[k]}$. Similarly, when we have $k$ random variables $X_1, \dots, X_k$, we use $X_{[k]}$ to denote the tuple $(X_1, X_2, \dots, X_k)$ for $k\geq 1$. When $k=0$, we use $X_{[k]}$ to denote the empty sequence. 
We also use $\widehat i$ to denote $\{1, \dots, i-1, i+1, \dots, k\}$, so $X_{\widehat i} = (X_1, \dots, X_{i-1}, X_{i+1}, \dots, X_k)$.

Let $X$ be a random variable taking values in the finite set $\mathcal X$. Then a function $f:\mathcal X\rightarrow \mathbb R$ can itself be thought of as a random variable. To emphasis that $f$ is a function of $X$ we sometimes denoted it by $f_X$ or $f(X)$. The expectation and variance of $f$ are denoted by $\E[f]$ and $\Var[f]$ respectively. We sometimes denoted them by $\E_X[f]$ and $\Var_X[f]$ to emphasis that they are computed with respect to the random choice of $X$. 

Let $f=f_{XY} = f(X, Y)$ be a function of two random variables $(X, Y)$ with the joint distribution $p_{XY}$. Then $\E[f|X]$ is a function of $X$ which is equal to the conditional expectation of $f$, over the random choice of $Y$, given a fixed value for $X$:
$$\E[f|X](x) = \E[f|X=x] = \sum_y p(y|x) f(x, y).$$
We can then speak of $\Var [\E[f|X]]=\Var_X\big[\E[f|X]\big]$.

A function $\Phi$ is said to be smooth if it has derivatives of all orders everywhere in its domain.

We denote the binary entropy function by $h(\cdot)$, i.e., $h(p)=-p\log p - (1-p)\log(1-p)$ for $p\in [0,1]$.

\subsection{Hypercontractivity ribbon}

We have already defined an important measure of correlation with the tensorization property in~\eqref{eq:max-correlatoin-31}. Another important such measure is the \emph{hypercontractivity ribbon} first defined by Ahlswede and G\'acs~\cite{AhlswedeGacs}. 

\begin{definition}[\cite{AhlswedeGacs}] 
The hypercontractivity ribbon (HC~ribbon), $\fR(X, Y)$, associated to a pair of random variables $(X, Y)$ is the set of all $(\lambda_1, \lambda_2)\in [0,1]^2$ such that for every pair of functions $f_{X}$ and $g_{Y}$ we have
\begin{align}\label{eq:fg-norm-2}
\E[f_{X}g_{Y}]\leq \big\|f_{X}\big\|_{\frac{1}{\lambda_1}}\big\|g_{Y}\big\|_{\frac{1}{\lambda_2}},
\end{align}
where the norms $\|\cdot\|_r$ are defined by $\|f\|_{r}=\E\big[|f|^{r}\big]^{1/r}$.
\end{definition}

We should mention here that the HC~ribbon defined in~\cite{AhlswedeGacs} is indeed the set of $(r, s)=(1/\lambda_1, 1/\lambda_2)$ for which $(\lambda_1, \lambda_2)\in \fR(X, Y)$ as we defined above. Nevertheless, we prefer this definition for later use. 

HC~ribbon satisfies several interesting properties for which we refer to~\cite{AhlswedeGacs}. Here we only mention the surprising result of Nair~\cite{Nair} that HC~ribbon can be characterized in terms of mutual information (a related characterization was also found in~\cite{NairPre}).

\begin{theorem}[\cite{Nair}] \label{thm:Nair}
$\fR(X, Y)$ is equal to the set of all pairs $(\lambda_1, \lambda_2)\in [0,1]^2$ such that for all $p(u| x, y)$ we have
\begin{align} \label{eqn:Ubin-pre}
I(XY;U)\geq \lambda_1 I(X;U)+\lambda_2 I(Y;U).
\end{align}
Furthermore, without loss of generality one may restrict to auxiliary random variables $U$ that are binary.
\end{theorem}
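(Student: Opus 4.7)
The plan is to route the equivalence through an intermediate divergence inequality. First, I would show that $(\lambda_1,\lambda_2)\in\fR(X,Y)$ is equivalent to
\begin{equation*}
D(q_{XY}\,\|\,p_{XY})\;\geq\;\lambda_1\,D(q_X\,\|\,p_X)+\lambda_2\,D(q_Y\,\|\,p_Y) \qquad (\star)
\end{equation*}
holding for every joint distribution $q_{XY}$ absolutely continuous with respect to $p_{XY}$, and then show that $(\star)$ is equivalent to~\eqref{eqn:Ubin-pre} holding for all $p(u|x,y)$; the binary-$U$ sufficiency will drop out of the argument.

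For the first equivalence, the direction from $\fR(X,Y)$ to $(\star)$ is obtained by plugging Radon--Nikodym derivatives into~\eqref{eq:fg-norm-2}: substituting $f(X)=(q_X(X)/p_X(X))^{\alpha_1}$ and $g(Y)=(q_Y(Y)/p_Y(Y))^{\alpha_2}$ for suitably chosen exponents, taking logarithms, and rearranging yields $(\star)$ after identifying $\alpha_i$ with $\lambda_i$. The converse, from $(\star)$ to membership in $\fR(X,Y)$, uses the Donsker--Varadhan variational formula $D(q\,\|\,p)=\sup_h\{\E_q[h]-\log\E_p[e^h]\}$ applied with the additive ansatz $h(X,Y)=\log f(X)+\log g(Y)$: the right-hand side of $(\star)$ then bounds a quantity whose exponentiation is precisely~\eqref{eq:fg-norm-2}. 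Positivity and integrability of $f,g$ are handled by standard truncation and approximation.

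For the second equivalence, the direction from $(\star)$ to~\eqref{eqn:Ubin-pre} is a one-line averaging: for any $U$, apply $(\star)$ with $q_{XY}=p_{XY|U=u}$ and take the expectation over $U$, using $\E_U[D(p_{XY|U}\,\|\,p_{XY})]=I(XY;U)$ together with the analogous identities for the marginals. For the converse, and simultaneously the binary-$U$ reduction, I would proceed by tilting: given $q_{XY}\ll p_{XY}$, choose $\epsilon>0$ small enough that $\epsilon q_{XY}(x,y)\leq p_{XY}(x,y)$ pointwise, and define a binary $U$ by $\pr(U=1)=\epsilon$, $p_{XY|U=1}=q_{XY}$, and $p_{XY|U=0}=(p_{XY}-\epsilon q_{XY})/(1-\epsilon)$. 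A Taylor expansion as $\epsilon\to 0$ gives $I(XY;U)=\epsilon\,D(q_{XY}\,\|\,p_{XY})+O(\epsilon^2)$, and analogously for $I(X;U)$ and $I(Y;U)$; dividing~\eqref{eqn:Ubin-pre} (for this binary $U$) by $\epsilon$ and letting $\epsilon\to 0$ produces $(\star)$. Since only binary $U$ was used in this reduction, the ``binary $U$ suffices'' clause is automatic.

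The main obstacle is the first equivalence: the dual/tilting calculations are in principle formal, but matching the exponents produced by the variational argument to the norm exponents $1/\lambda_1$ and $1/\lambda_2$ in~\eqref{eq:fg-norm-2} is algebraically delicate, and handling $f,g$ that are not strictly positive requires care. The binary-$U$ perturbation is, by contrast, routine once one verifies that the linear term in the Taylor expansion of $D(\,\cdot\,\|\,p_{XY})$ around $p_{XY}$ vanishes, so that the leading $\epsilon$-coefficient in $I(XY;U)$ is exactly $D(q_{XY}\,\|\,p_{XY})$.
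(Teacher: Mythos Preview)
The paper does not supply its own proof of Theorem~\ref{thm:Nair}; it is quoted from~\cite{Nair}. The closest the paper comes is the proof of Theorem~\ref{thm:newamiphi} in Appendix~\ref{app:thm:newamiphi}, which for $\Phi(t)=t\log t$ is precisely your second equivalence $(\star)\Longleftrightarrow\eqref{eqn:Ubin-pre}$, and the binary-$U$ tilting construction there is the same as the one you describe. So for that half your plan coincides with what the paper actually does (in its generalized form).

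There is, however, a real gap in your first equivalence, specifically in the direction $\fR(X,Y)\Rightarrow(\star)$. Substituting $f=(q_X/p_X)^{\lambda_1}$ and $g=(q_Y/p_Y)^{\lambda_2}$ into~\eqref{eq:fg-norm-2} yields only
\[
\E_{p}\!\Big[\big(\tfrac{q_X}{p_X}\big)^{\lambda_1}\big(\tfrac{q_Y}{p_Y}\big)^{\lambda_2}\Big]\;\leq\;1,
\]
since both norms equal $1$. This is a statement about an expectation under $p$, not a lower bound on $D(q_{XY}\|p_{XY})$; ``taking logarithms and rearranging'' cannot produce $(\star)$ from it. What is missing is a second use of the Donsker--Varadhan/Gibbs inequality in this direction: with $h=\log(fg)$ one has $D(q_{XY}\|p_{XY})\geq \E_{q}[\,h\,]-\log\E_{p}[e^{h}]\geq \E_{q}[\,h\,]$, and $\E_{q}[\,h\,]=\lambda_1 D(q_X\|p_X)+\lambda_2 D(q_Y\|p_Y)$. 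You invoke Donsker--Varadhan only for the reverse implication; it is needed here too, and without it the forward half of your first equivalence does not go through.

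A smaller point: your closing remark about the ``linear term in the Taylor expansion of $D(\cdot\|p_{XY})$ vanishing'' mislocates the issue. The leading $\epsilon$-contribution $\epsilon\,D(q_{XY}\|p_{XY})$ to $I(XY;U)$ is already exact; what needs checking is that the other summand $(1-\epsilon)D\!\big((p-\epsilon q)/(1-\epsilon)\,\big\|\,p\big)$ is $O(\epsilon^2)$, which follows from local quadraticity of $D(\cdot\|p)$ at $p$. The paper's Appendix~\ref{app:thm:newamiphi} handles this by computing the derivative at $\epsilon=0$ directly.
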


An important quantity related to HC~ribbon is the \emph{strong data processing inequality} constant. We refer to~\cite{AhlswedeGacs} for its original definition. Here, based on the result of~\cite{AGKN}, we may define this constant $s^*(X, Y)$, as the smallest $\lambda\geq 0$ such that for any $p(u|x)$ we have
$$\lambda I(U; X)\geq I(U; Y).$$
Note that for a Markov chain $U-X-Y$, by the data processing inequality we have $I(U;X)\geq I(U;Y)$ (and then $s^*(X, Y)\leq 1$). That is the reason that $s^*(X, Y)$ is called the \emph{strong} data processing inequality constant. 

$s^*(X, Y)$ can be characterized in terms of HC~ribbon as follows:
$$s^*(X, Y) = \inf \frac{1-\lambda_1}{\lambda_2},$$
where the infimum is taken over all $(\lambda_1, \lambda_2)\in \fR(X, Y)$. Observe that this characterization of $s^*(X, Y)$ is similar to that of $\rho(X, Y)$ given in~\eqref{eq:rho2-mc-ribbon}.


It is straightforward to generalize the definition of HC~ribbon as well as Theorem~\ref{thm:Nair} to the multivariate case. The HC~ribbon, $\fR(X_1, \dots, X_k)$, associated to $k$ random variables $X_{[k]}=(X_1,\dots, X_k)$, is the set of tuples $(\lambda_1, \dots, \lambda_k)\in [0,1]^k$ such that for all functions $f_i=f_i(X_i)$, $i=1, \dots, k$, we have
$$\E[f_1\cdots f_k]\leq \big\|f_1\big\|_{\frac{1}{\lambda_1}}\cdots \big\|f_k\big\|_{\frac{1}{\lambda_k}}.$$
Then it is easily verified that Theorem~\ref{thm:Nair}, with the same proof, holds in the multivariate case. That is, $\fR(X_1, \dots, X_k)$  is equal to the set of tuples $(\lambda_1, \dots, \lambda_k)\in [0,1]^k$ such that for any auxiliary (binary) random variable $U$ we have 
\begin{align} \label{eqn:Ubin}
I(X_{[k]};U)\geq \lambda_1 I(X_1;U) +\cdots + \lambda_k I(X_k;U).
\end{align}



\subsection{$\Phi$-entropy}

To present our main results we need to define and review the properties of $\Phi$-entropy. 
The reader may refer to \cite[Chapter 14]{Boucheronetal} for a more detailed treatment of the subject (see also \cite{Chafai2, Chafai1}).

Let $f=f_X$ be a function of a random variable $X$. Also let $\Phi$ be a function that is defined on a convex set that contains the range of $f$. Then the $\Phi$-entropy of $f$ is defined by
$$H_\Phi(f) = \E[\Phi(f)] - \Phi(\E f).$$
In this paper we always assume that $\Phi$ is convex, in which case 
$$H_\Phi(f)\geq 0,$$ 
by Jensen's inequality. For the choice of $\Phi(t)=t^2$, the $\Phi$-entropy simply reduces to variance: $H_\Phi(f)=\Var(f)$.

\begin{example} \label{ex:1}
Let $p_{UA}$ be some arbitrary distribution with $U$ taking values in $\{+1, -1\}$.  Define $f_A = \E[U|A]$. Then $\E[f] = \E[U]$ and we have
\begin{align*}
I(U; A) & = H(U) - H(U|A) \\
& = h\Big(\frac{1+ \E f}{2}\Big)  - \E\Big[  h\Big(  \frac{1+f}{2}   \Big) \Big]\\
&= H_\Phi(f),
\end{align*}
for $\Phi(x) = 1- h(\frac{1+x}{2})$, where $h(\cdot)$ denotes the binary entropy function.
\end{example}

Similar to the conditional Shannon entropy, we define the conditional $\Phi$-entropy. Let $f_{XY}$ be a function of two random variables $(X,Y)$. Then we define 
\begin{align}H_\Phi(f|Y)& =\E[\Phi(f)]-\mathbb{E}_Y[\Phi(\E[f|Y])]\\
 &= \sum_{y}p(y)\big(\E[\Phi(f)|Y=y] - \Phi(\E [f|Y=y])\big).
 \end{align}
Furthermore, we set $H_\Phi(f|Y=y) = \E[\Phi(f)|Y=y] - \Phi(\E [f|Y=y])$, so that we have
$$H_\Phi(f|Y) = \sum_{y}p(y)H_\Phi(f|Y=y).$$
With these notations, we can now express $\Phi$-entropy's version of the \emph{law of total variance}: 
\begin{align}
H_\Phi(f)&=\E[\Phi(f)] -\Phi(\E f)\nonumber
\\&=\E[\Phi(f)]-\mathbb{E}_{Y}\Phi(\E[f|Y])+\mathbb{E}_{Y}[\Phi(\E[f|Y])] -\Phi(\E f)\nonumber
\\&=H_{\Phi}(f|Y)+H_\Phi(\E[f|Y]).
\label{law-of-tot}
\end{align}
We call the above equation \emph{the chain rule} for $\Phi$-entropy as it parallels the chain rule for Shannon entropy. 

Along the same lines, one can prove the following conditional form of the chain rule for $\Phi$-entropy. Suppose that $f_{XYZ}$ is a function of three random variables $(X, Y, Z)$. Then we have
\begin{align}\label{eq:chain-rule-Phi}
H_\Phi(f|X) = H_\Phi(f|XY) + H_\Phi\big(  \E[f|XY]   |X \big),
\end{align}
which is a generalization of~\eqref{law-of-tot}.
This equation and the non-negativity of $\Phi$-entropy imply that 
\begin{align}
H_\Phi(f|X)\geq H_{\Phi}(f|XY).\label{conditioning-Hphi}
\end{align}
In other words, just like Shannon's entropy, conditioning reduces $\Phi$-entropy.   Observe that from the chain rule, \eqref{conditioning-Hphi} can be also written as
\begin{align}
H_{\Phi}(\mathbb{E}[f|XY])\geq H_\Phi(\mathbb{E}[f|X]).\label{conditioning-Hphi-new}
\end{align}

Despite the above similarities between Shannon's entropy and the $\Phi$-entropy, one can relate  $\Phi$-entropy to the generalized \emph{relative entropy} of Ali-Silvey \cite{Ali-Silvey} and Csiszar \cite{Csiszar1}\cite{Csiszar2} (also called the ``$f$-divergence"): take a non-negative function $f(x)$ that is normalized, \emph{i.e.,} $\mathbb{E}_X[f]=1$. Then $f(x)$ is of the form $f(x)=q(x)/p(x)$ where $p(x)$ is the underlying distribution on $X$ and $q(x)$ is some arbitrary distribution. Now,
\begin{align}H_{\Phi}(X)&=\sum_{x}p(x)\Phi(f(x)) -\Phi(\sum_x p(x)f(x))
\\&=\sum_{x}p(x)\Phi(\frac{q(x)}{p(x)}) -\Phi(1) \label{eqn:Hphi-relative-ent}
\end{align}
is explicitly in terms of the $\Phi$-divergence. Ignoring the $\Phi(1)$ term, the $\Phi$-entropy in \eqref{eqn:Hphi-relative-ent} reduces to the KL divergence $D(q\|p)$ for  $\Phi(x)=x\log(x)$. Therefore, $\Phi$-entropy is really a relative entropy (when $f$ is a non-negative and normalized) rather than an entropy. In fact, the analogy between Shannon's entropy and the $\Phi$-entropy has limitations: while Shannon entropy is \emph{concave} in its underlying distribution, the following \emph{convexity} property holds for the $\Phi$-entropy.



\begin{lemma}\label{lem:convexity-H-Phi-channel}
Let $\Phi$ be a convex function and fix the distribution $p_X$ and function $f_X$. Then the function
$$p_{Y|X}\mapsto H_\Phi(\E[f|Y]),$$
is convex.
\end{lemma}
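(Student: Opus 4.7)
The plan is to reduce the statement to the joint convexity of the perspective function of $\Phi$. First I would expand $H_\Phi(\E[f|Y])$ explicitly as
\begin{align*}
H_\Phi(\E[f|Y]) = \sum_y p_Y(y)\,\Phi\!\left(\E[f|Y=y]\right) - \Phi\!\left(\E[f]\right),
\end{align*}
and observe that the subtracted term $\Phi(\E[f]) = \Phi\!\left(\sum_x p_X(x) f(x)\right)$ depends only on the fixed data $p_X$ and $f_X$, hence is constant in the channel $p_{Y|X}$. So convexity in $p_{Y|X}$ reduces to convexity of $\sum_y p_Y(y)\Phi(\E[f|Y=y])$.

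Next I would rewrite each term using $p_Y(y) = \sum_x p_X(x) p_{Y|X}(y|x)$ and $\E[f|Y=y] = g(y)/p_Y(y)$, where
\begin{align*}
g(y) := \sum_x p_X(x)\, p_{Y|X}(y|x)\, f(x).
\end{align*}
The point is that both $y\mapsto g(y)$ and $y\mapsto p_Y(y)$ are \emph{linear} (in fact affine) functions of the channel $p_{Y|X}$. Thus each summand takes the form
\begin{align*}
p_Y(y)\,\Phi\!\left(\frac{g(y)}{p_Y(y)}\right) = F_\Phi\!\left(g(y),\, p_Y(y)\right),
\end{align*}
where $F_\Phi(t,s) := s\,\Phi(t/s)$ is the perspective of $\Phi$ on the half-plane $s>0$.

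The key step is to invoke the standard fact that when $\Phi$ is convex, its perspective $F_\Phi$ is jointly convex in $(t,s)$ on $\{s>0\}$. Composing a jointly convex function with the affine map $p_{Y|X}\mapsto(g(y),p_Y(y))$ yields a convex function of $p_{Y|X}$ for each $y$, and a sum of convex functions is convex. (For those $y$ with $p_Y(y)=0$, the term vanishes identically for every channel that produces this zero, so no issue arises; alternatively, one can restrict to the support of the achievable $Y$.) This completes the argument.

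The only delicate point, and the one I would be careful about, is justifying the joint convexity of the perspective. If a self-contained argument is desired rather than citing it, one can verify it directly: for any $\alpha\in[0,1]$ and $(t_i,s_i)$ with $s_i>0$, set $s=\alpha s_1+(1-\alpha)s_2$ and write $t/s$ as the convex combination $\tfrac{\alpha s_1}{s}\cdot(t_1/s_1)+\tfrac{(1-\alpha)s_2}{s}\cdot(t_2/s_2)$, then apply convexity of $\Phi$ and multiply through by $s$. Beyond this, the proof is essentially bookkeeping.
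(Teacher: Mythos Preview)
Your proposal is correct and follows essentially the same approach as the paper: both reduce to the joint convexity of the perspective $(s,t)\mapsto s\,\Phi(t/s)$ after noting that $\Phi(\E f)$ is constant in $p_{Y|X}$ and that $p_Y(y)$ and $\sum_x p_X(x)p_{Y|X}(y|x)f(x)$ are affine in the channel. Your write-up is slightly more detailed (explicitly naming the perspective and supplying its convexity proof), but the underlying argument is identical.
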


\begin{proof}
We note that $H_\Phi(\E[f|Y]) = \E_Y[\Phi(\E[f|Y])] - \Phi(\E f)$. So it suffices to prove the convexity of
$$p_{Y|X}\mapsto \E_Y[\Phi(\E[f|Y])] = \sum_y p(y) \Phi\Big( \sum_{x}   \frac{p(x) p(y|x) f(x)}{p(y)}   \Big),$$
which is immediate once we note that for any convex $\Phi$, the function 
$$(s, t)\mapsto s\Phi\big(\frac{t}{s}\big),$$
 is jointly convex for $s>0$.
\end{proof}

The following simple lemma will be used frequently.

\begin{lemma}\label{eq:phi-entropy-taylor}
Let $\Phi$ be a smooth convex function. Let $c\in\mathbb{R}$ be point in the interior of the domain of $\Phi$, and $f$ be an arbitrary function with $\E f=0$. Then for $g=c+\epsilon f$, where $|\epsilon|$ is small, we have
$$H_\Phi(g) = \frac{1}{2}\Phi''(c)\Var[f]\epsilon^2 + O(|\epsilon|^3).$$
\end{lemma}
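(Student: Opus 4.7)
The plan is to do a direct Taylor expansion of $\Phi$ about the center $c$ and take expectations, using $\E f = 0$ to kill the first-order term.

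First I would observe that since $\E f = 0$ we have $\E g = c$, so
\[
H_\Phi(g) = \E[\Phi(c+\epsilon f)] - \Phi(\E g) = \E[\Phi(c+\epsilon f)] - \Phi(c).
\]
Because $c$ lies in the interior of the domain of $\Phi$ and $\Phi$ is smooth, for all sufficiently small $|\epsilon|$ the random variable $g = c+\epsilon f$ takes values in a compact subset of the domain (recall that $f$ takes only finitely many values under the blanket assumption of the paper), so Taylor's theorem with remainder gives, pointwise in the value of $f$,
\[
\Phi(c+\epsilon f) = \Phi(c) + \Phi'(c)\epsilon f + \tfrac{1}{2}\Phi''(c)\epsilon^{2} f^{2} + R(\epsilon,f),
\]
where $|R(\epsilon,f)| \leq \tfrac{1}{6} M |\epsilon|^{3} |f|^{3}$ and $M$ is a bound on $|\Phi'''|$ over a small neighborhood of $c$ that contains all possible values of $g$ for $|\epsilon|$ small.

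Next I would take expectations of both sides. The constant term $\Phi(c)$ cancels with the $-\Phi(c)$ in the expression for $H_\Phi(g)$. The first-order term contributes $\Phi'(c)\epsilon \E[f] = 0$ by hypothesis. The second-order term gives $\tfrac{1}{2}\Phi''(c)\epsilon^{2}\E[f^{2}] = \tfrac{1}{2}\Phi''(c)\epsilon^{2}\Var[f]$, where we used $\E f = 0$ to identify $\E[f^{2}]$ with $\Var[f]$. Finally, the remainder is bounded by $\tfrac{1}{6}M\,\E[|f|^{3}]\,|\epsilon|^{3}$, and $\E[|f|^{3}]$ is a finite constant depending only on $f$, so this term is genuinely $O(|\epsilon|^{3})$ as $\epsilon \to 0$.

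Putting these pieces together yields $H_\Phi(g) = \tfrac{1}{2}\Phi''(c)\Var[f]\epsilon^{2} + O(|\epsilon|^{3})$, as claimed. There is no real obstacle here; the only small point to be careful about is justifying that the cubic remainder is uniformly controlled, which is immediate from smoothness of $\Phi$ on the interior of its domain together with the fact that $f$ is bounded (finite support).
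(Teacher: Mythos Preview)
Your proof is correct and follows essentially the same approach as the paper: Taylor expand $\Phi$ about $c$, take expectations, use $\E f=0$ to eliminate the linear term, and identify $\E[f^2]$ with $\Var[f]$. You simply spell out more carefully the justification of the $O(|\epsilon|^3)$ remainder via boundedness of $f$ and of $\Phi'''$ on a neighborhood of $c$, which the paper leaves implicit.
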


\begin{proof}
Taking the Taylor expansion of $\Phi$ around $c$ we have 
$$\Phi(c+\epsilon f) -\Phi(c) = \Phi'(c)(\epsilon f)+\frac{1}{2}\Phi''(c)(\epsilon f)^2 + O(|\epsilon|^3).$$
Now taking the expectation of both sides and noting that $\E g=c$, we obtain the desired result. 
\end{proof}

So far the only condition we put on $\Phi$ is convexity. We must however consider a more restricted class of functions. 

\begin{definition}\label{def:class-C}
We define $\mathscr F$  to be the class of smooth convex functions $\Phi$,  whose domain is a convex subset of $\mathbb{R}$, that are \emph{not} affine (not of the form $ at+b$ for some constants $a$ and $b$) and satisfy one of the following \emph{equivalent} conditions (see \cite[Exercise 14.2]{Boucheronetal}):
\begin{enumerate}[label={\rm (\roman*)}]
\item  $(s, t)\mapsto p \Phi(s) + (1-p) \Phi(t) - \Phi(p s+ (1-p) t) $, for any $p\in[0,1]$, is jointly convex.
\item  $(s, t)\mapsto \Phi(s) - \Phi(t) - \Phi'(t)(s-t)$ is jointly convex.
\item $(s, t)\mapsto (\Phi'(s)- \Phi'(t))(s-t)$ is jointly convex.
\item $(s, t)\mapsto \Phi''(s)t^2$ is jointly convex.
\item $1/\Phi''$ is concave.
\item  $\Phi'''' \Phi''\geq 2 \Phi'''^2$.
\end{enumerate}
\end{definition}

Let us clarify a few points in this definition. First, we exclude affine functions $\Phi(t)=at+b$ simply because $H_\Phi(f)$ always vanishes if $\Phi$ is affine. Second, from the above list of equivalent conditions, we mostly use (i) which has nothing to do with the smoothness of $\Phi$. We indeed assumed smoothness only because in this case we have the equivalent conditions (v) and (vi) which can easily be verified.  Third, using (v)
$\Phi''(x)$ is strictly positive for any $\Phi\in \mathscr F$. That is, functions in $\mathscr F$ are strictly convex.  

Examples of functions in $\mathscr F$ include $\Phi(t)=t\log t$ and $\Phi(t)=t^\alpha$ for $\alpha\in(1, 2]$ as well as their affine transformations such as $\Phi(t)=1-h(\frac{1+t}{2})$ and $\Phi(t)=(1-t)^\alpha+(1+t)^\alpha$ for $\alpha\in(1, 2]$.

The following lemma is a key tool in our proofs in the next section. 

\begin{lemma}\label{lem:key-lemma-phi-entropy} 
\begin{enumerate}[label={\rm{(\alph*)}}]
\item Assume $X$ and $Y$ are \emph{independent} random variables, and $f_{XY}$ is arbitrary. Then, for any $\Phi\in\mathscr F$, we have
$$\E[\Phi(f)]-\mathbb{E}_{X}[\Phi(\E_Y[f|X])]
\geq \E_Y[\Phi(\E_X[f|Y])] -\Phi(\E f),$$
or equivalently
$H_{\Phi}(f|X)\geq H_\Phi(\E[f|Y])$. 
\item More generally, if $f_{XYZ}$ is a function of three random variables satisfying the Markov chain condition $X-Z-Y$, we have
$$H_\Phi(f|XZ)\geq H_{\Phi}\big(\E[f|YZ]\big|Z\big).$$
\item Under the same condition as in part (b) we have
$$H_{\Phi}(\E [f|Z]) + H_{\Phi}(f|XZ) \geq H_\Phi(\E [f|YZ]).$$

\end{enumerate}

\end{lemma}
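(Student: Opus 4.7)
The plan is to prove part (a) first by combining the independence of $X,Y$ with a convexity property of $H_\Phi$ that is exactly what the class $\mathscr F$ is designed to supply. Parts (b) and (c) then follow quickly: (b) by applying (a) conditionally on $Z=z$, where the Markov assumption gives the required conditional independence, and (c) by combining (b) with the chain rule for $H_\Phi$.

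The first step is to upgrade condition (i) of Definition \ref{def:class-C} from two terms to arbitrarily many. A short induction on $n$, peeling off one coordinate at a time, shows that for every finite probability vector $(\mu_1,\dots,\mu_n)$ the map
\begin{align*}
(x_1,\dots,x_n) \longmapsto \sum_i \mu_i \Phi(x_i) - \Phi\Bigl(\sum_i \mu_i x_i\Bigr)
\end{align*}
is jointly convex. Specializing to $\mu_y=p_Y(y)$ and $x_y=g(y)$ says precisely that the functional $g\mapsto H_\Phi(g)$ on functions $g:\mathcal Y\to\mathbb R$ is convex when the distribution $p_Y$ is held fixed. This is the only place where membership in $\mathscr F$, as opposed to mere convexity of $\Phi$, is needed.

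For part (a), the independence of $X$ and $Y$ rearranges the definition to
\begin{align*}
H_\Phi(f|X) \;=\; \E_X\bigl[\E_Y[\Phi(f(X,Y))] - \Phi(\E_Y[f(X,Y)])\bigr] \;=\; \E_X\bigl[H_\Phi(f(X,\cdot))\bigr],
\end{align*}
where the inner $H_\Phi$ is the $\Phi$-entropy of the random (in $X$) function $y\mapsto f(X,y)$ computed with respect to $p_Y$. Its expected value is $\E_X[f(X,y)]=\E[f|Y=y]$, again by independence. Convexity of $H_\Phi$ together with Jensen's inequality then gives $\E_X[H_\Phi(f(X,\cdot))] \geq H_\Phi(\E_X[f(X,\cdot)]) = H_\Phi(\E[f|Y])$, which is part (a).

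For part (b), note that $X-Z-Y$ implies that conditional on $Z=z$ the variables $X$ and $Y$ are independent with joint law $p(x|z)p(y|z)$; applying (a) in this conditional setup, to the function $(x,y)\mapsto f(x,y,z)$, and then averaging the resulting inequality over $Z$ produces $H_\Phi(f|XZ)\geq H_\Phi(\E[f|YZ]\mid Z)$. For part (c), apply the chain rule \eqref{law-of-tot} to the function $\E[f|YZ]$ conditioning on $Z$ and use the tower property $\E[\E[f|YZ]\mid Z]=\E[f|Z]$ to get the identity $H_\Phi(\E[f|YZ]) = H_\Phi(\E[f|YZ]\mid Z) + H_\Phi(\E[f|Z])$; substituting the bound from (b) into the first term on the right yields (c). The only nontrivial ingredient throughout is the convexity of the functional $H_\Phi$, and this is precisely the obstacle that forces one to restrict to $\mathscr F$ rather than all convex $\Phi$; everything after that is routine bookkeeping with conditioning.
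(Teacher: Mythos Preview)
Your proof is correct and follows essentially the same approach as the paper. Both arguments extend property (i) of Definition~\ref{def:class-C} by induction to obtain joint convexity of the functional $g\mapsto H_\Phi(g)$, and then use this convexity (you via Jensen's inequality for the functional, the paper by writing the resulting inequality out explicitly) to deduce (a); parts (b) and (c) are handled identically in both, by conditioning on $Z=z$ and by the chain rule respectively.
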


\begin{proof}
(a) Based on property (i) of Definition~\ref{def:class-C},  an induction argument shows that 
for every distribution $p_X$, the mapping
$$f_X\mapsto \sum_x p(x) \Phi(f(x)) - \Phi\Big(\sum_x p(x) f(x)\Big),$$
is jointly convex. This means that for every  distribution $q_Y$ and $f_{XY}$ we have 
\begin{align*}
\sum_y q(y)  \Bigg(\sum_x p(x) \Phi(f(x, y)) - \Phi\big(&\sum_x p(x) f(x, y)\big)\Bigg)\\
& \geq 
 \sum_x p(x) \Phi\Big( \sum_y q(y) f(x, y)\Big) - \Phi\Big(\sum_{x, y} p(x) q(y) f(x, y)\Big).
\end{align*}
This is equivalent to $H_{\Phi}(f|X)\geq H_\Phi(\E[f|Y])$. 

\vspace{.2in} \noindent
(b) This part is just the ``conditional" version of (a). To prove this, write down the inequality of part (a) for the function $g_{XY}^{(z)}(x, y) = f(x, y, z)$, for every fixed $Z=z$, and then take average over $z$.  
Moreover, (c) follows form (b) once we use the chain rule $H_{\Phi}\big(\E[f|YZ]\big|Z\big) = H_{\Phi}(\E[f|YZ]) - H_\Phi(\E[f|Z])$.

\end{proof}

Subadditivity is a desirable property of $\Phi$-entropy \cite[Sec. 4.13]{Boucheronetal}. Let us now prove the subadditivity of $\Phi$-entropy as a corollary of Lemma~\ref{lem:key-lemma-phi-entropy}.

\begin{theorem}[Subadditivity of $\Phi$-entropy] 
Let $(X_1, \dots, X_k)$ be \emph{mutually independent} random variables and $f$ be an arbitrary function of them. Then for any $\Phi\in\mathscr F$, we have
$$H_\Phi(f)\leq \sum_{i=1}^k H_\Phi\big(f| X_{\widehat i}\big),$$
where $X_{\widehat i} = (X_1, \dots, X_{i-1}, X_{i+1}, \dots, X_k)$.
\end{theorem}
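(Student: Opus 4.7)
My plan is to combine a telescoping (martingale-style) decomposition of $H_\Phi(f)$ coming from the chain rule with the key inequality in Lemma~\ref{lem:key-lemma-phi-entropy}(b). The strategy mirrors the classical tensorization proof of Efron--Stein, lifted from the variance case to general $\Phi$-entropies via the convexity condition built into the class $\mathscr F$.

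First, I would set $g_i = \E[f \mid X_1, \dots, X_i]$ for $0\leq i\leq k$, so that $g_0 = \E f$ and $g_k = f$. Applying the chain rule~\eqref{eq:chain-rule-Phi} to $g_i$ (with conditioning variable $X_{i}$ and ``outer'' conditioning $X_1,\dots,X_{i-1}$), and using that $\E[g_i \mid X_1,\dots,X_{i-1}] = g_{i-1}$, I obtain
\begin{align*}
H_\Phi(g_i) - H_\Phi(g_{i-1}) = H_\Phi\!\left(g_i \mid X_1,\dots,X_{i-1}\right).
\end{align*}
Summing over $i=1,\dots,k$ and using $H_\Phi(g_0) = 0$ gives the decomposition
\begin{align*}
H_\Phi(f) \;=\; \sum_{i=1}^{k} H_\Phi\!\left(\E[f\mid X_1,\dots,X_i] \,\big|\, X_1,\dots,X_{i-1}\right).
\end{align*}

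Second, I would bound each term of this sum by $H_\Phi(f\mid X_{\widehat i})$. To do so, I invoke Lemma~\ref{lem:key-lemma-phi-entropy}(b) with the identifications $Z = (X_1,\dots,X_{i-1})$, $Y = X_i$, and $X = (X_{i+1},\dots,X_k)$. The required Markov condition $X - Z - Y$ is trivial because the $X_j$ are mutually independent, so conditionally on anything, $X_i$ is independent of $(X_{i+1},\dots,X_k)$. Part (b) then yields
\begin{align*}
H_\Phi(f \mid X_{\widehat i}) \;=\; H_\Phi(f \mid XZ) \;\geq\; H_\Phi\!\left(\E[f \mid YZ] \,\big|\, Z\right) \;=\; H_\Phi\!\left(\E[f\mid X_1,\dots,X_i]\,\big|\, X_1,\dots,X_{i-1}\right).
\end{align*}
Summing over $i$ and comparing with the decomposition above finishes the proof.

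The only place where the hypothesis $\Phi\in\mathscr F$ enters is through Lemma~\ref{lem:key-lemma-phi-entropy}(b), which in turn uses condition~(i) of Definition~\ref{def:class-C} (joint convexity of the $\Phi$-entropy functional). The main conceptual step — and the one I would have expected to be the obstacle had the key lemma not already been proved — is precisely this inequality relating $H_\Phi(f\mid X)$ to $H_\Phi(\E[f\mid Y])$ for independent $X,Y$; with that tool already in hand, the proof becomes essentially a bookkeeping exercise built around the chain rule and an appropriate choice of Markov triple.
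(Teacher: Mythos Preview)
Your proof is correct and is essentially the same as the paper's: both telescope $H_\Phi(f)$ along the filtration $X_{[0]},X_{[1]},\dots,X_{[k]}$ via the chain rule~\eqref{eq:chain-rule-Phi} and bound each increment using Lemma~\ref{lem:key-lemma-phi-entropy}(b). The only cosmetic difference is the role assignment in the lemma---the paper takes $X=X_{j+1}$ and $Y=(X_{j+2},\dots,X_k)$ (so that the term $H_\Phi(f\mid X_{\widehat{j+1}})$ appears directly from the chain rule and the \emph{other} term is bounded), whereas you take $X=(X_{i+1},\dots,X_k)$ and $Y=X_i$ (so that the exact decomposition comes first and each summand is then bounded by $H_\Phi(f\mid X_{\widehat i})$); these are dual arrangements of the same argument.
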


\begin{proof}
Recall that $X_{[j]} = (X_1, \dots, X_j)$. Using the conditional form of the chain rule~\eqref{eq:chain-rule-Phi} as well as part (b) of Lemma~\ref{lem:key-lemma-phi-entropy}, for every $0\leq j\leq k-1$ we have
\begin{align*}
H_\Phi(f|X_{[j]}) & = H_\Phi\big(f| X_{\widehat{j+1}}\big) + H_\Phi\big( \E[f|X_{\widehat{j+1}}] |X_{[j]}  \big)\\
& \leq H_\Phi\big(f| X_{\widehat{j+1}}\big) + H_\Phi(f| X_{[j+1]}).
\end{align*}
Summing up all these inequalities gives the desired result. 
\end{proof}

Observe that subadditivity of $\Phi$-entropy for the choice of $\Phi(t)=t^2$ is nothing but the Efron-Stein inequality.  Using the chain rule for $\Phi$-entropy \eqref{law-of-tot},
$$H_\Phi(f)=H_\Phi\big(f| X_{\widehat i}\big)+H_\Phi\big(\mathbb{E}[f| X_{\widehat i}]\big)$$ we can equivalently express the sub-additivity of $\Phi$-entropy as
\begin{align}H_\Phi(f)\geq \sum_{i=1}^k \frac{1}{k-1}H_\Phi\big(\mathbb{E}[f| X_{\widehat i}]\big).\label{eqn:subadditlnf}\end{align}
From here, it is a short trip to motivate our notion of $\Phi$-ribbon, formally defined in the next section and studied in the rest of this paper. 
Let us ask for the set of possible non-negative coefficients $\lambda_i$ for which
$$H_\Phi(f)\geq \sum_{i=1}^k \lambda_i H_\Phi\big(\mathbb{E}[f| X_{\widehat i}]\big)$$
holds for all functions $f$, \emph{i.e.,} we are asking for the best possible constants that one can substitute instead of $1/(k-1)$ in \eqref{eqn:subadditlnf}. This question about the set of coefficients $\lambda_i$ can be asked even when $X_1, X_2, \dots, X_k$ are \emph{correlated} sources. Letting $Y_i=X_{\widehat i}$, we can think of $f$ as a function of $(Y_1, Y_2, \cdots, Y_k)$, and ask for the set of coefficients $\lambda_i$ such that 
$$H_\Phi(f)\geq \sum_{i=1}^k \lambda_i H_\Phi\big(\mathbb{E}[f| Y_i]\big)$$
for all functions $f$ of $(Y_1, Y_2, \cdots, Y_k)$. This is what we call the $\Phi$-ribbon associated to $(Y_1, Y_2, \cdots, Y_k)$.


\section{$\Phi$-ribbon}\label{secPhiRib}

In this section we present our main definition, namely the $\Phi$-ribbon, and prove some of its properties. In particular, we show that it generalizes both the MC and the HC~ribbons, and satisfies the tensorization  and monotonicity properties.

\begin{definition}
Let $\Phi\in \mathscr F$. 
For arbitrarily distributed random variables $(X_1, X_2, \cdots, X_k)$ we define its $\Phi$-ribbon, denoted by $\fR_\Phi(X_1, \dots, X_k)=\fR(X_{[k]})$, to be the set of all $k$-tuples $(\lambda_1, \lambda_2, \cdots, \lambda_k)$ of non-negative numbers such that for every function $f_{X_{[k]}}$ we have
\begin{align}
H_\Phi(f)\geq \sum_{i=1}^k\lambda_i H_\Phi(\E[f|X_i]).
\label{def:strong-sub}
\end{align}
 Note that we require the above equation for any function $f_{X_{[k]}}$ whose range is in the domain of $\Phi\in \mathscr F$.
\end{definition}

From the definition it is clear that $\Phi$-ribbon for the choice of $\Phi(t)=t^2$ reduces to the MC~ribbon defined in the Introduction. Furthermore, according to Example~\ref{ex:1} and Theorem~\ref{thm:Nair}, if $\Phi(t) = 1- h(\frac{1+t}{2})$, the ribbon $\fR_\Phi(X_{[k]})$ becomes the HC~ribbon.

Letting $f$ to be only a function of $X_i$, for some $i\in [k]$, we find that $\E[f|X_i]=f$ and hence $H_\Phi(\E[f|X_i]) = H_\Phi(f)$. Then, for any $(\lambda_1, \dots, \lambda _k)\in \fR_\Phi(X_1, \dots, X_k)$ we must have $\lambda_i\leq 1$; that is, 
$$\fR_\Phi(X_1, \dots, X_k)\subseteq [0,1]^k.$$ 
On the other hand, using the chain rule and the fact that $\Phi$-entropy is non-negative, we have $H_\Phi(f)\geq H_\Phi(\E[f|X_i])$ for every $i$. Therefore, the tuple $(\lambda_1, \dots, \lambda_k)$ of non-negative numbers, always belongs to $\fR_\Phi(X_{[k]})$ if $\sum_i \lambda_i\leq 1$. This means that the nontrivial part of the $\Phi$-ribbon is the set of $k$-tuples of non-negative numbers whose sum is greater than one. 

\begin{example} 
If $X_{1}=X_2=\dots=X_k$ are non-constant, then $\E[f|X_i]=f$ and $H_\Phi(f)= H_\Phi(\E[f|X_i])$, for every $i$. As a result, 
$\fR_\Phi(X_{[k]})$ contains only those $(\lambda_1, \lambda_2, \dots, \lambda_k)\in [0,1]^k$ that satisfy  $\sum_i\lambda_i\leq 1$.
\end{example}

In the following we show that the $\Phi$-ribbon is the largest possible ribbon when $X_i$'s are mutually independent. 

\begin{proposition} \label{prop:indep-phi-ribbon}
For any $\Phi\in \mathscr F$, if $(X_1, \dots, X_k)$ are mutually independent, we have $\fR_\Phi(X_{[k]})=[0,1]^k.$ 
\end{proposition}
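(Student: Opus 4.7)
The plan is to reduce the claim to a single extremal instance and then prove the resulting inequality by induction on $k$. The key observation is that because $H_\Phi(\mathbb{E}[f|X_i])\ge 0$ for every $i$, if the defining inequality \eqref{def:strong-sub} of the $\Phi$-ribbon holds for the tuple $(1,1,\dots,1)$, then it holds for every $(\lambda_1,\dots,\lambda_k)\in[0,1]^k$. So the task boils down to showing
$$H_\Phi(f)\ \geq\ \sum_{i=1}^k H_\Phi\bigl(\mathbb{E}[f|X_i]\bigr)$$
for every function $f=f(X_{[k]})$, under the hypothesis of mutual independence.

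I would prove this by induction on $k$. The base $k=1$ is immediate since $f=\mathbb{E}[f|X_1]$. For the inductive step, apply the chain rule \eqref{law-of-tot} with respect to the block $X_{\widehat{1}}=(X_2,\dots,X_k)$:
$$H_\Phi(f)\ =\ H_\Phi\bigl(\mathbb{E}[f|X_{\widehat{1}}]\bigr)\ +\ H_\Phi(f|X_{\widehat{1}}).$$
For the second summand, note that $X_1$ is independent of $X_{\widehat{1}}$, so Lemma~\ref{lem:key-lemma-phi-entropy}(a), applied with ``$X$'' $=X_{\widehat{1}}$ and ``$Y$'' $=X_1$, yields $H_\Phi(f|X_{\widehat{1}})\geq H_\Phi(\mathbb{E}[f|X_1])$. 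For the first summand, observe that $\mathbb{E}[f|X_{\widehat{1}}]$ is a function of the $k-1$ mutually independent variables $X_2,\dots,X_k$. The inductive hypothesis applied to this function, together with the tower identity $\mathbb{E}\bigl[\,\mathbb{E}[f|X_{\widehat{1}}]\,\big|\,X_i\bigr]=\mathbb{E}[f|X_i]$ for $i\ge 2$ (which holds by mutual independence), gives
$$H_\Phi\bigl(\mathbb{E}[f|X_{\widehat{1}}]\bigr)\ \geq\ \sum_{i=2}^k H_\Phi\bigl(\mathbb{E}[f|X_i]\bigr).$$
Adding the two bounds closes the induction.

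All the real content is absorbed into Lemma~\ref{lem:key-lemma-phi-entropy}(a), whose proof in turn relies on property (i) of Definition~\ref{def:class-C} — the joint convexity characterizing the class $\mathscr F$ — and this has already been established. Beyond invoking that lemma, the argument uses only the chain rule and the tower property of conditional expectation, so I do not anticipate any serious technical obstacle; the only step requiring care is keeping track of which block of variables is being conditioned on when passing from $\mathbb{E}[f|X_{\widehat{1}}]$ to its marginals $\mathbb{E}[f|X_i]$.
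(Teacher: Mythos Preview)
Your proof is correct and follows essentially the same approach as the paper: both reduce to showing the inequality at the extreme point $(1,\dots,1)$ and then peel off one variable at a time via the chain rule together with Lemma~\ref{lem:key-lemma-phi-entropy}(a). The paper organizes this as a telescoping sum over $H_\Phi(f|X_{[j]})$ while you organize it as an induction on $k$, but the underlying argument is the same; one minor remark is that the tower identity $\E\bigl[\E[f|X_{\widehat 1}]\,\big|\,X_i\bigr]=\E[f|X_i]$ for $i\ge 2$ holds simply because $\sigma(X_i)\subseteq\sigma(X_{\widehat 1})$, with no appeal to independence needed.
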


\begin{proof}
We need to show that 
\begin{align}\label{eq:independent-phi-ineq}
H_\Phi(f)\geq \sum_{i=1}^kH_\Phi(\E[f|X_i]).
\end{align}
For this, we again use the conditional form of the chain rule~\eqref{eq:chain-rule-Phi} as well as part (b) of Lemma~\ref{lem:key-lemma-phi-entropy}. For any $0\leq j\leq k-1$ we have
\begin{align*}
H_\Phi(f|X_{[j]}) & = H_\Phi(f| X_{[j+1]}) + H_\Phi\big( \E[f| X_{[j+1]}] | X_{[j]}  \big)\\
& \geq H_\Phi(f| X_{[j+1]}) + H_\Phi( \E[f| X_{j+1}] \big),
\end{align*}
where in the second line we use Lemma~\ref{lem:key-lemma-phi-entropy} for the function $\E[f|X_{[j+1]}]$. Summing up all these inequalities gives the desired result. 
\end{proof}

We can now prove the main result of this section, namely the tensorization and monotonicity properties of the HC~ribbon and MC~ribbon extend to the $\Phi$-ribbon.

\begin{theorem}\label{data-process-tens-Phi}
For any $\Phi\in \mathscr F$, the $\Phi$-ribbon satisfies monotonicity and tensorization as follows:
\begin{enumerate}[label=\rm{(\roman*)}]
\item Data processing: if $(X_{[k]}, Y_{[k]})$ are random variables whose joint probability distribution satisfies $p(y_{[k]}|x_{[k]})=\prod_{i=1}^np(y_i|x_i)$, then 
$$\fR_\Phi(X_{[k]})\subseteq \fR_\Phi(Y_{[k]}).$$
\item Tensorization: if $X_{[k]}$ are independent of $Y_{[k]}$, i.e.,  $p(x_{[k]}, y_{[k]})=p(x_{[k]})p(y_{[k]})$, then $$\fR_\Phi(X_1Y_1, X_2Y_2, \dots, X_kY_k ) = \fR_\Phi(X_1, X_2, \dots, X_k)\cap \fR_\Phi(Y_1, Y_2, \dots, Y_k).$$
\end{enumerate}
\end{theorem}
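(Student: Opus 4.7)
The plan is to establish (ii) first and then deduce (i) from it. For (ii), the inclusion $\fR_\Phi(X_1Y_1, \dots, X_kY_k)\subseteq \fR_\Phi(X_{[k]})\cap \fR_\Phi(Y_{[k]})$ is the easy direction: restricting the defining inequality of $\fR_\Phi(X_1Y_1, \dots, X_kY_k)$ to a function $f=f(X_{[k]})$ that does not depend on $Y_{[k]}$, the independence assumption gives $\E[f|X_iY_i]=\E[f|X_i]$, so the inequality collapses to the one defining $\fR_\Phi(X_{[k]})$; the symmetric argument yields $\fR_\Phi(Y_{[k]})$.

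For the reverse inclusion, I would fix $(\lambda_1,\dots,\lambda_k)\in \fR_\Phi(X_{[k]})\cap \fR_\Phi(Y_{[k]})$ and any $f=f(X_{[k]},Y_{[k]})$, and split $H_\Phi(f)$ via the chain rule as $H_\Phi(f|Y_{[k]})+H_\Phi(\E[f|Y_{[k]}])$. The second summand is bounded by applying $(\lambda_1,\dots,\lambda_k)\in\fR_\Phi(Y_{[k]})$ to $\E[f|Y_{[k]}]$ regarded as a function of $Y_{[k]}$, yielding $H_\Phi(\E[f|Y_{[k]}])\geq \sum_i \lambda_i H_\Phi(\E[f|Y_i])$. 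For the first summand, the independence of $X_{[k]}$ and $Y_{[k]}$ ensures that the conditional law of $X_{[k]}$ given any $Y_{[k]}=y_{[k]}$ is just its marginal; hence applying $(\lambda_1,\dots,\lambda_k)\in\fR_\Phi(X_{[k]})$ to the section $x_{[k]}\mapsto f(x_{[k]},y_{[k]})$ and averaging over $y_{[k]}$ gives
\begin{align*}
H_\Phi(f|Y_{[k]})\geq \sum_i \lambda_i H_\Phi\bigl(\E[f|X_i,Y_{[k]}]\,\big|\,Y_{[k]}\bigr).
\end{align*}
The key remaining step is to compress the $Y_{[k]}$ conditioning on the right to just $Y_i$ conditioning. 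Since $X_i$ is independent of $Y_{[k]}$, the Markov chain $Y_{\widehat i}-Y_i-X_i$ holds, so Lemma~\ref{lem:key-lemma-phi-entropy}(b) applied to $F=\E[f|X_i,Y_{[k]}]$ (viewed as a function of the three blocks $Y_{\widehat i}, Y_i, X_i$) yields $H_\Phi(F|Y_{[k]})\geq H_\Phi(\E[F|X_i,Y_i]\,|\,Y_i)$, and the tower property identifies $\E[F|X_i,Y_i]=\E[f|X_i,Y_i]$. Assembling the three inequalities and using the chain rule $H_\Phi(\E[f|X_iY_i])=H_\Phi(\E[f|X_i,Y_i]|Y_i)+H_\Phi(\E[f|Y_i])$ produces $H_\Phi(f)\geq \sum_i \lambda_i H_\Phi(\E[f|X_iY_i])$, as needed.

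For (i), I would bootstrap from (ii) by realizing the product kernel as deterministic post-processing. Write $Y_i=\sigma_i(X_i,Z_i)$ for deterministic maps $\sigma_i$ and mutually independent noise variables $Z_1,\dots,Z_k$ taken independent of $X_{[k]}$. Proposition~\ref{prop:indep-phi-ribbon} gives $\fR_\Phi(Z_{[k]})=[0,1]^k$, and applying the just-proved tensorization to the independent pair $(X_{[k]},Z_{[k]})$ yields $\fR_\Phi(X_1Z_1,\dots,X_kZ_k)=\fR_\Phi(X_{[k]})$. For any $g=g(Y_{[k]})$, viewed as $G(X_{[k]},Z_{[k]})=g(\sigma_1(X_1,Z_1),\dots,\sigma_k(X_k,Z_k))$, the hypothesis gives $H_\Phi(g)=H_\Phi(G)\geq \sum_i \lambda_i H_\Phi(\E[G|X_i,Z_i])$; since $Y_i$ is a function of $(X_i,Z_i)$, the tower property together with \eqref{conditioning-Hphi-new} yield $H_\Phi(\E[G|X_i,Z_i])\geq H_\Phi(\E[g|Y_i])$, finishing the argument.

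The main obstacle is the Markov-compression step in part (ii). A naive attempt to prove (i) directly by replacing $f(Y_{[k]})$ with $\E[f|X_{[k]}]$ and invoking the hypothesis on $X_{[k]}$ fails, because (as one can check on binary symmetric examples) $H_\Phi(\E[f|X_i])$ need not dominate $H_\Phi(\E[f|Y_i])$. Instead one must carry the full $Y_{[k]}$ alongside $X_i$ through the argument and only at the end collapse it down to $Y_i$ via Lemma~\ref{lem:key-lemma-phi-entropy}(b) along the Markov chain that the independence assumption provides; finding the right order of chain rules and conditional applications of the hypothesis is the heart of the proof.
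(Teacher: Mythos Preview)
Your proof is correct. For part (ii) it is essentially the paper's argument with the roles of $X_{[k]}$ and $Y_{[k]}$ swapped: the paper splits via the chain rule conditioning on $X_{[k]}$, applies the $X$-hypothesis to $\E[f|X_{[k]}]$ and the $Y$-hypothesis conditionally, and then closes with Lemma~\ref{lem:key-lemma-phi-entropy}(c) along the Markov chain $X_{\widehat i}-X_i-Y_i$, which is your Lemma~\ref{lem:key-lemma-phi-entropy}(b) step plus one application of the chain rule.

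Your treatment of (i) is genuinely different. The paper proves (i) directly, by exactly the mechanism you dismiss as ``naive'' but done correctly: one applies the hypothesis to $\E[f|X_{[k]}]$, uses the conditional independence of the $Y_i$'s given $X_{[k]}$ together with (the conditional form of) Proposition~\ref{prop:indep-phi-ribbon} to control $H_\Phi(f|X_{[k]})$, and then invokes Lemma~\ref{lem:key-lemma-phi-entropy}(c) along $X_{\widehat i}-X_i-Y_i$ to obtain $H_\Phi(\E[f|X_i])+H_\Phi(\E[f|X_{[k]}Y_i]|X_{[k]})\geq H_\Phi(\E[f|Y_i])$. Your route---functional representation $Y_i=\sigma_i(X_i,Z_i)$, then tensorization with the trivial ribbon $\fR_\Phi(Z_{[k]})=[0,1]^k$, then \eqref{conditioning-Hphi-new}---is a clean reduction that avoids repeating the chain-rule gymnastics, at the cost of importing the functional representation lemma (harmless here since alphabets are finite). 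The paper's direct proof has the advantage of being self-contained and of exhibiting (i) and (ii) as two instances of the same computation; your reduction has the advantage of making the logical dependence (tensorization $\Rightarrow$ data processing, given Proposition~\ref{prop:indep-phi-ribbon}) explicit.
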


\begin{proof}
(i) Let $\lambda_{[k]}\in \fR_\Phi(X_{[k]})$, we show that  $\lambda_{[k]}\in \fR_\Phi(Y_{[k]})$. For this we need to show that for every function $f_{Y_{[k]}}$ of $Y_{[k]}$ we have 
$$H_\Phi(f)\geq \sum_i \lambda_i H_\Phi(\E[f|Y_i]).$$  

Using the definition of $\lambda_{[k]}\in \fR_\Phi(X_{[k]})$ applied to the function $\E[f|X_{[k]}]$ we find that
\begin{align}H_\Phi(\E[f|X_{[k]}]) \geq \sum_{i=1}^k \lambda_i H_{\Phi}(\E[f| X_i]) .\label{neweqader1}\end{align}
On the other hand, since $Y_{[k]}$ are mutually independent conditioned on $X_{[k]}$, using Proposition~\ref{prop:indep-phi-ribbon} (in fact the ``conditional" version of~\eqref{eq:independent-phi-ineq}) we find that    
\begin{align}
H_\Phi(f|X_{[k]}) \geq \sum_{i=1}^k H_\Phi( \E[f|X_{[k]}Y_i]  |X_{[k]})\geq \sum_{i=1}^k \lambda_i H_\Phi( \E[f|X_{[k]}Y_i]  |X_{[k]}).\label{neweqader2}
\end{align}
Summing up \eqref{neweqader1} and \eqref{neweqader2}, and using the chain rule we arrive at
$$H_\Phi(f)\geq \sum_{i=1}^k \lambda_i \Big(    H_{\Phi}(\E[f| X_i])+    H_\Phi( \E[f|X_{[k]}Y_i]  |X_{[k]})  \Big).$$
Therefore, it suffices to verify that $H_{\Phi}(\E[f| X_i])+    H_\Phi( \E[f|X_{[k]}Y_i]  |X_{[k]})\geq H_\Phi(\E[f|Y_i])$ for every $i$. 
For a fixed $i$, let $g_{X_{[k]}Y_i}= \E[f| X_{[k]}Y_i]$. Then this inequality can be written as 
\begin{align}\label{eq:45-0}
H_{\Phi}(\E[g|X_i]) + H_{\Phi}(g|X_{[k]}) \geq H_\Phi(\E[g|Y_i]).
\end{align}
Now note that we have the Markov chain $X_{\widehat i}- X_i -Y_i$, so by part (c) of Lemma~\ref{lem:key-lemma-phi-entropy} we have 
$$H_{\Phi}(\E[g|X_i]) + H_{\Phi}(g|X_{[k]}) \geq H_\Phi(\E[g|X_iY_i]).$$
Then~\eqref{eq:45-0} follows from~\eqref{conditioning-Hphi-new}.

\vspace{.25in}
\noindent
(ii) In the definition of $\fR_\Phi(X_1Y_1, X_2Y_2, \dots, X_kY_k)$ by restricting to functions $f_{X_{[k]}}$ of $X_{[k]}$ only, or to functions $f_{Y_{[k]}}$ of $Y_{[k]}$ only, we find that $$\fR_\Phi(X_1Y_1, X_2Y_2, \dots, X_kY_k) \subseteq \fR_\Phi(X_1, X_2, \dots, X_k)\cap \fR_\Phi(Y_1, Y_2, \dots, Y_k).$$

To prove the inclusion in the other direction, let $$(\lambda_1, \dots, \lambda_k)\in \fR_\Phi(X_1, X_2, \dots, X_k)\cap \fR_\Phi(Y_1, Y_2, \dots, Y_k),$$ and let $f_{X_{[k]}Y_{[k]}}$ be arbitrary. We need to show that 
\begin{align}
H_\Phi(f)\geq \sum_{i=1}^k \lambda_i H_\Phi(\E[f| X_iY_i]).
\label{eqnnttshow}
\end{align}

Using our assumption on $(\lambda_1, \dots, \lambda_k)$ for function $\E[f|X_{[k]}]$, as a function of $X_{[k]}$, we have
$$H_\Phi(\E[f| X_{[k]}]) \geq \sum_{i=1}^k \lambda_i H_\Phi(\E[f|X_i]).$$
Next considering $f$, for a fixed $X_{[k]}=x_{[k]}$, as a function of $Y_{[k]}$. Observe that conditioned on $X_{[k]}=x_{[k]}$ the distribution of $Y_{[k]}$ does not change. Therefore, since $(\lambda_1, \dots, \lambda_k)$ belongs to $\fR(Y_1, \dots, Y_k)$ we have 
$$H_{\Phi}(f|X_{[k]}) \geq \sum_{i=1}^k \lambda_i H_\Phi\big(\E[f|X_{[k]} Y_i]\big| X_{[k]}\big).$$
Summing up these two inequalities and using chain rule,~\eqref{eqnnttshow} is implied if we have 
$$H_\Phi(\E[f|X_i]) + H_\Phi\big(\E[f|X_{[k]} Y_i] \big| X_{[k]}\big)\geq H_\Phi(\E[f| X_iY_i]),\qquad \forall i.$$
This inequality follows once we note that we have the Markov chain $X_{\widehat i} - X_i- Y_i$, and we can use part (c) of Lemma~\ref{lem:key-lemma-phi-entropy} for the function $\E[f|X_{[k]}Y_i]$.

\end{proof}

Theorem \ref{thm:Nair} provides a description of the HC ribbon in terms of mutual information. Given $p(x,y)$, one can define the $\Phi$-mutual information between $X$ and $Y$ as follows \cite{Raginsky}:
$$I_{\Phi}(X;Y)=\sum_{x,y}p(x)p(y)\Phi(\frac{p(x,y)}{p(x)p(y)}) - \Phi(1).$$
This definition differs from the one used in \cite{Raginsky} due to the subtraction of the $\Phi(1)$ term. This subtraction ensures that $I_\Phi(X;Y)=0$ when $X$ and $Y$ are independent. Observe that $\Phi$-mutual information reduces  to Shannon's mutual information for $\Phi(x)=x\log(x)\in  \mathscr F$. Then, similar to the statement of Theorem \ref{thm:Nair}, we may define the $I_\Phi$-ribbon as follows:

\begin{definition}
Let $\Phi\in \mathscr F$. 
For arbitrarily distributed random variables $(X_1, X_2, \cdots, X_k)$ we define its $I_{\Phi}$-ribbon, denoted by $\fR_{I_\Phi}(X_{[k]})$, to be the set of all $k$-tuples $(\lambda_1, \lambda_2, \cdots, \lambda_k)$ of non-negative numbers such that for any $p(u|x_{[k]})$, we have
$$\sum_i\lambda_iI_{\Phi}(U;X_i)\leq  I_{\Phi}(U;X_{[k]}).$$
\end{definition}
Then, we have the following:
\begin{theorem}\label{thm:newamiphi}We have
$\fR_{I_\Phi}(X_{[k]})=\fR'_{\Phi}(X_{[k]})$
where $\fR'_{\Phi}$ is the set of $(\lambda_1, \dots, \lambda_k)$ such that  for any  $f(x_{[k]})$ satisfying $f\geq 0$ and $\mathbb{E}[f]=1$,  we have
$$\sum_i\lambda_iH_{\Phi}(\mathbb{E}[f|X_i])\leq  H_{\Phi}(f).$$
Observe that  $\fR'_{\Phi}$ has the same definition as  $\fR_{\Phi}$, except that in $\fR'_{\Phi}$ we restrict to functions $f(x_{[k]})$ satisfying $f\geq 0$ and $\mathbb{E}[f]=1$.
\end{theorem}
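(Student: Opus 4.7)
The plan is to rewrite both sides of the $I_\Phi$-inequality as averages of $\Phi$-entropies of likelihood ratios, after which one direction of the equivalence is immediate and the other follows from a small-perturbation argument.

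First, for any channel $p(u\mid x_{[k]})$, set $f_u(x_{[k]}) = p(x_{[k]}\mid u)/p(x_{[k]})$, so that $f_u\geq 0$ and $\E[f_u]=1$. A direct calculation shows $\E[f_u\mid X_i = x_i] = p(x_i\mid u)/p(x_i)$, and substituting into the definition of $I_\Phi$ (using $\Phi(1) = \Phi(\E[f_u]) = \Phi(\E[\E[f_u\mid X_i]])$) yields
\[
I_\Phi(U;X_{[k]}) \;=\; \sum_u p(u)\,H_\Phi(f_u), \qquad I_\Phi(U;X_i) \;=\; \sum_u p(u)\,H_\Phi(\E[f_u\mid X_i]).
\]
The inclusion $\fR'_{\Phi}\subseteq \fR_{I_\Phi}$ is now immediate: if $\lambda_{[k]}\in \fR'_{\Phi}$, apply its defining inequality to each $f_u$ (which is admissible, being non-negative with unit mean) and average over $u$.

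For the reverse inclusion $\fR_{I_\Phi}\subseteq \fR'_{\Phi}$, given an arbitrary admissible $f$, the idea is to realize $f$ itself as a likelihood ratio for a binary auxiliary. Since $f$ is bounded on the finite alphabet, I can pick $\alpha > 0$ with $\alpha f\leq 1$ pointwise and define $U\in\{0,1\}$ via $p(U=0\mid x_{[k]}) = \alpha f(x_{[k]})$. Then $p(U=0) = \alpha$, $f_0 = f$, and
\[
f_1 \;=\; \frac{1-\alpha f}{1-\alpha} \;=\; 1 + \frac{\alpha}{1-\alpha}(1-f),
\]
so $f_1$ is an $O(\alpha)$ perturbation of the constant $1$ whose deviation $1-f$ has zero mean. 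Feeding this $U$ into the $\fR_{I_\Phi}$-hypothesis and invoking the identities above gives
\[
\alpha\sum_i\lambda_i H_\Phi(\E[f\mid X_i]) \,+\, (1-\alpha)\sum_i\lambda_i H_\Phi(\E[f_1\mid X_i]) \;\leq\; \alpha H_\Phi(f) + (1-\alpha)H_\Phi(f_1).
\]
By Lemma~\ref{eq:phi-entropy-taylor} applied at $c=1$, both $H_\Phi(f_1)$ and $H_\Phi(\E[f_1\mid X_i])$ are $O(\alpha^2)$. Dividing by $\alpha$ and letting $\alpha\to 0^+$ leaves exactly the desired inequality $\sum_i\lambda_i H_\Phi(\E[f\mid X_i]) \leq H_\Phi(f)$, proving $\lambda_{[k]}\in \fR'_{\Phi}$.

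I expect the main obstacle to be the reverse direction: the binary-$U$ construction must be calibrated so that the first-order coefficient in $\alpha$ reproduces the target inequality while the remainder terms genuinely vanish. The Taylor expansion of $H_\Phi$ around a constant (Lemma~\ref{eq:phi-entropy-taylor}) is precisely the tool that certifies the second point, and once it is invoked the rest is mechanical.
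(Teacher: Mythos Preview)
Your proof is correct and follows essentially the same approach as the paper. The binary-auxiliary construction you give (with $p(U=0\mid x_{[k]})=\alpha f(x_{[k]})$) is identical to the paper's construction (the paper uses the letter $\epsilon$ in place of your $\alpha$); the only cosmetic difference is that you invoke Lemma~\ref{eq:phi-entropy-taylor} to dispatch the $H_\Phi(f_1)$ terms, whereas the paper phrases the same limiting argument as computing the one-sided derivative $t'(0)$ of $t(\epsilon)=I_\Phi(U_\epsilon;X_{[k]})-\sum_i\lambda_i I_\Phi(U_\epsilon;X_i)$ directly.
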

The condition $f\geq 0$ and $\mathbb{E}[f]=1$ essentially says that $f(x_{[k]})$ can be written as $q(x_{[k]})/p(x_{[k]})$ where $p(x_{[k]})$ is the given distribution on $X_{[k]}$ and $q(x_{[k]})$ is some arbitrary distribution. The proof of this theorem is given in Appendix \ref{app:thm:newamiphi}.

\subsection{Examples}\label{subsection:Phi-examples}
Natural choices for the function $\Phi(t)\in \mathscr F$ are  
$$\varphi_\alpha(t)= t^\alpha, \qquad \alpha\in (1,2].$$
Note that $\varphi(x)$ is defined only for $t\geq 0$. Without changing the corresponding $\Phi$-ribbon, we can even restrict the domain of $\varphi_\alpha$ to $[0,1]$ simply because $\varphi_{\alpha}(ct) = c^{\alpha}\varphi_\alpha(t)$ implies that the equation
$
H_\Phi(f)\geq \sum_{i=1}^k\lambda_i H_\Phi(\E[f|X_i])
$
holds for $f$ if and only if it holds for a scaled version of $f$.

 Another special choice of interest is $\Phi_1(t)=1-h\big(\frac{1+t}{2}\big)$ which, as mentioned before, results in the HC~ribbon. One way to understand this function is to consider the class of functions $\Phi_\alpha$ for $\alpha\in (1,2]$ defined by
\begin{align}
\Phi_\alpha(t) = \frac{(1+t)^\alpha + (1-t)^\alpha -2}{2^\alpha-2}, \qquad & \alpha\in (1, 2],
\quad t\in[-1,1].
\end{align}
Then $\Phi_\alpha\in \mathscr F$, and we have 
$$\lim_{\alpha \searrow 1} \Phi_\alpha(t) = \Phi_1(t), \qquad \forall t\in[-1,1].$$

Note that $\Phi_2=\varphi_2$ for which the associated $\Phi$-ribbon is equal to the MC~ribbon.


\begin{theorem}\label{thm:HC-to-MC}
For all $\alpha\in (1, 2]$ we have $\fR_{\varphi_\alpha}(X_1, \dots, X_k) = \fR_{\Phi_\alpha}(X_1, \dots, X_k)$.
\end{theorem}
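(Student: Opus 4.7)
The plan rests on a direct decomposition identity. Observe that
\[\Phi_\alpha(t) = \frac{(1+t)^\alpha + (1-t)^\alpha - 2}{2^\alpha - 2} = \frac{\varphi_\alpha(1+t) + \varphi_\alpha(1-t) - 2\varphi_\alpha(1)}{2^\alpha - 2}.\]
Using this together with the linearity of conditional expectation (so $\E[1\pm f|X_i] = 1\pm\E[f|X_i]$), a direct computation gives, for any $f:X_{[k]}\to[-1,1]$,
\[H_{\Phi_\alpha}(f) = \frac{1}{2^\alpha-2}\bigl(H_{\varphi_\alpha}(1+f) + H_{\varphi_\alpha}(1-f)\bigr),\]
and the same identity with $f$ replaced by $\E[f|X_i]$. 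Thus the $\Phi_\alpha$-ribbon inequality for $f$ is exactly a \emph{coupled} $\varphi_\alpha$-inequality for the pair of nonnegative functions $g_\pm := 1\pm f$.

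The inclusion $\fR_{\varphi_\alpha}\subseteq\fR_{\Phi_\alpha}$ is then immediate. Given $(\lambda_i)\in\fR_{\varphi_\alpha}$ and $f:X_{[k]}\to[-1,1]$, the $\varphi_\alpha$-ribbon inequality applied separately to the nonnegative functions $g_+=1+f$ and $g_-=1-f$, followed by summing and dividing by $2^\alpha-2>0$, yields the $\Phi_\alpha$-inequality for $f$.

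The harder direction $\fR_{\Phi_\alpha}\subseteq\fR_{\varphi_\alpha}$ exploits the scaling asymmetry of $\varphi_\alpha$. Given $(\lambda_i)\in\fR_{\Phi_\alpha}$ and $g:X_{[k]}\to[0,\infty)$, rescale (using $\varphi_\alpha(cs)=c^\alpha\varphi_\alpha(s)$, which preserves both sides of the $\varphi_\alpha$-inequality) so that $g\in[0,1]$. For each $t\in(0,1]$ apply the $\Phi_\alpha$-ribbon inequality to $f_t:=tg-1\in[-1,0]$. Since $1+f_t=tg$ and $1-f_t=2-tg$, and $H_{\varphi_\alpha}(tg)=t^\alpha H_{\varphi_\alpha}(g)$ by homogeneity, the inequality rearranges to
\[t^\alpha\Bigl(H_{\varphi_\alpha}(g)-\sum_{i}\lambda_i H_{\varphi_\alpha}(\E[g|X_i])\Bigr) + D(t) \geq 0,\]
where $D(t):=H_{\varphi_\alpha}(2-tg)-\sum_i\lambda_i H_{\varphi_\alpha}(2-t\E[g|X_i])$. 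A Taylor expansion of $s\mapsto(2-s)^\alpha$ around $s=0$ (equivalently, Lemma~\ref{eq:phi-entropy-taylor} at $c=2$) yields $H_{\varphi_\alpha}(2-tg)=\alpha(\alpha-1)2^{\alpha-3}\,\Var[g]\,t^2+O(t^3)$ uniformly for $g\in[0,1]$, and similarly for each $\E[g|X_i]$, so $D(t)=O(t^2)$. For $\alpha\in(1,2)$ we have $t^\alpha\gg t^2$ as $t\to 0^+$, so dividing by $t^\alpha$ and sending $t\to 0^+$ forces $H_{\varphi_\alpha}(g)\geq\sum_i\lambda_i H_{\varphi_\alpha}(\E[g|X_i])$, which is the $\varphi_\alpha$-ribbon inequality for $g$. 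The case $\alpha=2$ is trivial because $\Phi_2(t)=t^2=\varphi_2(t)$.

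The main obstacle is this second direction: one must extract a \emph{single} $\varphi_\alpha$-inequality from a \emph{coupled} one. The essential device is the asymmetric perturbation $f_t=tg-1$, which drives one argument ($1+f_t=tg$) down to the boundary $\{0\}$ of the domain of $\varphi_\alpha$, where strict homogeneity of $s^\alpha$ forces $H_{\varphi_\alpha}$ to scale like $t^\alpha$, while the other argument ($1-f_t=2-tg$) remains in the smooth interior near $s=2$, where $H_{\varphi_\alpha}$ scales only like $t^2$ by the standard second-order expansion. The separation $\alpha<2$ is exactly what lets the $t^\alpha$ term dominate; for $\alpha=2$ the two scales coincide, but then $\Phi_2=\varphi_2$ and there is nothing to prove.
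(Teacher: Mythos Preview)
Your proposal is correct and follows essentially the same approach as the paper's proof. Both arguments use the decomposition $H_{\Phi_\alpha}(f)=\frac{1}{2^\alpha-2}\bigl(H_{\varphi_\alpha}(1+f)+H_{\varphi_\alpha}(1-f)\bigr)$ for the easy inclusion, and for the hard inclusion both apply the $\Phi_\alpha$-inequality to the perturbed function $tg-1$ (the paper writes $g_\epsilon=\epsilon f-1$), then exploit exactly the scale separation you describe: the $1+f_t=tg$ piece scales as $t^\alpha$ by homogeneity of $\varphi_\alpha$, while the $1-f_t=2-tg$ piece contributes only $O(t^2)$ by Taylor expansion, so that for $\alpha\in(1,2)$ the former dominates as $t\to 0^+$.
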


A consequence of the above theorem is that by varying the parameter $\alpha$ from $1$ to $2$, the $\Phi$-ribbon varies from the HC~ribbon to the MC~ribbon. 
The proof of this theorem is given in Appendix~\ref{app:HC-to-MC}.

\section{Strong data processing inequalities}

Let us focus on the bipartite case, namely, $k=2$. Suppose that we have two random variables $(X, Y)$. Then for any function $f_X$ of $X$, we have
$$H_\Phi(f)\geq H_\Phi(\E[f|Y]).$$
This inequality can be thought of as a data processing inequality. Indeed, when $\Phi(t)=1-h((1+t)/2)$, according to Example~\ref{ex:1}, this inequality is equivalent to $I(U; X)\geq I(U;Y)$ assuming the Markov chain $U-X-Y$.
Here, we are interested in how tight this inequality is.  

\begin{definition}
For any convex function $\Phi\in \mathscr F$ define
$\eta_{\Phi} (X, Y)$
to be the smallest (the infimum of) $\lambda\geq 0$ such that for any function $f_X$ (whose range is in the domain of $\Phi$)  we have
$$\lambda H_\Phi(f) \geq H_\Phi(\E[f|Y]).$$
We call $\eta_\Phi(X, Y)$ the $\Phi$-strong data processing inequality constant ($\Phi$-SDPI constant).
\end{definition}

We borrowed the term $\Phi$-SDPI constant from Raginsky~\cite{Raginsky} who defines almost the same invariant. The only difference is that in the definition of~\cite{Raginsky} it is assumed that $\E[f]=1$. This extra assumption, however, does not make a difference in the interesting example of $\Phi(t)=t^\alpha$ as $f_X$ can be scaled (as mentioned in Section \ref{subsection:Phi-examples}).

From the convexity of $\Phi$, it is clear that $\eta_{\Phi}(X, Y)\in [0,1]$. Moreover, if $X$ and $Y$ are independent we have $\eta_\Phi(X, Y)=0$. Also, $\eta_\Phi(X, Y)=1$ if $X=Y$, or more generally if $X$ and $Y$ have explicit common data ($f(X)=g(Y)$ for some non-constant $f$ and $g$). 

When $\Phi(t)=t^2$, $\eta_\Phi(X, Y)$ is nothing but $\rho^2(X, Y)$ as shown in~\cite{Renyi1, Renyi2}. Moreover, for the choice of $\Phi(t) = 1- h((1+t)/2)$, $\eta_\Phi(X, Y)$ is known~\cite{AGKN} to be equal to $s^*(X, Y)$ defined in~\cite{AhlswedeGacs}.

\begin{example}\label{example:DSBS}
The \emph{doubly symmetric binary source} with parameter $\lambda$ denoted by $DSPB(\lambda)$ is defined as follows: $X$ and $Y$ are binary and uniform, and 
$$p(X=0, Y=1)=p(X=1, Y=0)=\frac{1-\lambda}{4}.$$ 
It is known that $\rho^2(X, Y)=s^*(X, Y)=\lambda^2$. We will show in Appendix~\ref{appendix:sphi} that  
$$\eta_\Phi(X, Y) = \lambda^2,$$
holds for all $\Phi\in \mathscr F$.
\end{example}


Let us start investigating properties of $\eta_\Phi(X, Y)$ with two results already proved in~\cite{Raginsky}.
The proof of the following proposition is an immediate consequence of Lemma~\ref{lem:convexity-H-Phi-channel}.

\begin{proposition}[\cite{Raginsky}]\label{prop:convexity-s-channel}
For any convex $\Phi$, and fix $p_X$ the function
$$p_{Y|X}\mapsto \eta_{\Phi} (X, Y),$$
is convex. 
\end{proposition}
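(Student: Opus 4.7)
The plan is to express $\eta_\Phi(X,Y)$ as a supremum of ratios and then use Lemma~\ref{lem:convexity-H-Phi-channel} together with the standard fact that a supremum of convex functions is convex. Concretely, directly from the definition of $\eta_\Phi$, one has
$$\eta_\Phi(X,Y)=\sup_{f_X:\,H_\Phi(f)>0}\frac{H_\Phi(\E[f|Y])}{H_\Phi(f)},$$
where the supremum is over functions $f_X$ of $X$ whose range lies in the domain of $\Phi$ and for which $H_\Phi(f)\neq 0$ (the case $H_\Phi(f)=0$ forces $H_\Phi(\E[f|Y])=0$ as well and imposes no constraint on $\lambda$).

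Now I would fix $p_X$ and treat each $f_X$ in this supremum separately. Since $f$ depends only on $X$, the denominator $H_\Phi(f)$ is a positive constant independent of the channel $p_{Y|X}$. For the numerator, Lemma~\ref{lem:convexity-H-Phi-channel} gives exactly what is needed: the map $p_{Y|X}\mapsto H_\Phi(\E[f|Y])$ is convex. Dividing by the positive constant $H_\Phi(f)$ preserves convexity, so for every admissible $f$ the function
$$p_{Y|X}\;\longmapsto\;\frac{H_\Phi(\E[f|Y])}{H_\Phi(f)}$$
is convex in $p_{Y|X}$. Taking the pointwise supremum over the family indexed by $f$ yields a convex function, which is precisely $\eta_\Phi(X,Y)$ viewed as a function of $p_{Y|X}$.

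There is no real obstacle here; the only minor bookkeeping is to justify restricting the supremum to $f$ with $H_\Phi(f)>0$ (so that division is defined) and to note that this restriction does not alter the value of $\eta_\Phi$, since the defining inequality $\lambda H_\Phi(f)\geq H_\Phi(\E[f|Y])$ is automatic for any $f$ with $H_\Phi(f)=0$ (by~\eqref{conditioning-Hphi-new} applied with the roles of conditioning swapped, or directly from $0\le H_\Phi(\E[f|Y])\le H_\Phi(f)$, which follows from the chain rule~\eqref{law-of-tot} applied to $f$ as a function of $X$). Once that is in place, the proof collapses to a one-line invocation of Lemma~\ref{lem:convexity-H-Phi-channel} and the convexity of suprema.
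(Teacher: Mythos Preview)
Your proposal is correct and is exactly the argument the paper has in mind: the paper's proof is the single sentence ``an immediate consequence of Lemma~\ref{lem:convexity-H-Phi-channel},'' and you have simply spelled out that immediate consequence by writing $\eta_\Phi$ as a supremum over $f$ of $H_\Phi(\E[f|Y])/H_\Phi(f)$ and invoking the lemma term-by-term.
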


It is well-known that $s^*(X, Y)\geq \rho^2(X, Y)$. The following theorem is a generalization of this fact.

\begin{theorem}[\cite{Raginsky}]\label{thm:s-Phi-rho}
For any $\Phi \in \mathscr F$ we have 
$$\eta_\Phi(X, Y) \geq \rho^2(X, Y),$$
where $\rho(X, Y)$ is the maximal correlation.
\end{theorem}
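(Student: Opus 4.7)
The plan is to use the perturbative criterion: test the defining inequality
$\eta_\Phi(X,Y)\,H_\Phi(f)\geq H_\Phi(\E[f|Y])$
on infinitesimal perturbations of a constant function, and extract the variance inequality that characterizes maximal correlation.

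Concretely, first I would pick $c$ in the interior of the domain of $\Phi$ (this interior is non-empty because $\Phi$ is smooth on a convex subset of $\mathbb{R}$ and not affine). For an arbitrary non-constant function $g=g(X)$ with $\E[g]=0$, I set $f = c + \epsilon g$ for small $\epsilon > 0$. Because $X$ ranges over a finite set, $g$ is bounded, so for all sufficiently small $|\epsilon|$, $f$ takes values inside the domain of $\Phi$ and is a legitimate test function in the definition of $\eta_\Phi$. Note that $\E[f|Y] = c + \epsilon\, \E[g|Y]$ with $\E[\E[g|Y]]=0$.

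Next, I would apply Lemma~\ref{eq:phi-entropy-taylor} to both $f$ and $\E[f|Y]$ to obtain
\begin{align*}
H_\Phi(f) &= \tfrac{1}{2}\Phi''(c)\Var[g]\epsilon^{2} + O(|\epsilon|^{3}),\\
H_\Phi(\E[f|Y]) &= \tfrac{1}{2}\Phi''(c)\Var[\E[g|Y]]\epsilon^{2} + O(|\epsilon|^{3}).
\end{align*}
Because $\Phi\in\mathscr F$ is strictly convex (condition (v) in Definition~\ref{def:class-C} forces $\Phi''>0$ everywhere), $\Phi''(c)>0$. Substituting into the $\eta_\Phi$-inequality, dividing by $\tfrac{1}{2}\Phi''(c)\epsilon^{2}$, and letting $\epsilon\to 0$ gives
$$\eta_\Phi(X,Y)\,\Var[g]\ \geq\ \Var[\E[g|Y]].$$

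Finally, I would invoke the classical Rényi-type identity
$$\rho^{2}(X,Y)\ =\ \sup_{g=g(X),\,\E g=0,\,\Var[g]>0}\frac{\Var[\E[g|Y]]}{\Var[g]},$$
which follows immediately from the Cauchy-Schwarz-based computation that defines maximal correlation in~\eqref{eq:max-correlatoin-31}: for fixed $g$, the maximum of $\E[g(X)h(Y)]/\sqrt{\Var[g]\Var[h]}$ over centered $h=h(Y)$ is $\sqrt{\Var[\E[g|Y]]/\Var[g]}$. Taking the supremum over admissible $g$ on both sides of the inequality above yields $\eta_\Phi(X,Y)\geq \rho^{2}(X,Y)$, as desired.

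The only delicate point is ensuring the Taylor remainder term is uniformly $o(\epsilon^{2})$; this is automatic since $g$ is bounded (finite alphabet) and $\Phi$ is smooth in a neighborhood of $c$. The rest is bookkeeping.
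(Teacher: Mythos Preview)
Your proof is correct and follows essentially the same approach as the paper: perturb a constant by $\epsilon g$, apply Lemma~\ref{eq:phi-entropy-taylor} to both sides of the defining inequality for $\eta_\Phi$, and recover the variance inequality characterizing $\rho^2$. The only cosmetic difference is that you spell out explicitly the identity $\rho^2(X,Y)=\sup_{g}\Var[\E[g|Y]]/\Var[g]$ and the positivity of $\Phi''(c)$, whereas the paper leaves these implicit.
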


\begin{proof}
Since $\rho^2(X, Y) = \eta_{\Psi}(X, Y)$ for $\Psi(t)=t^2$
we need to show that for any $f_{X}$ with $\E[f]=0$ we have 
\begin{align}\label{eq:eta-rho-0}
\eta_{\Phi}(X, Y) \Var [f] \geq \Var [\E[f|Y].
\end{align}

Take some $c$ in the interior of the domain of $\Phi$, and consider the function $g_X= c+ \epsilon f_X$ for small $|\epsilon|>0$. Then by definition we have 
$$\eta_{\Phi}(X, Y) H_\Phi(g) \geq H_\Phi(\E[g|Y]).$$
Now~\ref{eq:eta-rho-0} follows by applying Lemma~\ref{eq:phi-entropy-taylor} on both sides of this inequality.
\end{proof}

We can now state our main result about the $\Phi$-SDPI constant. 

\begin{theorem}\label{thm:s-Phi-ribbon}
Let $\Phi\in \mathscr F$ be a convex function defined on some \emph{compact} interval. Then we have
\begin{align}\label{eq:s-ribbon}
\eta_{\Phi}(X, Y) = \inf  \frac{1-\lambda_1}{\lambda_2},
\end{align}
where the infimum is taken over all $(\lambda_1, \lambda_2)\in \fR_\Phi(X, Y)$ with $\lambda_2\neq 0$.
\end{theorem}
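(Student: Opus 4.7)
My plan is to prove the two matching inequalities separately; the first is immediate, while the second is the substantive one and is where the compactness hypothesis on $\Phi$'s domain enters.

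\emph{Easy direction} ($\eta_\Phi(X,Y)\le \inf$). For any $(\lambda_1,\lambda_2)\in\fR_\Phi(X,Y)$ with $\lambda_2>0$, I plug into the defining ribbon inequality a function $f$ depending only on $X$. Then $\E[f|X]=f$ and the inequality becomes $(1-\lambda_1)H_\Phi(f)\ge \lambda_2 H_\Phi(\E[f|Y])$, i.e.\ $H_\Phi(\E[f|Y])/H_\Phi(f)\le (1-\lambda_1)/\lambda_2$. Supremizing the left side over $f_X$ yields $\eta_\Phi(X,Y)\le(1-\lambda_1)/\lambda_2$, and taking the infimum over the ribbon closes this direction.

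\emph{Hard direction} ($\inf\le \eta_\Phi(X,Y)$). Fix $\mu>\eta_\Phi$ and set $\delta=\mu-\eta_\Phi>0$. I aim to show that the point $(1-\mu t,\,t)$ lies in $\fR_\Phi(X,Y)$ for all sufficiently small $t>0$, so that $(1-\lambda_1)/\lambda_2=\mu$ is attained in the infimum, and then let $\mu\searrow\eta_\Phi$. The membership condition rearranges to $H_\Phi(f|X)\ge t\bigl[H_\Phi(\E[f|Y])-\mu H_\Phi(\E[f|X])\bigr]$ for every $f=f(X,Y)$, and so it is enough to establish a uniform inequality
\begin{equation*}
H_\Phi(\E[f|Y])-\mu H_\Phi(\E[f|X])\le K\cdot H_\Phi(f|X)
\end{equation*}
with $K$ finite and independent of $f$; any $t\le \min\{1,1/\mu,1/K\}$ then works.

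To prove this bound I decompose $f=g+h$ with $g(X):=\E[f|X]$ and $h:=f-g$, so that $\E[h|X]=0$ and $\E[f|Y]=\E[g|Y]+\E[h|Y]$. Because $g$ is a function of $X$ alone, the definition of $\eta_\Phi$ gives $\mu H_\Phi(g)\ge H_\Phi(\E[g|Y])+\delta H_\Phi(g)$, so it suffices to bound $H_\Phi(\E[f|Y])-H_\Phi(\E[g|Y])-\delta H_\Phi(g)$ linearly by $H_\Phi(f|X)$. A second-order Taylor expansion of $\Phi$ around $\E[g|Y]$ yields
\begin{equation*}
H_\Phi(\E[f|Y])-H_\Phi(\E[g|Y]) = \E_Y\bigl[\Phi'(\E[g|Y])\E[h|Y]\bigr] + \tfrac{1}{2}\E_Y\bigl[\Phi''(\xi)(\E[h|Y])^2\bigr].
\end{equation*}
Compactness of the domain of $\Phi$ is crucial here: since $\Phi\in\mathscr F$, $\Phi''$ is continuous and strictly positive on a compact set, hence bounded between some constants $0<c_2\le C_2$, yielding the two-sided sandwich $(c_2/2)\Var\le H_\Phi\le (C_2/2)\Var$ and the Lipschitz estimate $\Var[\Phi'(u)]\le C_2^{\,2}\Var[u]$. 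The quadratic term is then controlled by $(C_2/2)\Var[\E[h|Y]]\le (C_2/2)\Var[h]=(C_2/2)\E[\Var[f|X]]$, a constant multiple of $H_\Phi(f|X)$. The linear term is a covariance (since $\E[\E[h|Y]]=0$), so Cauchy--Schwarz together with the Lipschitz bound give $C_2\sqrt{\Var[\E[g|Y]]\,\Var[h]}\le C_2\sqrt{\Var[g]\,\Var[h]}$. Applying AM--GM with the weight tuned so that the $\Var[g]$-coefficient equals $\delta$, the unwanted $H_\Phi(g)$-contribution is absorbed by $-\delta H_\Phi(g)$, leaving a bound $K\cdot H_\Phi(f|X)$ with $K=O\bigl(C_2^{\,2}/(c_2^{\,2}\delta)+C_2/c_2\bigr)$, completing the plan.

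The main obstacle is controlling the linear cross-term in the Taylor expansion; for $\Phi(t)=t^2$ the identity $\Var(u+v)-\Var u=\Var v+2\Cov(u,v)$ handles it directly, but no such clean algebraic identity is available for general $\Phi\in\mathscr F$. Replacing it by Lipschitz-type bounds on $\Phi'$ requires $\Phi''$ to be both uniformly bounded above and bounded away from zero, which is exactly what the compactness of $\Phi$'s domain provides.
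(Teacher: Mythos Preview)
Your argument is correct and takes a genuinely different route from the paper's. Both proofs handle the easy direction identically and, for the hard direction, both rewrite the ribbon condition at the point $(1-\mu t,t)$ via the chain rule as $H_\Phi(f|X)\ge t\bigl[H_\Phi(\E[f|Y])-\mu H_\Phi(\E[f|X])\bigr]$. From there the two diverge. The paper argues by contradiction: assuming no such $t$ works, it produces a sequence of violating functions $f^{(n)}$, uses compactness of the domain to extract a convergent subsequence, shows the limit is a function of $X$ alone and in fact a constant, then renormalizes and extracts a second subsequence, applies Lemma~\ref{eq:phi-entropy-taylor} to reduce everything to variances, and finally derives $\eta_\Phi+\delta\le\rho^2(X,Y)$, contradicting Theorem~\ref{thm:s-Phi-rho}. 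Your proof is direct: you split $f=g+h$ with $g=\E[f|X]$, use the definition of $\eta_\Phi$ on $g$ to create the slack $\delta H_\Phi(g)$, and then bound $H_\Phi(\E[f|Y])-H_\Phi(\E[g|Y])$ by a second--order Taylor expansion whose quadratic remainder is controlled by $\Var[h]=\E[\Var[f|X]]$ and whose linear cross term is a covariance handled by Cauchy--Schwarz, the Lipschitz bound on $\Phi'$, and a weighted AM--GM tuned so that the $\Var[g]$ piece is absorbed by the slack. Compactness enters for you through the two-sided bound $0<c_2\le\Phi''\le C_2$, which simultaneously gives the sandwich $\tfrac{c_2}{2}\Var\le H_\Phi\le\tfrac{C_2}{2}\Var$ and the Lipschitz constant for $\Phi'$. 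Your approach is more elementary---no nested subsequence extractions and no appeal to the inequality $\eta_\Phi\ge\rho^2$---and it yields an explicit constant $K=O\bigl(C_2/c_2+C_2^2/(c_2^2\delta)\bigr)$. The paper's approach, while less constructive, makes the role of maximal correlation explicit in the boundary analysis.
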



This theorem for $\Phi(t)=1-h((1+t)/2)$ is derived from the results of~\cite{AhlswedeGacs} and~\cite{AGKN}. For $\Phi(t)=t^2$ it is proved in~\cite{OurPaper}.  

This theorem has a technical assumption, that the domain of $\Phi$ is compact. We do not know whether the theorem holds without this restriction. Nevertheless, this assumption is already satisfied (or can be assumed without loss of generality) for all examples given in Section~\ref{subsection:Phi-examples}. For instance, for $\Phi(t)=t^2$, without loss of generality we can restrict the domain of $\Phi$ to $[-1,1]$ by properly scaling the function $f$. To see this observe that the equation
$
H_\Phi(f)\geq \sum_{i=1}^k\lambda_i H_\Phi(\E[f|X_i])
$
holds for $f$ if and only if it holds for a scaled version of $f$.

\begin{proof}
Let $(\lambda_1, \lambda_2)\in \fR_\Phi(X, Y)$ with $\lambda_2\neq 0$. Then for any function $f_X$ of $X$ we have 
$$H_\Phi(f) \geq \lambda_1H_\Phi(\E[ f|X])  + \lambda_2H_\Phi(\E[ f|Y]).$$
Since $f$ is taken to be a function of $X$ only, we have $f=\E[f|X]$. Therefore,
$$\frac{1-\lambda_1}{\lambda_2} H_\Phi(f)\geq H_\Phi(\E[ f|Y]).$$
Thus we have $(1-\lambda_1)/\lambda_2\geq \eta_\Phi(A, B)$, and then
$$\eta_{\Phi}(X, Y) \leq   \inf  \frac{1-\lambda_1}{\lambda_2}.$$

To prove the inequality in the other direction we show that for any $\delta>0$, we have
$$\eta_{\Phi}(A, B) + \delta \geq   \inf  \frac{1-\lambda_1}{\lambda_2}.$$
To show this, it suffices to argue that there exists $n$ such that the pair $(\lambda_1^{(n)}, \lambda_2^{(n)})$ given by
$$\lambda_1^{(n)} = 1- \frac{\eta_\Phi + \delta}{n}, \qquad \lambda_2^{(n)} = \frac{1}{n},$$
where $\eta_\Phi=\eta_\Phi(X, Y)$,
belongs to $\fR_\Phi(X, Y)$.
Suppose that this is not the case. Then, for any $n$ there is a function $f_{AB}^{(n)}$ such that 
\begin{align*}
H_\Phi(f^{(n)}) < \lambda_1^{(n)}H_\Phi\big(\E\big[f^{(n)}\big|X\big]\big)  + \lambda_2^{(n)}H_\Phi\big(\E\big[ f^{(n)}\big|Y\big]\big).
\end{align*}
Using the chain rule, this inequality can be rewritten as 
\begin{align*}
\frac{1-\lambda_1^{(n)}}{\lambda_2^{(n)}}
H_\Phi\big(\E\big[f^{(n)}\big|X\big]\big) + \frac{1}{\lambda_2^{(n)}} H_\Phi\big(f^{(n)}|X  \big)  < H_\Phi\big(\E\big[ f^{(n)}\big|Y\big]\big),
\end{align*}
or equivalently as
\begin{align*}
(\eta_\Phi+\delta)
H_\Phi\big(\E\big[f^{(n)}\big|X\big]\big) + n H_\Phi\big(f^{(n)}|X  \big) < H_\Phi\big(\E\big[ f^{(n)}\big|Y\big]\big).
\end{align*}
Since $\Phi$-entropy is non-negative, we infer from this inequality that
\begin{align}\label{eq:s-098-a}
(\eta_\Phi+\delta)
H_\Phi\big(\E\big[f^{(n)}\big|X\big]\big) < H_\Phi\big(\E\big[ f^{(n)}\big|Y\big]\big),
\end{align}
and
\begin{align}\label{eq:s-098-b}
 n H_\Phi\big(f^{(n)}|X  \big) < H_\Phi\big(\E\big[ f^{(n)}\big|Y\big]\big).
\end{align}

Since $\mathcal X$ and $\mathcal Y$ are assumed to be finite, and the images of functions $f^{(n)}$ are in a compact interval (i.e., the domain of $\Phi$), there is an increasing sequence $\{n_{k}: k\geq 1\}$ such that 
$$\lim_{k\rightarrow \infty} f^{(n_k)}(x, y) = \hat f(x, y), \qquad \forall x, y, $$
for some function $\hat f$.

$\Phi$ is continuous and defined on a compact interval. Thus, there is a constant $M>0$, such that $|\Phi(t)|\leq M$ for all $t$. As a result, the $\Phi$-entropy of any function is at most $2M$. Then from~\eqref{eq:s-098-b} we have
\begin{align}\label{eq:f-close-function-A}
H_\Phi\big(f^{(n)}|X  \big)  \leq \frac{2 M}{n}, \qquad \forall n.
\end{align}
Then, by a continuity argument we conclude that $H_\Phi(\hat f|X)=0$, i.e., $\E[\Phi(\hat f)] = \E_X\big[  \Phi(\E[\hat f|X])\big]$. From this equality and the fact that $\Phi\in\mathscr F$ is strictly convex, we infer that $\hat f=\hat f_X$ is a function of $X$ only. 

Next, using~\eqref{eq:s-098-a} we find that
$$(\eta_\Phi + \delta)H_\Phi(\E[\hat f|X]) = (\eta_\Phi + \delta)H_\Phi(\hat f) 
 \leq H_\Phi(\E[\hat f|Y]).
$$
Moreover, since $\hat f$ is a function of $X$, by the definition of $\eta_\Phi(X, Y)$ we have
$$ H_\Phi(\E[\hat f|Y])\leq \eta_\Phi H_\Phi(\hat f).
$$
Putting these two inequalities together we conclude that $H_\Phi(\hat f)=0$, which again by the strict convexity of $\Phi$ imply that $\hat f$ is a constant, i.e.,
$$\lim_{k\rightarrow \infty} f^{(n_k)} = c,$$
for some constant $c$. 

Observe that $\E[f^{(n)}|X]$ is not a constant since otherwise $H_\Phi(\E[f^{(n)}|X]) =0$ and $H_\Phi(f^{(n)})= H_\Phi(f^{(n)}|X)$. In this case, from~\eqref{eq:s-098-b} we find that $nH_\Phi(\E[f^{(n)}|Y])\leq nH_\Phi(f^{(n)}) < H_\Phi(\E[f^{(n)}|Y])$ which is a contradiction.  As a result, $\E[f^{(n)}|X]$ is not a constant and  $\Var_X\E[f^{(n)}|X]> 0$. Then for every $k$ we can write 
$$f^{(n_k)} = c_k+ \epsilon_k g^{(k)},$$
such that 
$$c_k= \E \big[f^{(n_k)}\big],\qquad \E \big[g^{(k)}\big]=0, \qquad \epsilon_k =\sqrt{\Var_X\E\big[f^{(n_k)}\big|X\big]}>0, \qquad \Var_X\E\big[g^{(k)}\big| X\big]=1.$$ 
Observe that $\lim_{k\rightarrow \infty} \epsilon_k=0$ since $f^{(n_k)}$, and then $\E[f^{(n_k)}| X]$ converge to constant functions. We also have 
$\lim_{k\rightarrow \infty}c_k=c$. 
Moreover, $g^{(k)}$'s are uniformly bounded since they have zero expectation and $\Var_X\E\big[g^{(k)}\big| X\big]=1$.

Now using Lemma~\ref{eq:phi-entropy-taylor} we find that 
$$\Big|      H_\Phi(f^{(n_k)}) - \frac{1}{2}\Phi''(c_k)\Var\big[g^{(k)}\big]\epsilon_k^2    
\Big|= O(\epsilon_k^3).$$
Here, in particular, we use the fact that $g^{(k)}$'s are uniformly bounded. 
We similarly have 
\begin{align*}
\Big|H_\Phi\big( \E[f^{(n_k)}|X]  \big) - \frac{1}{2} \Phi''(c_k) \Var_X \big[\E[g^{(k)}|X]\big] \epsilon_k^2\Big| =O(\epsilon_k^3),
\end{align*}
and
\begin{align*}
\Big|H_\Phi\big( \E[f^{(n_k)}|Y]  \big) - \frac{1}{2} \Phi''(c_k) \Var_Y \big[\E[g^{(k)}|Y]\big] \epsilon_k^2\Big| =O(\epsilon_k^3),
\end{align*}
Using these in~\eqref{eq:s-098-a} and~\eqref{eq:f-close-function-A}, and noting that $\epsilon_k>0$ and that $0<\Phi''(c_k)< \Phi''(c)+1$ for sufficiently large $k$, we find that
\begin{align}\label{eq:con-01}
(\eta_\Phi+\delta)\Var_X \big[\E[g^{(k)}|X]\big] < \Var_Y \big[\E[g^{(k)}|Y]\big] +O(\epsilon_k),
\end{align}
and
\begin{align}\label{eq:con-02}
\Var\big[g^{(k)}|X\big]  < \frac{M'}{n_k} + O(\epsilon_k),
\end{align}
for some constant $M'$.

As mentioned above the functions $g^{(k)}$ are uniformly bounded. Then there is an increasing sequence $\{k_j: j\geq 1\}$ such that 
$$\lim_{j\rightarrow \infty}g^{(k_j)} = \hat g,$$
for some function $g_{XY}$. 
Then using~\eqref{eq:con-02} we have $\Var[\hat g|X]=0$, i.e., $\hat g=\hat g_X$ is a function of $X$ only. On the other hand, since $\Var[g^{(k)}|X]=1$, for all $k$, we have $\Var[\hat g|X] = \Var [\hat g]=1$, i.e., $\hat g$ is not a constant.

Next using~\eqref{eq:con-01} we find that
$$(\eta_\Phi+\delta) \Var[\hat g]\leq \Var[\E[\hat g| Y]].$$
Also, since $\rho^(X, Y) = \eta_\Psi(X, Y)$ for $\Psi(t)=t^2$, and $\hat g$ is a function of $X$ we have
$$\Var[\E[\hat g| Y]]\leq \rho^2(X, Y) \Var[\hat g].$$
Comparing the above two inequalities and using $\Var[\hat g]=1$ we conclude that 
$\eta_\Phi(X, Y) + \delta\leq \rho^2(X, Y)$, which is in contradiction with Theorem~\ref{thm:s-Phi-rho}. We are done. 

\end{proof}

We now state the tensorization and monotonicity properties of the $\Phi$-SDPI constant. 

\begin{theorem}\label{thm:tensor-monoton-eta-phi}
For any $\Phi\in \mathscr F$, the $\Phi$-SDPI constant $\eta_{\Phi} (X, Y)$ satisfies the followings:
\begin{enumerate}[label={\rm (\roman*)}]
\item Monotonicity: If $X-A-B-Y$ forms a Markov chain, then $\eta_{\Phi}(X, Y)\leq \eta_{\Phi}(A, B)$.
\item Tensorization: If $p_{ABXY}= p_{AB}\cdot p_{XY}$ then 
$$\eta_{\Phi}(AX, BY) = \max\big\{     \eta_{\Phi}(A, B), \eta_{\Phi}(X, Y) \big\}.$$
\end{enumerate}
\end{theorem}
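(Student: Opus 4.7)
The plan is to deduce both parts from the ribbon characterization in Theorem~\ref{thm:s-Phi-ribbon}, combined with the monotonicity and tensorization of the $\Phi$-ribbon already established in Theorem~\ref{data-process-tens-Phi}. In a slogan: the two properties of $\eta_\Phi$ are inherited from the corresponding properties of the ribbon, because $\eta_\Phi$ is realized as an infimum over the ribbon.

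For part (i), the Markov chain $X-A-B-Y$ factors as $p(x,y\mid a,b)=p(x\mid a)p(y\mid b)$, so data-processing of the ribbon (Theorem~\ref{data-process-tens-Phi}(i)) applied with $(X_1,X_2)=(A,B)$ and $(Y_1,Y_2)=(X,Y)$ gives $\fR_\Phi(A,B)\subseteq \fR_\Phi(X,Y)$. Since the infimum of $(1-\lambda_1)/\lambda_2$ over a larger set is no larger, Theorem~\ref{thm:s-Phi-ribbon} immediately yields $\eta_\Phi(X,Y)\leq \eta_\Phi(A,B)$.

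For the $\geq$ direction of part (ii), I would apply (i) to the tautological Markov chains $A-(A,X)-(B,Y)-B$ and $X-(A,X)-(B,Y)-Y$ (these hold because $A$ is a function of $(A,X)$ and $B$ of $(B,Y)$), obtaining $\eta_\Phi(A,B)\leq \eta_\Phi(AX,BY)$ and $\eta_\Phi(X,Y)\leq \eta_\Phi(AX,BY)$, and hence the desired inequality. For the $\leq$ direction, tensorization of the ribbon (Theorem~\ref{data-process-tens-Phi}(ii)) gives $\fR_\Phi(AX,BY)=\fR_\Phi(A,B)\cap \fR_\Phi(X,Y)$, so Theorem~\ref{thm:s-Phi-ribbon} rewrites
$$\eta_\Phi(AX,BY)=\inf_{(\lambda_1,\lambda_2)\in \fR_\Phi(A,B)\cap \fR_\Phi(X,Y)}\frac{1-\lambda_1}{\lambda_2}.$$
Set $\eta:=\max\{\eta_\Phi(A,B),\eta_\Phi(X,Y)\}$. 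Given $c>\eta$, I would revisit the proof of Theorem~\ref{thm:s-Phi-ribbon}, whose compactness argument shows that for any $c>\eta_\Phi(A,B)$ the pair $(1-c/n,\,1/n)$ lies in $\fR_\Phi(A,B)$ for all sufficiently large $n$, and the same argument with $(X,Y)$ gives the analogous threshold for $\fR_\Phi(X,Y)$. Taking $n$ beyond both thresholds places this pair in the intersection, so $\eta_\Phi(AX,BY)\leq c$; letting $c\searrow \eta$ finishes.

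The main obstacle is precisely the last step. The trivial bound $\inf_{S_1\cap S_2}\geq \max\{\inf_{S_1},\inf_{S_2}\}$ is automatic, so what needs justification is the reverse, that the intersection $\fR_\Phi(A,B)\cap \fR_\Phi(X,Y)$ still contains a sequence whose ratio $(1-\lambda_1)/\lambda_2$ tends to $\eta$. This relies on both ribbons being sufficiently regular near the corner $\lambda_2\to 0^+$, with accessible slope equal to the respective $\Phi$-SDPI constant; confirming that the threshold on $n$ in the proof of Theorem~\ref{thm:s-Phi-ribbon} depends only on the scalar gap $c-\eta_\Phi$ and on the uniform bound $|\Phi|\leq M$ (so it can be combined across the two ribbons) is the single technical point that needs care, and it is also where the compact-domain hypothesis implicitly enters.
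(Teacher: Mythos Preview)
Your route via Theorem~\ref{thm:s-Phi-ribbon} and Theorem~\ref{data-process-tens-Phi} is exactly the one the paper alludes to in the remark after Theorem~\ref{thm:tensor-monoton-eta-phi}, but the paper's actual proof (Appendix~\ref{appendix:sphi}) is direct and bypasses the ribbon characterization entirely. For~(i) it simply chains $H_\Phi(f)\geq H_\Phi(\E[f|A])$, the definition of $\eta_\Phi(A,B)$ applied to $\E[f|A]$, and $H_\Phi(\E[f|B])\geq H_\Phi(\E[f|Y])$ from the Markov structure. For~(ii) it conditions on one factor, applies the definitions of $\eta_\Phi(A,B)$ and $\eta_\Phi(X,Y)$ separately, and closes with the chain rule and Lemma~\ref{lem:key-lemma-phi-entropy}(c). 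The payoff is that this works for all $\Phi\in\mathscr F$, not just those with compact domain, and in fact extends to the two-function constant $\eta_{\Phi,\Psi}$.

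Your argument is also correct under the compact-domain hypothesis, but you are making the final step harder than necessary. You do not need to reopen the compactness argument of Theorem~\ref{thm:s-Phi-ribbon} or worry about whether the threshold on $n$ ``depends only on the scalar gap''; that proof is by contradiction and gives no such threshold. Instead, use structural facts about the ribbon: each $\fR_\Phi$ is convex (intersection of half-spaces linear in $\lambda$), contains $(1,0)$ (since $H_\Phi(f)\geq H_\Phi(\E[f|X])$), and is coordinate-wise downward-closed (all $H_\Phi$ terms are nonnegative). Given $c>\eta_\Phi(A,B)$, pick $(\lambda_1,\lambda_2)\in\fR_\Phi(A,B)$ with $(1-\lambda_1)/\lambda_2<c$; the segment from $(1,0)$ to $(\lambda_1,\lambda_2)$ lies in $\fR_\Phi(A,B)$ with constant ratio, and downward-closedness then forces $(1-c\mu,\mu)\in\fR_\Phi(A,B)$ for every $0<\mu\leq\lambda_2$. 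The same holds for $\fR_\Phi(X,Y)$, so for $\mu$ small enough the point $(1-c\mu,\mu)$ sits in the intersection, giving $\eta_\Phi(AX,BY)\leq c$ without any appeal to compactness of functions.
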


The tensorization of $\eta_\Phi(X, Y)$ is already proved in~\cite{Raginsky}.
Moreover,  this theorem, in the case that the domain of $\Phi$ is compact, is a simple corollary of Theorem~\ref{thm:s-Phi-ribbon} and the monotonicity and tensorization properties of the $\Phi$-ribbon. 

In Appendix~\ref{appendix:sphi} we give a generalization of the SDPI constant  associated to two functions $\Phi, \Psi$. This constant which we denote by $\eta_{\Phi, \Psi}(X, Y)$, coincides with $\eta_\Phi(X, Y)$ when $\Phi=\Psi$. We prove in Appendix~\ref{appendix:sphi} that $\eta_{\Phi, \Psi}(X, Y)$ satisfies the tensorization and monotonicity properties, from which Theorem~\ref{thm:tensor-monoton-eta-phi} follows as a special case.

\subsection{Example: sums of i.i.d.\ random variables}

Let $X_1, \dots, X_n$ be $n$ i.i.d.\ random variables. For any  $1\leq \ell\leq n$ define 
$$S_\ell = X_1 + \cdots + X_\ell.$$
It is known~\cite{DKS01} that for any $1\leq m\leq n$ we have $\rho^2(S_n, S_m)= \frac{m}{n}$. Moreover, recently it is shown in~\cite{KamathNair15} that $s^*(S_n, S_m) = \frac{m}{n}$. In the following we prove a similar result for all $\Phi\in \mathscr F$.

\begin{theorem}
Let $X_1, \dots, X_n$ be in i.i.d.\ random variables. Then for any $1\leq m\leq n$ and any $\Phi\in \mathscr F$ we have
$$\eta_\Phi(S_n, S_m)=\frac{m}{n}.$$
\end{theorem}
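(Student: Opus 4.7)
My plan is to establish the two inequalities separately. For the lower bound $\eta_\Phi(S_n,S_m)\ge m/n$ I would simply combine Theorem~\ref{thm:s-Phi-rho} with the identity $\rho^2(S_n,S_m)=m/n$ recorded just before the statement. All the content lies in proving the matching upper bound $\eta_\Phi(S_n,S_m)\le m/n$, which I would do in two steps.

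First I would prove the single-step estimate $\eta_\Phi(S_n,S_{n-1})\le (n-1)/n$ by applying the subadditivity inequality~\eqref{eqn:subadditlnf} to $f=g(S_n)$, viewed as a function of the $n$ independent variables $(X_1,\ldots,X_n)$ with $k=n$:
$$H_\Phi(g(S_n))\ \ge\ \sum_{i=1}^n\frac{1}{n-1}\,H_\Phi\!\big(\E[g(S_n)\mid X_{\widehat i}]\big).$$
By exchangeability, $\E[g(S_n)\mid X_{\widehat i}]=\hat g(S_{\widehat i})$, where $\hat g(s):=\E[g(s+X)]$ and $S_{\widehat i}=\sum_{j\ne i}X_j$, and for every $i$ the pair $(S_n,S_{\widehat i})$ has the same joint law as $(S_n,S_{n-1})$. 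Moreover $\E[g(S_n)\mid S_{n-1}]=\hat g(S_{n-1})$ by independence of $X_n$ from $S_{n-1}$. Hence all $n$ summands on the right-hand side equal $H_\Phi(\E[g\mid S_{n-1}])$, and the inequality collapses to $H_\Phi(g)\ge \frac{n}{n-1}H_\Phi(\E[g\mid S_{n-1}])$, which is exactly the claimed bound.

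Next I would chain these single-step bounds along the random walk. The sequence $(S_m,S_{m+1},\ldots,S_n)$ is a Markov chain with i.i.d.\ increments, and therefore so is the reverse $S_n - S_{n-1} - \cdots - S_m$. For any Markov chain $X - Y - Z$ and any $\Phi\in\mathscr F$ there is a submultiplicative composition rule
$$\eta_\Phi(X,Z)\ \le\ \eta_\Phi(X,Y)\,\eta_\Phi(Y,Z),$$
proved in one line: by the Markov property $\E[f\mid Z]=\E[\E[f\mid Y]\mid Z]$, so applying the defining inequality of $\eta_\Phi(Y,Z)$ to the function $h(Y):=\E[f\mid Y]$ bounds $H_\Phi(\E[f\mid Z])$ by $\eta_\Phi(Y,Z)\,H_\Phi(h)$, and a further application of $\eta_\Phi(X,Y)$ to $f(X)$ bounds this by $\eta_\Phi(X,Y)\eta_\Phi(Y,Z)H_\Phi(f)$. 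Iterating this inequality along $S_n-S_{n-1}-\cdots-S_m$ and inserting the one-step bound at each link yields
$$\eta_\Phi(S_n,S_m)\ \le\ \prod_{k=m+1}^{n}\eta_\Phi(S_k,S_{k-1})\ \le\ \prod_{k=m+1}^{n}\frac{k-1}{k}\ =\ \frac{m}{n},$$
which together with the lower bound completes the proof.

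The only delicate point I anticipate is the identification in the first step: one must verify that the $n$ conditional expectations appearing in~\eqref{eqn:subadditlnf} genuinely collapse to (copies of) the single object $\E[g\mid S_{n-1}]$. This is the only place where the i.i.d.\ structure is truly used, but it is a clean symmetry observation. The composition step and the lower bound are soft arguments that work uniformly over $\Phi\in\mathscr F$.
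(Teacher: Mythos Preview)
Your proof is correct and takes a genuinely different route from the paper's.

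The paper fixes $f=f(S_n)$, sets $c_\ell:=H_\Phi(\E[f\mid S_\ell])=H_\Phi(\E[f\mid X_{[\ell]}])$, and proves the discrete convexity $c_{\ell+1}-2c_\ell+c_{\ell-1}\ge 0$ directly from part~(b) of Lemma~\ref{lem:key-lemma-phi-entropy}; an easy induction then gives that $\ell\mapsto c_\ell/\ell$ is nondecreasing, whence $c_n/n\ge c_m/m$, i.e.\ $\frac{m}{n}H_\Phi(f)\ge H_\Phi(\E[f\mid S_m])$ for the given $f$. Your argument instead factors the problem: the subadditivity inequality~\eqref{eqn:subadditlnf} applied to $g(S_n)$ together with the exchangeability observation yields the sharp one-step bound $\eta_\Phi(S_k,S_{k-1})\le (k-1)/k$, and you then telescope along the Markov chain $S_n-S_{n-1}-\cdots-S_m$ via the elementary submultiplicativity $\eta_\Phi(X,Z)\le\eta_\Phi(X,Y)\eta_\Phi(Y,Z)$.

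Both proofs ultimately rest on Lemma~\ref{lem:key-lemma-phi-entropy} (your use of~\eqref{eqn:subadditlnf} is derived from it), so the underlying analytic content is the same. Your approach is more modular: the one-step estimate and the chaining rule are each reusable on their own, and the Markov/submultiplicativity step makes transparent why the bound telescopes into a product. The paper's convexity argument is self-contained for a single $f$ and gets $c_n/n\ge c_m/m$ in one stroke without invoking $\eta_\Phi$ at intermediate stages, which is slightly more direct but less decomposable.
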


To prove this theorem we borrow ideas from~\cite{KamathNair15}.

\begin{proof}
By Theorem~\ref{thm:s-Phi-rho} we already know that  $\eta_\Phi(S_n, S_m)\geq \rho^2(S_n, S_m)= m/n$. Then it suffices to show that $\eta_\Phi(S_n, S_m)\leq  m/n$. For this we need to verify that for any function $f=f(S_n)$ we have
\begin{align}\label{eq:SDPI-iid}
\frac{m}{n} H_\Phi(f) \geq H_\Phi(\E[f|S_m]).
\end{align}

Observe that for any $1\leq \ell\leq n$, the conditional distribution of $S_n$ given  $X_{[\ell]} = x_{[\ell]}$, for any $(x_1, \dots, x_\ell)$, is identical to the conditional distribution of $S_n$ given $S_\ell =  x_1+\cdots + x_\ell$. Therefore, we have $H_\Phi(f|X_{[\ell]}) = H_\Phi(f| S_\ell)$. Then using the chain rule 
$$H_\Phi(f) = H_\Phi(\E[f|X_{[\ell]}]) + H_\Phi(f|X_{[\ell]}) = H_\Phi(\E[f|S_\ell]) +H_\Phi(f| X_\ell),$$
we find that 
$$H_\Phi(\E[f|X_{[\ell]}]) = H_\Phi(\E[f| S_\ell]).$$
Let us denote the above quantity by $c_\ell$. We claim that 
\begin{align}\label{eq:sdpi-iid-c-ell}
c_{\ell+1} - c_\ell \geq c_\ell - c_{\ell-1}, \qquad0\leq \ell\leq n.
\end{align}
To prove our claim, note that since $X_i$'s are i.i.d.\ we have 
$H_\Phi(\E[f| X_{[\ell]}]) = H_\Phi(\E[f| X_{[\ell-1]}, X_{\ell+1}])$. Therefore, by the chain rule we have
\begin{align*}
c_{\ell+1} - 2c_\ell + c_{\ell-1} &  = H_\Phi(\E[f| X_{[\ell-1]}, X_\ell, X_{\ell+1}]) - H_\Phi(\E[f| X_{[\ell-1]}, X_{\ell+1}])\\
&\quad\, - H_\Phi(\E[f| X_{[\ell-1]}, X_\ell]) + H_\Phi(\E[f| X_{[\ell-1]}])\\
& = H_\Phi\big(\E[f| X_{[\ell-1]}, X_\ell, X_{\ell+1}]  \big| X_{[\ell-1]}, X_{\ell+1}\big)\\
& \quad\, - H_\Phi\big(\E[f| X_{[\ell-1]}, X_\ell] \big| X_{[\ell-1]}\big).
\end{align*}
Let us define $g=\E[f| X_{[\ell-1]}, X_\ell, X_{\ell+1}]$. Then we have
$$c_{\ell+1}- 2c_\ell + c_{\ell-1} = H_\Phi(g|X_{[\ell-1]}, X_{\ell+1}) - H_\Phi
\big(\E[g| X_{[\ell-1]}, X_\ell] | X_{[\ell-1]}\big).$$
Now since $X_{[\ell-1]}, X_\ell$ and $X_{\ell+1}$ are independent, using part (b) of Lemma~\ref{lem:key-lemma-phi-entropy} we arrive at $c_{\ell+1}-2c_\ell + c_{\ell-1}\geq 0$ and then~\eqref{eq:sdpi-iid-c-ell}.

We prove by induction that 
$$\frac{c_\ell}{\ell}\geq \frac{c_{\ell-1}}{\ell-1}.$$ 
The base case $\ell=2$ is immediate from~\eqref{eq:sdpi-iid-c-ell} and that $c_0=0$.  The induction step follows from
\begin{align*}
c_{\ell+1} - c_\ell &\geq c_\ell- c_{\ell-1}
 \geq c_\ell - \frac{\ell-1}{\ell} c_\ell 
= \frac{1}{\ell} c_\ell.
\end{align*}
We conclude that $c_n/n\geq c_m/m$ since $m\leq n$, which is equivalent to~\eqref{eq:SDPI-iid}.

\end{proof}

\section{Maximal correlation ribbon}

In this section we focus on the function $\Phi(x)=x^2$. We note that $\Phi\in \mathscr F$, so the ribbon $\fR_\Phi(X_{[k]})$ satisfies monotonicity and tensorization. The ribbon $\fR_\Phi(X_{[k]})$ for this particular function is introduced in~\cite{OurPaper} as the maximal correlation ribbon (MC~ribbon) and is denoted by $\fS(X_{[k]})$. So we will use this notation here too. Thus $\fS(X_{[k]})$ is the set of $k$-tuples $(\lambda_1, \dots, \lambda_k)\in [0,1]^k$ such that for all functions $f_{X_{[k]}}$ we have
\begin{align}\label{eq:def-mc-variance-0}
\Var[f] \geq \sum_{i=1}^k\lambda_i\Var_{X_i}\big[\E[f|X_i]\big].
\end{align} 
With no loss of generality, in this definition  we may assume that $\E f=0$.

The following theorem states that the MC~ribbon is the largest possible $\Phi$-ribbon.
\begin{theorem}\label{thm:Phi-ribbon-MC-ribbon}
For any $\Phi\in \mathscr F$ we have $\fR_\Phi(X_1, \dots, X_k)\subseteq \fS(X_1, \dots, X_{k})$.
\end{theorem}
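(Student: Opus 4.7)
The plan is to reduce the $\Phi$-ribbon inequality to the variance (i.e., MC~ribbon) inequality via an infinitesimal perturbation, exploiting Lemma~\ref{eq:phi-entropy-taylor}.

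First I would fix $(\lambda_1,\dots,\lambda_k)\in\fR_\Phi(X_{[k]})$ and an arbitrary test function $f=f_{X_{[k]}}$ with $\E f=0$; the general case reduces to this by subtracting the mean (which affects neither $\Var$ nor the conditional variances). Pick any point $c$ in the interior of the domain of $\Phi$ and, for small $|\epsilon|>0$, consider the perturbed function
$$g := c + \epsilon f.$$
Since $g$ takes values in a small neighborhood of $c$, it lies in the domain of $\Phi$, so the ribbon inequality applies:
$$H_\Phi(g) \;\geq\; \sum_{i=1}^k \lambda_i\, H_\Phi\big(\E[g\mid X_i]\big).$$

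Next I would expand both sides using Lemma~\ref{eq:phi-entropy-taylor}. Since $\E f=0$, the lemma gives
$$H_\Phi(g) \;=\; \tfrac{1}{2}\Phi''(c)\,\Var[f]\,\epsilon^2 + O(|\epsilon|^3).$$
The analogous expansion holds for each conditional expectation: note that $\E[g\mid X_i] = c + \epsilon\,\E[f\mid X_i]$ and that $\E\big[\E[f\mid X_i]\big] = \E f = 0$, so applying the lemma again,
$$H_\Phi\big(\E[g\mid X_i]\big) \;=\; \tfrac{1}{2}\Phi''(c)\,\Var\big[\E[f\mid X_i]\big]\,\epsilon^2 + O(|\epsilon|^3).$$
Substituting, dividing by $\epsilon^2$, and sending $\epsilon\to 0$ yields
$$\tfrac{1}{2}\Phi''(c)\,\Var[f] \;\geq\; \tfrac{1}{2}\Phi''(c)\,\sum_{i=1}^k \lambda_i\,\Var\big[\E[f\mid X_i]\big].$$
Because $\Phi\in\mathscr F$ is strictly convex, $\Phi''(c)>0$ (condition (v) of Definition~\ref{def:class-C}), so we can cancel it, giving exactly the MC-ribbon inequality~\eqref{eq:def-mc-variance-0} and hence $(\lambda_{[k]})\in\fS(X_{[k]})$.

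I do not anticipate a real obstacle here: every step is justified by a lemma already in the paper. The only thing to be mildly careful about is the uniformity of the $O(|\epsilon|^3)$ remainder (so that the limit $\epsilon\to 0$ really discards it cleanly), which is automatic because $f$ is bounded (the random variables are finite-valued) and $\Phi$ is smooth in a neighborhood of $c$, so the constant in the error term depends only on $f$, $c$, and $\Phi$, not on $\epsilon$.
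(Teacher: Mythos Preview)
Your proposal is correct and follows exactly the approach the paper indicates: perturb $f$ to $g=c+\epsilon f$, apply the $\Phi$-ribbon inequality, use Lemma~\ref{eq:phi-entropy-taylor} to extract the second-order (variance) term on each side, and cancel $\Phi''(c)>0$. This is precisely the argument the paper alludes to when it says the proof is ``based on Lemma~\ref{eq:phi-entropy-taylor} and is similar to that of Theorem~\ref{thm:s-Phi-rho}.''
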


The proof of this theorem is based on Lemma~\ref{eq:phi-entropy-taylor} and is similar to that of Theorem~\ref{thm:s-Phi-rho}. So we skip a detailed proof. 

\subsection{Alternative characterizations of the MC~ribbon}

Next, we discuss alternative characterization of the MC~ribbon. We first show that to compute the MC~ribbon, it suffices to restrict to a special class of functions $f(X_{[k]})$.\footnote{ After the MC ribbon was introduced in \cite{OurPaper} by the authors, the second author had an email exchange with Sudeep Kamath who claimed the statement of Proposition \ref{prop:lemmaMCrb}. However, Kamath's proof did not appear rigorous to the second author. Independently, this proposition was found and shown by the first author via a different proof technique.
 }

\begin{proposition}\label{prop:lemmaMCrb} 
In the definition of the MC~ribbon $\fS(X_1, \dots, X_k)$ we may restrict to functions $f_{X_{[k]}}$ that of the form $f(X_{[k]})=f_1(X_1)+\cdots +f_k(X_k)$ where $f_i$ is a function of $X_i$ only.
\end{proposition}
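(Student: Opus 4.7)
The plan is to reformulate the MC~ribbon inequality as a positive-semidefiniteness statement about a self-adjoint operator on the finite-dimensional Hilbert space $\mathcal{H} := L^2(X_{[k]})$ with inner product $\langle f,g\rangle = \E[fg]$, and then peel off the orthogonal complement of the additive subspace. Let $\mathcal{H}_0 \subseteq \mathcal{H}$ be the mean-zero subspace and let $T_i : \mathcal{H} \to \mathcal{H}$ be the orthogonal projection $T_i f = \E[f\mid X_i]$ onto $L^2(X_i)$. Since shifting $f$ by a constant leaves both sides of~\eqref{eq:def-mc-variance-0} unchanged, it is enough to test $f\in\mathcal{H}_0$, where $\Var[f]=\|f\|^2$ and $\Var[\E[f\mid X_i]]=\langle f,T_i f\rangle$. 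Hence $\lambda\in\fS(X_{[k]})$ is equivalent to $\langle f,A_\lambda f\rangle \leq \|f\|^2$ for every $f\in\mathcal{H}_0$, where $A_\lambda := \sum_i \lambda_i T_i$. Let $V_i\subset \mathcal{H}_0$ denote the mean-zero functions of $X_i$, and set $V := V_1+\cdots+V_k$; the task is to show that it is enough to check this inequality on $V$.

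The key invariance is that $T_i$ annihilates $V^\perp\cap\mathcal{H}_0$. Indeed, if $g\in\mathcal{H}_0$ with $g\perp V$, then for each mean-zero $h\in L^2(X_i)\subseteq V$ the tower property gives $\langle T_i g,h\rangle=\langle g,h\rangle=0$; thus $T_ig\in L^2(X_i)$ is orthogonal to every mean-zero function of $X_i$, so it is a constant, and $\E[T_ig]=\E g=0$ forces $T_ig=0$. Conversely, for $v\in V$ each $T_iv$ lies in $V_i\subseteq V$. Therefore the decomposition $\mathcal{H}_0 = V \oplus (V^\perp\cap\mathcal{H}_0)$ is $A_\lambda$-invariant, with $A_\lambda$ vanishing on the second summand.

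Given an arbitrary $f\in\mathcal{H}_0$, write $f=f_V+f_\perp$ with $f_V\in V$ and $f_\perp\in V^\perp\cap\mathcal{H}_0$. Then $A_\lambda f = A_\lambda f_V\in V$, so $\langle f_\perp, A_\lambda f\rangle=0$ and
\begin{align*}
\langle f, A_\lambda f\rangle = \langle f_V, A_\lambda f_V\rangle = \sum_{i=1}^k\lambda_i\|T_i f_V\|^2,
\end{align*}
while the Pythagorean identity gives $\|f\|^2 = \|f_V\|^2+\|f_\perp\|^2 \geq \|f_V\|^2$. Subtracting yields $\|f\|^2-\langle f,A_\lambda f\rangle \geq \|f_V\|^2 - \langle f_V, A_\lambda f_V\rangle$, so positivity on $V$ propagates to positivity on all of $\mathcal{H}_0$. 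To match the statement, note that $V$ is exactly the set of mean-zero functions of the form $f_1(X_1)+\cdots+f_k(X_k)$, since any such sum equals $\sum_i(f_i(X_i)-\E f_i)$ plus a constant, and shifting by constants does not affect either side of~\eqref{eq:def-mc-variance-0}.

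The main obstacle is the key invariance lemma $T_i|_{V^\perp\cap\mathcal{H}_0}=0$; this is the step that genuinely uses the structure of conditional expectations through the tower property together with the fact that $L^2(X_i)$ is spanned by $V_i$ and the constants. Everything else reduces to linear algebra on the $A_\lambda$-invariant decomposition $\mathcal{H}_0=V\oplus (V^\perp\cap\mathcal{H}_0)$.
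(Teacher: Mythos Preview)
Your proof is correct and follows essentially the same route as the paper's: both orthogonally decompose the mean-zero $L^2$ space into the span $V$ of additive functions and its orthogonal complement, observe that the conditional-expectation projections $T_i$ annihilate $V^\perp\cap\mathcal{H}_0$ (the paper's $\mathcal{U}_{X_{[k]}}$), and conclude via the Pythagorean identity that the MC-ribbon inequality for a general $f$ reduces to that for its additive component $f_V$. Your operator-theoretic packaging with $A_\lambda = \sum_i\lambda_iT_i$ and the invariant decomposition is slightly more polished, but the underlying argument is identical.
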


\begin{proof}
Let $\mathcal F_{X_{[k]}}$ be the linear space of all functions over $\mathcal X_1\times \dots \times \mathcal X_k$ equipped with the inner product:
$$\langle f, g\rangle := \E[fg].$$ 
Let $\mathcal F_{X_i}\subseteq \mathcal F_{X_{[k]}}$ be the linear space of all functions that depend only on $X_i$. Observe that for any $i\neq j$, 
$\mathcal F_{X_i}\cap \mathcal F_{X_j}=\text{\rm span}\{ \mathbf{1}_{X_{[k]}}\}$ is the set of constant functions. Moreover, $\mathcal F_{X_i}^0= \mathcal F_{X_i}\cap \mathbf{1}^{\perp}$ consists of all \emph{zero-mean} functions that depend only on $X_i$. Putting these together we have
$$\mathcal F_{X_{[k]}} = \text{\rm span}\{ \mathbf{1}_{X_{[k]}}\} \oplus\left(\bigoplus_{i=1}^k \mathcal F_{X_i}^0 \right)\oplus \mathcal U_{X_{[k]}},$$
where 
$$\mathcal U_{X_{[k]}}=  \bigcap_{i=1}^k \mathcal F_{X_i}^\perp,$$
is the set of functions that are orthogonal to all functions that depend only on one of $X_i$'s. 
Observe that, for any function $u_{X_{[k]}}\in \mathcal U_{X_{[k]}}$ we have $\E[u|X_i]=0$ for all $i\in[k]$,  because $\E(\E[u|X_i]^2)=\E(u\E[u|X_i])$ vanishes since it is the inner product between $u$ and $\E[u|X_i]$, which is a function of $X_i$.

Let $f\in \mathcal F_{X_{[k]}}$ be an arbitrary function with $\E f=0$. By the above decomposition there exist $g_{i}\in \mathcal F_{X_i}^0$ and $u_{X_{[k]}}\in \mathcal U_{X_{[k]}}$ such that 
$$f_{X_{[k]}}=\E[f] +g_{1}+\cdots g_k + u_{X_{[k]}} = g_{1}+\cdots g_k + u_{X_{[k]}}.$$
Let $\tilde{f}=\sum_{i}g_{i}$. Then, $\mathbb{E}[f|X_i]=\mathbb{E}[\tilde{f}|X_i]$ simply because $\E[u|X_i]=0$. Thus,
\begin{align*}
\Var[f]&=\Var[u]+\Var[\tilde{f}],\\
\Var\big[\mathbb{E}[f|X_i]\big]&=\Var\big[\mathbb{E}[\tilde{f}|X_i]\big].
\end{align*}
Now fixing $\tilde f$, since $\Var[u]\geq 0$, the inequality~\eqref{eq:def-mc-variance-0} holds for all $f= \tilde f + u$ if and only if it holds for $\tilde f$.
This completes the proof.

\end{proof}

We can further simplify calculation of the MC~ribbon.

\begin{theorem}\label{thm:mc-equivalent-rep}
The MC~ribbon  $\fS(X_1, \dots, X_k)$ is equal to the set of $k$-tuples $(\lambda_1, \lambda_2, \cdots, \lambda_k)\in[0,1]^k$ such that for all functions $f_1(X_1), \dots, f_k(X_k)$, we have
\begin{align}\Var\left[f_1+\cdots +f_k\right]\leq \sum_{i=1}^{k}\frac{1}{\lambda_i}\Var[f_i].\label{eqn:fc2def}
\end{align}
\end{theorem}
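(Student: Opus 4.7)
The plan is to use Proposition~\ref{prop:lemmaMCrb} to reduce the MC ribbon definition to functions of the form $f = f_1 + \cdots + f_k$ with $f_i$ a function of $X_i$ only, and then to recognize that \eqref{eq:def-mc-variance-0} and \eqref{eqn:fc2def} are dual to each other, linked by two short Cauchy--Schwarz arguments.

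I would work in the Hilbert space of zero-mean functions of $X_{[k]}$ equipped with the inner product $\langle f, g\rangle = \E[fg]$, so that $\Var[f] = \|f\|^2$ on this space. Let $V_i$ denote the subspace of zero-mean functions of $X_i$ alone, and let $\pi_i$ be the orthogonal projection onto $V_i$; then $\pi_i f = \E[f|X_i]-\E[f]$, and $\Var[\E[f|X_i]] = \|\pi_i f\|^2$. With this notation, \eqref{eq:def-mc-variance-0} reads $\|f\|^2 \geq \sum_i \lambda_i \|\pi_i f\|^2$, while \eqref{eqn:fc2def} reads $\|f_1+\cdots+f_k\|^2 \leq \sum_i \lambda_i^{-1}\|f_i\|^2$ for $f_i\in V_i$. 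By Proposition~\ref{prop:lemmaMCrb} it is enough to check \eqref{eq:def-mc-variance-0} on the subspace $V_1+\cdots+V_k$.

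For the direction ``MC ribbon implies \eqref{eqn:fc2def}'', I would take $f_i\in V_i$, set $f=\sum_i f_i$, and use the orthogonal projection property $\langle f_i, f\rangle = \langle f_i, \pi_i f\rangle$ to write $\|f\|^2 = \sum_i \langle f_i, \pi_i f\rangle$. Bounding each term by $\|f_i\|\|\pi_i f\|$ via Cauchy--Schwarz and then splitting each factor as $(\lambda_i^{-1/2}\|f_i\|)(\lambda_i^{1/2}\|\pi_i f\|)$ before a second Cauchy--Schwarz across the sum gives $\|f\|^2 \leq \sqrt{\sum_i \lambda_i^{-1}\|f_i\|^2}\,\sqrt{\sum_i \lambda_i \|\pi_i f\|^2}$. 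Plugging the MC ribbon inequality into the second factor bounds it by $\|f\|$, and dividing out yields \eqref{eqn:fc2def}.

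For the reverse direction, assuming \eqref{eqn:fc2def} and given $f\in V_1+\cdots+V_k$, the trick is to apply \eqref{eqn:fc2def} to the test functions $f_i := \lambda_i \pi_i f \in V_i$, which produces $\|\sum_i \lambda_i \pi_i f\|^2 \leq \sum_i \lambda_i \|\pi_i f\|^2$. A single Cauchy--Schwarz applied to the identity $\sum_i \lambda_i \|\pi_i f\|^2 = \langle f, \sum_i \lambda_i \pi_i f\rangle$ then closes the loop and recovers \eqref{eq:def-mc-variance-0}. Edge cases such as some $\lambda_i=0$ or $\|f\|=0$ are handled routinely by the usual conventions. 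I do not anticipate any serious obstacle; the genuine content of the theorem is already contained in Proposition~\ref{prop:lemmaMCrb}, whose role in isolating the ``diagonal'' subspace $V_1+\cdots+V_k$ is what makes this duality work.
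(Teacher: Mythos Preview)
Your argument is correct. Both directions go through cleanly: in the forward step the identity $\|f\|^2=\sum_i\langle f_i,\pi_if\rangle$ combined with the two Cauchy--Schwarz inequalities and the MC ribbon bound gives exactly~\eqref{eqn:fc2def}; in the reverse step the choice $f_i=\lambda_i\pi_if$ yields $\bigl\|\sum_i\lambda_i\pi_if\bigr\|^2\le\sum_i\lambda_i\|\pi_if\|^2$, and the chain $\sum_i\lambda_i\|\pi_if\|^2=\langle f,\sum_i\lambda_i\pi_if\rangle\le\|f\|\,\bigl\|\sum_i\lambda_i\pi_if\bigr\|\le\|f\|\sqrt{\sum_i\lambda_i\|\pi_if\|^2}$ closes correctly to $\sum_i\lambda_i\|\pi_if\|^2\le\|f\|^2$.

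Your route is genuinely different from the paper's. The paper also invokes Proposition~\ref{prop:lemmaMCrb}, but then picks orthonormal bases for the spaces $V_i$, assembles the Gram matrix $M=(\E[f_{i_1j_1}f_{i_2j_2}])$, and rewrites the two conditions as the matrix inequalities $M\ge M\Lambda M$ (for the MC ribbon) and $M\le\Lambda^{-1}$ (for~\eqref{eqn:fc2def}); their equivalence is then deduced from the operator monotonicity of $t\mapsto -t^{-1}$. Your coordinate-free Cauchy--Schwarz argument is more elementary---it avoids basis choices and the operator-monotone input, and it handles degenerate $\lambda_i$ or singular Gram matrices without extra care. The paper's matrix formulation, on the other hand, makes the structure $M\le\Lambda^{-1}$ explicit, which is exactly what is exploited in the later computations (e.g.\ Theorem~\ref{thm:Gaussian} and Proposition~\ref{prop:binary-binary-ternary}); that structural byproduct is not visible from your argument.
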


Before giving a proof of this theorem, let us discuss some of its implications.

Firstly, computation of the MC~ribbon using its characterization given in this theorem is much easier because
the dimension of the space of all functions on $\prod_{i=1}^k\mathcal{X}_i$ quickly becomes very large as we increase $k$, the number of variables. However, in the characterization~\eqref{eqn:fc2def} of the MC~ribbon we should search on a space of functions whose dimension scales linearly with $k$. Moreover,  the variance of a conditional expectation in the original definition is replaced by a simple variance that is easier to compute. 

Secondly, Equation~\eqref{eqn:fc2def} can be understood as a \emph{strong Cauchy-Schwarz inequality}. Letting $f_i(X_i)$'s to be arbitrary functions with zero mean, by the Cauchy-Schwarz inequality we have
\begin{align}
\E[\sum_{i=1}^k f_i]^2\leq k\sum_{i=1}^{k}\E[f_i]^2.
\label{eqn:caschwa}
\end{align}
Then by Theorem~\ref{thm:mc-equivalent-rep},  the $k$-tuple $(\frac 1k, \frac 1k, \cdots, \frac 1k)$ belongs to $\fS(X_{[k]})$. Then the MC~ribbon characterizes the extent to which this inequality can be strengthened.

Thirdly, from the definition of the MC~ribbon it is clear that $\fS (X_{[k]})$ is convex. Theorem~\ref{thm:mc-equivalent-rep} says that the set of $k$-tuples $(\lambda_1^{-1}, \dots, \lambda_k^{-1})$ such that $\lambda_{[k]}\in\fS(X_{[k]})$, is convex too. This is a fact that is not clear from the definition of the MC~ribbon.


\begin{proof}
Our proof uses Proposition~\ref{prop:lemmaMCrb}. Consider the space of functions $\mathcal F_{X_i}^0$ of all zero-mean functions of $X_i$ (as defined in the proof of Proposition~\ref{prop:lemmaMCrb}). With no loss of generality, we can assume that $p_{X_i}(x_i)>0$ for all $x_i$. Hence,  $\mathcal F_{X_i}^0$ is a linear vector space with dimension $|\mathcal X_i|-1$. Let $\{f_{ij}: j=1,\dots, |\mathcal X_i|-1\}$ be an orthonormal basis for  $\mathcal F_{X_i}^0$.
Thus, $f_{ij}$ is a function of $X_i$ with zero mean and unit variance, and we have $\E[f_{ij}f_{ij'}]=0$ for $j\neq j'$.
Define
$$m_{i_1j_1;i_2j_2}=m_{i_2j_2;i_1j_1}=\mathbb{E}\big[f_{i_1j_1}(X_{i_1})f_{i_2j_2}(X_{i_2})\big].$$

Any arbitrary zero-mean function $g_i(X_i)$ can be expressed as
$$g_i=\sum_{j=1}^{|\mathcal X_i|-1}c_{ij}f_{ij},$$
for some real coefficients $c_{ij}$. Then we have
$\Var[g_i]=\sum_{j} c_{ij}^2,$
and
\begin{align*}
\E\big[g_{i_1}g_{i_2}\big] =\sum_{j_1, j_2} c_{i_1j_1}c_{i_2j_2}m_{i_1j_1;i_2j_2}.
\end{align*}
Moreover, for any function $u$ with zero mean, $\Var[\E[u|X_i]]$, i.e., the squared length of $\E[u|X_i]$, is given by $\sum_{j}\big(\mathbb{E}[u f_{ij}]\big)^2.$
Hence, 
\begin{align}
\Var\bigg[\mathbb{E}\bigg[\sum_{i_2=1}^kg_{i_2}\Big|X_{i_1}\bigg]\bigg]&=\sum_{j_1}\bigg(\mathbb{E}\bigg[\sum_{i_2=1}^kg_{i_2} f_{i_1j_1}\bigg]\bigg)^2\\
&=\sum_{j_1}\bigg(\mathbb{E}\bigg[\sum_{i_2}\sum_{j_2}c_{i_2j_2}f_{i_2j_2} f_{i_1j_1}\bigg]\bigg)^2\\
&=\sum_{j_1}\bigg(\sum_{i_2}\sum_{j_2}c_{i_2j_2}m_{i_1j_1;i_2j_2}\bigg)^2.
\end{align}
Putting all these together and using Proposition~\ref{prop:lemmaMCrb}
we find that the MC~ribbon is the set of $k$-tuples $\lambda_{[k]}$ such that for all $c_{ij}$'s we have
\begin{align}
\sum_{i_1,i_2,j_1,j_2} c_{i_1j_1}c_{i_2j_2}m_{i_1j_1;i_2j_2}\geq \sum_{i_1}\lambda_{i_1}\sum_{j_1}\bigg(\sum_{i_2}\sum_{j_2}c_{i_2j_2}m_{i_1j_1;i_2j_2}\bigg)^2.\label{eqn:eqn:eqn1}
\end{align}
Similarly, the ribbon defined by~\eqref{eqn:fc2def} is the set of $k$-tuples $\lambda_{[k]}$ such that for all $c_{ij}$'s we have
\begin{align}
\sum_{i_1,i_2,j_1,j_2} c_{i_1j_1}c_{i_2j_2}m_{i_1j_1;i_2j_2}\leq \sum_{i}\frac{1}{\lambda_{i}}\sum_{j}c_{ij}^2.
\label{eqn:eqn:eqn2}
\end{align}
Therefore, it remains to show that the ribbons defined by equations \eqref{eqn:eqn:eqn1} and \eqref{eqn:eqn:eqn2} are the same.

Let $M$ be the matrix whose $(i_1j_1, i_2j_2)$ entry is $m_{i_1j_1;i_2j_2}$. Note that $M$ is positive definite since it is a Gram matrix. Also let $\Lambda$ be the diagonal matrix whose $(ij, ij)$ entry equals  $\lambda_{i}$.
Then, $\lambda_{[k]}$ satisfies~\eqref{eqn:eqn:eqn2} for all $c_{ij}$'s iff $M\leq \Lambda^{-1}$, i.e., iff $(\Lambda^{-1}-M)$ is positive semidefinite. By the operator monotonicity of the function $t\mapsto -t^{-1}$, this is equivalent with $M^{-1}\geq \Lambda$ as well as $M\geq M\Lambda M$. Now a straightforward calculation shows that $M\geq M\Lambda M$ is equivalent with~\eqref{eqn:eqn:eqn1} for all $c_{ij}$'s. This completes the proof. 
\end{proof}

Let us go back to the characterization of MC~ribbon given in  Proposition~\ref{prop:lemmaMCrb}. Let $f= f_1(X_1) +\cdots +f_k(X_k)$ be such that $f_i(X_i)$ has zero mean for $i=1, \dots, k$.  Observe that $\Var[\mathbb{E}[f|X_i]]$ is the squared length of the projection of $f$ onto the space of all zero-mean functions of $X_i$, namely $\mathcal F_{X_i}^0$. Then 
$\Var[\mathbb{E}[f|X_i]]$ can be bounded from below by the squared of the inner product of $f$ with some unit vector in $\mathcal F_{X_i}^0$. 
 In particular, for the choice of  
\begin{align}\label{eq:hat-f-normalized}
\hat{f}_i=\frac{f_i}{\sqrt{\Var[f_i]}},
\end{align}
as a unit vector in $\mathcal F_{X_i}^0$ we have
$$\Var[\mathbb{E}\big[f|X_i]\big]\geq 
\mathbb{E}\big[f \hat{f}_i\big]^2.$$
The following theorem shows that using this lower bound in Proposition~\ref{prop:lemmaMCrb} gives another equivalent representation of the MC~ribbon.

\begin{theorem} \label{thm:2-rep-MC-ribbon}
Let $\fS'(X_1, \dots, X_k)$ be the set of the $k$-tuples $(\lambda_1, \dots, \lambda_k)\in[0,1]^k$ such that for all functions $f_i(X_i)$, $i=1, \dots, k$, we have
$$\Var[f]\geq \sum_{i=1}^{k}\lambda_i\mathbb{E}\big[f\hat{f}_i\big]^2,$$
where $f=\sum_{i=1}^k f_i$ and $\hat f_i$ is given by~\eqref{eq:hat-f-normalized}. Then we have $\fS'(X_{[k]})=\fS(X_{[k]})$.
\end{theorem}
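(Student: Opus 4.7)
The plan is to translate both $\fS$ and $\fS'$ into semidefinite matrix inequalities on the same family of $k\times k$ Gram matrices, and then appeal to a standard linear-algebraic equivalence.

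For the easy inclusion $\fS(X_{[k]})\subseteq \fS'(X_{[k]})$, note that Cauchy--Schwarz yields $\mathbb{E}[f\hat{f}_i]^2\leq \Var[\mathbb{E}[f|X_i]]$: after shifting each $f_i$ to be zero-mean (harmless since $\hat{f}_i$, $\Var[f_i]$ and $\Var[f]$ are translation-invariant), $\hat{f}_i$ is a unit vector in $\mathcal{F}_{X_i}^0$ and $\mathbb{E}[f\hat{f}_i]=\langle \mathbb{E}[f|X_i],\hat{f}_i\rangle$ in $L^2(X_i)$. Hence any tuple satisfying the $\fS$-inequality automatically satisfies the (weaker) $\fS'$-inequality, so $\fS(X_{[k]})\subseteq \fS'(X_{[k]})$.

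For the reverse inclusion $\fS'(X_{[k]})\subseteq \fS(X_{[k]})$, I may assume $\lambda_i>0$ for all $i$ (coordinates with $\lambda_i=0$ can be dropped trivially), and set $\Lambda=\mathrm{diag}(\lambda_1,\ldots,\lambda_k)$. Parametrise the functions in both ribbons as $f_i=a_iu_i$ with $a_i\in\mathbb{R}$ and $u_i\in\mathcal{F}_{X_i}^0$ of unit norm, and introduce the symmetric PSD Gram matrix $M(u)\in\mathbb{R}^{k\times k}$ with entries $M_{ij}(u)=\mathbb{E}[u_iu_j]$; let $\mathcal{M}$ be the set of all such Gram matrices. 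A direct computation gives $\Var[\sum_j a_j u_j]=a^T M(u)\, a$, $\Var[f_i]=a_i^2$, and $\mathbb{E}[f\hat{f}_i]^2=(M(u)\,a)_i^2$. Varying $a\in\mathbb{R}^k$ and the unit vectors $u_i$, the $\fS'$-condition rewrites as $M\geq M\Lambda M$ (i.e.\ $M-M\Lambda M$ is PSD) for every $M\in\mathcal{M}$, while by Theorem~\ref{thm:mc-equivalent-rep} the $\fS$-condition rewrites as $M\leq \Lambda^{-1}$ for every $M\in\mathcal{M}$.

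It remains to establish the purely matrix-theoretic fact (already used in the proof of Theorem~\ref{thm:mc-equivalent-rep}) that for any PSD $M$ and any positive diagonal $\Lambda$, $M\leq \Lambda^{-1}\iff M\geq M\Lambda M$. Factoring $M-M\Lambda M=M^{1/2}(I-M^{1/2}\Lambda M^{1/2})M^{1/2}$ and $\Lambda^{-1}-M=\Lambda^{-1/2}(I-\Lambda^{1/2}M\Lambda^{1/2})\Lambda^{-1/2}$, both inequalities reduce to the requirement that $\Lambda^{1/2}M\Lambda^{1/2}$, or equivalently the spectrally-identical $M^{1/2}\Lambda M^{1/2}$, has all eigenvalues in $[0,1]$. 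The one mild subtlety is that the sandwich $M^{1/2}(I-M^{1/2}\Lambda M^{1/2})M^{1/2}\geq 0$ a priori only forces $I-M^{1/2}\Lambda M^{1/2}\geq 0$ on the range of $M$, but here the bracketed factor equals $I$ on $\ker M$ and is therefore automatically PSD on the whole space. The main obstacle in the plan is the bookkeeping in step three; once both ribbons are cast as semidefinite conditions on the same family $\mathcal{M}$, the equivalence $\fS=\fS'$ is immediate.
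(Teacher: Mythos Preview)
Your proposal is correct and follows essentially the same route as the paper: both reduce $\fS$ and $\fS'$ to the semidefinite conditions $M\leq\Lambda^{-1}$ and $M\geq M\Lambda M$ on the family of Gram matrices $M$ of unit vectors $u_i\in\mathcal{F}_{X_i}^0$, and then invoke the equivalence of these two matrix inequalities. The only (minor) difference is in how that matrix equivalence is justified: the paper appeals to operator monotonicity of $t\mapsto -t^{-1}$, whereas you give a direct factorization argument that also handles singular $M$ cleanly.
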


The proof of this theorem is based on similar ideas as in the proof of Theorem~\ref{thm:mc-equivalent-rep}, so we leave it for Appendix~\ref{app:proof-thm:2-rep-MC-ribbon}.

\subsection{Extreme MC~ribbons}

It is well-known that $\rho(X, Y)=0$ if and only if $X$ and $Y$ are independent. The following proposition is a generalization of this fact. 

\begin{proposition}\label{prop:MC-extreme1}
$\fS(X_{1}, \dots, X_k)$ is equal to $[0,1]^k$ if and only if $X_1, X_2, \dots, X_k$ are \emph{pairwise} independent.
\end{proposition}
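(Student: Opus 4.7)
The plan is to reduce the statement to the single point $(1,1,\dots,1)$ of the ribbon, and then invoke the bilinear characterization of the MC~ribbon from Theorem~\ref{thm:mc-equivalent-rep} to turn the condition into a statement about covariances.

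First I would observe that $\fS(X_{[k]})\subseteq[0,1]^k$ always holds, and that the defining inequality $\Var[f]\geq\sum_i\lambda_i\Var[\E[f|X_i]]$ only gets weaker when any $\lambda_i$ is decreased. Consequently $\fS(X_{[k]})=[0,1]^k$ if and only if the extreme tuple $(1,1,\dots,1)$ lies in $\fS(X_{[k]})$. So the whole problem reduces to characterizing when this single tuple belongs to the ribbon.

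By Theorem~\ref{thm:mc-equivalent-rep}, $(1,1,\dots,1)\in\fS(X_{[k]})$ is equivalent to the inequality
\[
\Var[f_1(X_1)+\cdots+f_k(X_k)]\leq \sum_{i=1}^k\Var[f_i(X_i)]
\]
holding for all choices of univariate functions $f_i$. Expanding the left-hand side shows this is the same as requiring $\sum_{i<j}\Cov(f_i(X_i),f_j(X_j))\leq 0$ for all such $f_1,\dots,f_k$. The ``if'' direction is then immediate: if the $X_i$ are pairwise independent, every pairwise covariance vanishes, the inequality holds with equality, and we obtain $\fS(X_{[k]})=[0,1]^k$.

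For the ``only if'' direction I would fix any pair $i_0<j_0$ and set $f_i\equiv 0$ for $i\notin\{i_0,j_0\}$, which collapses the sum to the single term $\Cov(f_{i_0}(X_{i_0}),f_{j_0}(X_{j_0}))\leq 0$. Replacing $f_{i_0}$ by $-f_{i_0}$ forces equality, so $\Cov(f_{i_0}(X_{i_0}),f_{j_0}(X_{j_0}))=0$ for \emph{every} pair of functions. Taking $f_{i_0}$ and $f_{j_0}$ to be indicators of single points $x_{i_0}$ and $x_{j_0}$ gives $p(x_{i_0},x_{j_0})=p(x_{i_0})p(x_{j_0})$, i.e., $X_{i_0}$ and $X_{j_0}$ are independent. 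Since this holds for every pair, we obtain pairwise independence. There is no serious obstacle here once Theorem~\ref{thm:mc-equivalent-rep} is available: the only mildly delicate step is recognizing that the global inequality $\sum_{i<j}\Cov(f_i,f_j)\leq 0$ decouples into vanishing of each pair by specializing to two nonzero functions and using the sign-flip symmetry.
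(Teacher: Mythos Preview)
Your proposal is correct and follows essentially the same approach as the paper: both reduce to the tuple $(1,\dots,1)$ via Theorem~\ref{thm:mc-equivalent-rep}, then specialize to two nonzero functions to isolate a single pair. Your version is simply more explicit in the final step---using the sign-flip to force $\Cov(f_{i_0},f_{j_0})=0$ and then indicator functions to extract the product formula---whereas the paper just asserts that $\Var[f_i+f_j]\leq\Var[f_i]+\Var[f_j]$ for all $f_i,f_j$ ``means that $X_i$ and $X_j$ are independent.''
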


Note that the HC~ribbon is equal to $[0,1]^k$ if $X_i$'s are \emph{mutually} independent. Thus, MC~ribbon and HC~ribbon behave completely differently and provide different characterizations of the correlations in $(X_1, \dots, X_k)$ when $k\geq 3$.

The tensorization property of the MC~ribbon implies that $$\fS(X_1, X_2, \dots, X_k)=\fS(M_1X_1, M_2X_2, \dots, M_kX_k)$$
if $M_i$s are independent of $X_i$s, and $M_i$s are pairwise independent. 
 This fact shows that we can prove infeasibility for the non-interactive distribution simulation problem \cite{KamathAnantharam}, in the presence of ``private" randomnesses $M_j$ that are pairwise (and not mutually independent).

\begin{proof}
We may use Theorem~\ref{thm:mc-equivalent-rep}. If $X_i$'s are pairwise independent we clearly have $\Var[f_1+\cdots + f_k]= \Var[f_1]+\cdots + \Var[f_k]$ and then $\fS(X_{[k]}) = [0,1]^k$. Conversely, suppose that $\Var[f_1+\cdots + f_k]\leq \Var[f_1]+\cdots + \Var[f_k]$, for all $f_i(X_i)$'s. By letting $f_\ell$'s to be equal to the zero function except the $i$-th and $j$-th ones, we find that $\Var[f_i+f_j]\leq \Var[f_i]+\Var[f_j]$ for all $f_i(X_i)$ and $f_j(X_j)$. This means that $X_i$ and $X_j$ are independent. 
\end{proof}

Recall that $\fS(X_1, \dots, X_k)$ always contains $\{\lambda_{[k]}|\, \lambda_i\geq 0,\, \sum_{i}\lambda_i\leq 1\}$. The following proposition characterizes the other extreme for the MC~ribbon.

\begin{proposition}
$\fS(X_{[k]})=\{\lambda_{[k]}|\, \lambda_i\geq 0,\, \sum_{i}\lambda_i\leq 1\}$  if and only if $X_i$'s have a \emph{common part}, i.e., there are \emph{non-constant} functions $g_i(X_i)$, $i=1, \dots, k$, such that $g_1(X_1)=\cdots =g_k(X_k)$ with probability one. \label{prop:MC-common-part}
\end{proposition}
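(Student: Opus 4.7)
The forward direction is essentially immediate. If $g_1(X_1)=\cdots=g_k(X_k)$ almost surely with the $g_i$ non-constant, then the common random variable $f:=g_1$ is simultaneously a function of every $X_i$, so $\E[f|X_i]=f$ and $\Var[\E[f|X_i]]=\Var[f]>0$. Substituting $f$ into~\eqref{eq:def-mc-variance-0} forces $\sum_i\lambda_i\le 1$, and together with the already noted inclusion $\{\lambda_{[k]}:\lambda_i\ge 0,\,\sum_i\lambda_i\le 1\}\subseteq\fS(X_{[k]})$, we obtain equality.

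For the converse I would argue contrapositively: assuming no common part, I will exhibit a point $\lambda\in\fS(X_{[k]})$ with $\sum_i\lambda_i>1$. The natural candidate is the symmetric point $\lambda=\tfrac{1+\epsilon}{k}(1,\ldots,1)$ for some small $\epsilon>0$. By the alternative characterization in Theorem~\ref{thm:mc-equivalent-rep}, this $\lambda$ lies in $\fS(X_{[k]})$ iff
$$\Var\Bigl[\sum_{i=1}^k g_i\Bigr]\le \frac{k}{1+\epsilon}\sum_{i=1}^k\Var[g_i]$$
for every choice of $g_i=g_i(X_i)$. Setting
$$C\;:=\;\sup_{(g_i)}\frac{\Var[\sum_i g_i]}{\sum_i\Var[g_i]}$$
(supremum over tuples with positive denominator), it thus suffices to show $C<k$, because then any $\epsilon\in(0,k/C-1]$ yields the desired $\lambda$.

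The crux is therefore the equality analysis $C=k$. The triangle inequality in $L^2$ gives $\Var[\sum_i g_i]\le(\sum_i\sqrt{\Var[g_i]})^2$, and then Cauchy--Schwarz gives $(\sum_i\sqrt{\Var[g_i]})^2\le k\sum_i\Var[g_i]$, so $C\le k$ unconditionally. Since the $X_i$'s are finite-valued, the Rayleigh-type supremum defining $C$ is attained at some $(g_i^*)$. If $C=k$ then equality must hold in both inequalities: triangle-equality forces the zero-mean parts $v_i:=g_i^*-\E g_i^*$ to be nonnegative scalar multiples of a single vector in $L^2$, while Cauchy--Schwarz-equality forces the norms $\|v_i\|$ to be independent of $i$. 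Combined, these give $v_1=v_2=\cdots=v_k$, and this common vector must be nonzero (else the ratio is $0$, not $k$). But then $v_1$ is a single non-constant element of $L^2$ representable as a function of each individual $X_i$, contradicting the no-common-part hypothesis. Hence $C<k$, completing the argument.

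The main obstacle is really only this equality-case bookkeeping in the two-step Cauchy--Schwarz chain; tracking both conditions (collinearity with \emph{nonnegative} coefficients, and equal norms) is what forces the conclusion that $C=k$ would require an honest common part. Everything else---the reduction via Theorem~\ref{thm:mc-equivalent-rep}, compactness to get a maximizer in finite dimensions, and the perturbation turning $C<k$ into a positive $\epsilon$---is routine.
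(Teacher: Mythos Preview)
Your argument is correct, but it follows a different path from the paper's own proof. You pass through Theorem~\ref{thm:mc-equivalent-rep} and study the diagonal via the dual ratio $C=\sup \Var[\sum_i g_i]/\sum_i\Var[g_i]$, then analyze equality in the two-step chain (triangle inequality in $L^2$, followed by Cauchy--Schwarz on the norms) to force $v_1=\cdots=v_k$ at a maximizer with $C=k$. The paper instead works directly with the defining inequality~\eqref{eq:def-mc-variance-0}: it observes that the largest $\lambda$ with $(\lambda,\dots,\lambda)\in\fS(X_{[k]})$ equals $\inf_f \Var[f]/\sum_i\Var[\E[f|X_i]]$, and if this infimum equals $1/k$ then at an extremizer the single inequality $\Var[\E[f|X_i]]\le\Var[f]$ is tight for every $i$, whence $\Var[f|X_i]=0$ and $f$ is $X_i$-measurable for each $i$. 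So the paper needs only the equality case of the law of total variance rather than the triangle/Cauchy--Schwarz pair, and it avoids invoking Theorem~\ref{thm:mc-equivalent-rep}; your route, on the other hand, makes the role of the additive structure $f=\sum_i g_i$ more transparent and shows that the dual characterization is equally effective for this extremal question.
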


\begin{proof}
If $X_i$'s have a common part, for $f(X_{[k]})=g_1(X_1)=\cdots =g_k(X_k)$ we have 
$\Var[f] = \Var[\E[f|X_i]]$. Therefore, by the definition of the MC~ribbon we have $\fS(X_{[k]})=\{\lambda_{[k]}|\, \lambda_i\geq 0,\, \sum_{i}\lambda_i\leq 1\}$.

Conversely, assume that $\fS(X_{[k]})=\{\lambda_{[k]}: \lambda_i\geq 0, \sum_{i}\lambda_i\leq 1\}$. Then, the maximum value of $\lambda$ such that $(\lambda, \lambda, \dots, \lambda)$ is in $\fS(X_{[k]})$, is equal to $1/k$. This maximum value of $\lambda$ can be written as
$$\lambda_{\max}=\inf_{f} \frac{
\Var[f]}{\sum_{i=1}^{k}\Var\big[\mathbb{E}[f|X_i]\big]}.$$
Observe that by scaling $f$, we can restrict the infimum to functions satisfying $\Var[f]=1$. Then by a compactness argument, the infimum is achieved at some (non-constant) function $f$. For this function we have 
$\sum_{i=1}^{k}\Var\big[\mathbb{E}[f|X_i]\big]=k \Var[f].
$
This means that all the inequalities $\Var\big[\mathbb{E}[f|X_i]\big]\leq\Var[f]$ are equality for $f$. 
That is, by the law of total variance, we have
$\Var[f|X_i]=0$. In other words, $f$ is a function of $X_i$ for all $i\in[k]$, and a common part.

\end{proof}

\subsection{Examples}\label{sec:calculationMCribbon}

We now compute the MC~ribbon for some examples of $(X_1, \dots, X_k)$. We first focus on the bipartite case.

\begin{proposition}\label{prop:MC-bipartite}
For $k=2$ we have
\begin{align}\label{eq:MC-bipartite}
\fS(X_1, X_2)=\bigg\{(\lambda_1, \lambda_2)\in[0,1]^2 \Big|\, \Big(1-\frac{1}{\lambda_1}\Big)\Big(1-\frac{1}{\lambda_2}\Big)\geq \rho(X_1,X_2)^2\bigg\}.
\end{align}
\end{proposition}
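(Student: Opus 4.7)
The plan is to read off the characterization from Theorem~\ref{thm:mc-equivalent-rep} applied with $k=2$, which reduces the MC~ribbon to a correlation-coefficient bound that is exactly controlled by $\rho(X_1,X_2)$.

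First, by Theorem~\ref{thm:mc-equivalent-rep}, $(\lambda_1,\lambda_2)\in\fS(X_1,X_2)$ if and only if
\begin{align*}
\Var[f_1+f_2]\leq \tfrac{1}{\lambda_1}\Var[f_1]+\tfrac{1}{\lambda_2}\Var[f_2]
\end{align*}
for every $f_1=f_1(X_1)$ and $f_2=f_2(X_2)$. Since both sides are unchanged by shifting $f_i$ by a constant, I would assume without loss of generality that $\E[f_1]=\E[f_2]=0$, so $\Cov(f_1,f_2)=\E[f_1f_2]$. Expanding $\Var[f_1+f_2]=\Var[f_1]+\Var[f_2]+2\E[f_1f_2]$ and rearranging, the displayed inequality becomes
\begin{align*}
2\E[f_1f_2]\leq \bigl(\tfrac{1}{\lambda_1}-1\bigr)\Var[f_1]+\bigl(\tfrac{1}{\lambda_2}-1\bigr)\Var[f_2].
\end{align*}

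Next I would recast this as a quadratic form. Writing $a=\sqrt{\Var[f_1]}$, $b=\sqrt{\Var[f_2]}$, and $r=\E[f_1f_2]/(ab)$, the inequality reads
\begin{align*}
\bigl(\tfrac{1}{\lambda_1}-1\bigr)a^2-2rab+\bigl(\tfrac{1}{\lambda_2}-1\bigr)b^2\geq 0.
\end{align*}
Because $\lambda_i\in(0,1]$ gives $\tfrac{1}{\lambda_i}-1\geq 0$, this quadratic in $(a,b)$ is nonnegative for all choices of $a,b$ iff its discriminant is nonpositive, i.e.\ iff
\begin{align*}
r^2\leq \bigl(\tfrac{1}{\lambda_1}-1\bigr)\bigl(\tfrac{1}{\lambda_2}-1\bigr)=\bigl(1-\tfrac{1}{\lambda_1}\bigr)\bigl(1-\tfrac{1}{\lambda_2}\bigr).
\end{align*}
Quantifying over all admissible $f_1,f_2$, the condition becomes $\sup r^2\leq (1-1/\lambda_1)(1-1/\lambda_2)$. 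By the very definition~\eqref{eq:max-correlatoin-31} of maximal correlation, this supremum is exactly $\rho(X_1,X_2)^2$, yielding~\eqref{eq:MC-bipartite}.

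The only real subtlety, and the step that needs a brief separate argument, is the boundary behaviour: when one of the $\lambda_i$ vanishes the corresponding condition~\eqref{eqn:fc2def} is vacuous so $(\lambda_1,0)$ and $(0,\lambda_2)$ always lie in $\fS(X_1,X_2)$ and the right-hand side of~\eqref{eq:MC-bipartite} should be read with the natural limit convention $1-1/0=+\infty$. Similarly, the case $\lambda_i=1$ forces $1-1/\lambda_i=0$ on one factor, and the displayed condition then correctly collapses to $\rho(X_1,X_2)=0$. These edge cases aside, the argument is essentially a one-line discriminant computation once Theorem~\ref{thm:mc-equivalent-rep} is in hand; the main conceptual obstacle is noticing that the equivalent characterization in that theorem, rather than the original definition~\eqref{eq:def-mc-variance-0}, is the right tool, because it turns the bipartite ribbon into a pure Cauchy–Schwarz problem about pairs of scalar functions of $X_1$ and $X_2$.
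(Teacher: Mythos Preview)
Your proof is correct and follows essentially the same approach as the paper: both invoke Theorem~\ref{thm:mc-equivalent-rep}, expand $\Var[f_1+f_2]$, and reduce the resulting inequality to the defining optimization for $\rho(X_1,X_2)$. Your version is slightly more explicit in carrying out the discriminant computation and in handling the boundary cases $\lambda_i\in\{0,1\}$, but the core argument is identical.
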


It was proved in~\cite{OurPaper} that the right hand side in~\eqref{eq:MC-bipartite} always contains $\fS(X_1, X_2)$. Here we prove that indeed equality holds. 

\begin{proof}
Using Theorem~\ref{thm:mc-equivalent-rep}, $\fS(X_1, X_2)$ is equal to the set of pairs $(\lambda_1, \lambda_2)$ such that 
$$\Var[f_1 + f_2]\leq \frac{1}{\lambda_1} \Var[f_1] + \frac{1}{\lambda_2}\Var[f_2],$$
for all zero-mean functions $f_1(X_1)$ and $f_2(X_2)$. This inequality is equivalent to 
$$2\E[f_1f_2]\leq \Big(1-\frac{1}{\lambda_1}\Big)\Var[f_1] + \Big(1-\frac{1}{\lambda_2}\Big)\Var[f_2].$$
Then the desired result follows once we note that 
$$\rho(X, Y)= \max \frac{\E[g_1g_2]}{\sqrt{\Var[g_1] \, \Var[g_2]}},$$
where the maximum is over all zero-mean functions $g_1(X_1)$ and $g_2(X_2)$.
\end{proof}

Let us now consider computing the MC~ribbon for multivariate distributions, i.e., for $k\geq 3$. Observe that if $X_i$ is binary (taking values in a binary set), then there is a unique (up to a constant) function $f_i(X_i)$ that has zero mean. Then computing the MC~ribbon using Theorem~\ref{thm:mc-equivalent-rep} is not hard if $X_i$'s are all binary. See Theorem~\ref{thm:Gaussian} for details.

 \paragraph{Binary-Binary-Ternary:} 
Assume that $k=3$, and that $X_1,X_2$ are binary, and $X_3$ is ternary. 
Let $\rho_{ij}=\rho(X_i, X_j)$ be the maximal correlation coefficient between $X_i$ and $X_j$ for distinct $i,j\in\{1,2,3\}$. Since $X_1$ and $X_2$ are binary, there are unique (up to a sign) zero-mean functions $g_1(X_1)$ and $g_2(X_2)$ with unit variance. We choose the sign of such functions $g_1(X_1)$ and $g_2(X_2)$ such that 
\begin{align}\label{eq:g1g2-E}
\mathbb{E}[g_1g_2]=\rho_{12}.
\end{align}
Again by the uniqueness of $g_i$, $i=1,2$, and that $\eta_{\Psi}(X_i, X_3)= \rho^2(X_i, X_3)$, for $\Psi(t)=t^2$, we have
\begin{align}
\rho_{13}^2=\Var\big[\mathbb{E}[g_1|X_3]\big],
\end{align}
and
\begin{align}
\rho_{23}^2=\Var\big[\mathbb{E}[g_2|X_3]\big].
\end{align}
Finally define
\begin{align}\label{eq:def-r123}
r_{12\rightarrow 3}=\mathbb{E}\Big[\mathbb{E}\big[g_1|X_3\big]\,\mathbb{E}\big[g_2|X_3\big]\Big].
\end{align}
Assume that the distribution of $(X_1, X_2, X_3)$ is generic, so that the functions 
$\mathbb{E}[g_1|X_3]$ and $\mathbb{E}[g_2|X_3]$ are linearly independent.

\begin{proposition} \label{prop:binary-binary-ternary} 
Under the assumptions given above, $\fS(X_1, X_2, X_3)$ is the set of triples $(\lambda_1, \lambda_2, \lambda_3)\in [0,1]^3$ such that the followings hold:
$$\Big(\frac{1}{\lambda_1}-1\Big)\Big(\frac{1}{\lambda_3}-1\Big)\geq\rho_{13}^2, $$
$$\Big(\frac{1}{\lambda_2}-1\Big)\Big(\frac{1}{\lambda_3}-1\Big) \geq\rho_{23}^2,$$
$$\left(
\Big(\frac{1}{\lambda_1}-1\Big)\Big(\frac{1}{\lambda_3}-1\Big)-\rho_{13}^2
\right)\left(
\Big(\frac{1}{\lambda_2}-1\Big)\Big(\frac{1}{\lambda_3}-1\Big)-\rho_{23}^2\right)\geq \left(\Big(\frac{1}{\lambda_3}-1\Big)\rho_{12} +r_{12\rightarrow 3}  \right)^2.$$

\end{proposition}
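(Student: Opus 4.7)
The plan is to invoke Theorem~\ref{thm:mc-equivalent-rep} and parametrize all admissible functions explicitly. By that theorem, $(\lambda_1,\lambda_2,\lambda_3)\in\fS(X_1,X_2,X_3)$ iff $\Var[f_1+f_2+f_3]\le\sum_i\Var[f_i]/\lambda_i$ for all $f_i=f_i(X_i)$, and we may assume each has zero mean. Since $X_1,X_2$ are binary, the zero-mean function spaces of $X_1$ and $X_2$ are one-dimensional, so $f_1=a g_1$ and $f_2=b g_2$ for scalars $a,b$. Under the genericity hypothesis, $h_1:=\E[g_1|X_3]$ and $h_2:=\E[g_2|X_3]$ are linearly independent and hence form a basis of the two-dimensional zero-mean space of $X_3$, giving $f_3=c_1h_1+c_2h_2$.

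The next step is to compute the necessary inner products. By the tower rule $\E[g_iH]=\E[h_iH]$ whenever $H$ is a function of $X_3$, which yields $\E[g_1h_1]=\Var[h_1]=\rho_{13}^2$, $\E[g_2h_2]=\rho_{23}^2$, and $\E[g_1h_2]=\E[g_2h_1]=\E[h_1h_2]=r_{12\to 3}$, together with $\E[g_1g_2]=\rho_{12}$ from~\eqref{eq:g1g2-E}. Setting $\mu_i:=1/\lambda_i-1$, the required inequality becomes $\mathbf v^\top Q\mathbf v\ge 0$ for every $\mathbf v=(a,b,c_1,c_2)^\top\in\mathbb R^4$, where
$$Q=\begin{pmatrix}A & -B\\ -B & \mu_3 B\end{pmatrix},\qquad A=\begin{pmatrix}\mu_1 & -\rho_{12}\\ -\rho_{12} & \mu_2\end{pmatrix},\qquad B=\begin{pmatrix}\rho_{13}^2 & r_{12\to 3}\\ r_{12\to 3} & \rho_{23}^2\end{pmatrix}.$$
The crucial structural observation is that the \emph{same} matrix $B$ appears both as the off-diagonal block and, up to the scalar $\mu_3$, as the lower-right block; moreover, $B$ is the Gram matrix of the linearly independent pair $(h_1,h_2)$ and is therefore positive definite.

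Finally, genericity gives $\rho_{13},\rho_{23}>0$, and the first listed inequality then forces $\mu_3>0$, so $\mu_3 B$ is positive definite. Apply Schur complement to the lower-right block: $Q\succeq 0$ iff $A-(-B)(\mu_3 B)^{-1}(-B)=A-\tfrac{1}{\mu_3}B\succeq 0$, where the clean cancellation $B\cdot B^{-1}\cdot B=B$ is precisely the payoff of the shared-$B$ structure. Non-negativity of the two diagonal entries of $A-B/\mu_3$ translates into the first two inequalities of the proposition, and after multiplying through by $\mu_3^2$, non-negativity of its determinant is exactly the third. The main obstacle is spotting the shared-$B$ block form — without that coincidence the Schur complement would leave a messy rational expression rather than collapsing to $A-B/\mu_3$; the degenerate edge cases ($\mu_3=0$ or $\rho_{i3}=0$) are handled separately by direct inspection or a continuity argument.
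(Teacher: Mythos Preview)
Your proof is correct and follows essentially the same route as the paper's: both invoke Theorem~\ref{thm:mc-equivalent-rep}, parametrize the zero-mean functions via $g_1,g_2$ and the basis $\E[g_1|X_3],\E[g_2|X_3]$, arrive at the same $4\times 4$ quadratic form with block structure $\begin{pmatrix}A & -B\\ -B & \mu_3 B\end{pmatrix}$, and then reduce via the Schur complement to the $2\times 2$ condition $A-\mu_3^{-1}B\succeq 0$. The only cosmetic difference is that the paper cites~\cite{Bhatia} for the Schur complement step and does not isolate the edge case $\mu_3=0$ explicitly, whereas you mention it (correctly) as a degenerate case.
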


\bigskip

Observe that by the above theorem, the MC~ribbon of $(X_1, X_2, X_3)$ cannot be computed solely based on the marginal distributions of pairwise random variables (compare this with Proposition \ref{prop:MC-extreme1}). 

The proof of this proposition is given in Appendix~\ref{app:binary-binary-ternary}.

\paragraph{Normal distributions:}
Throughout the paper, we considered only discrete random variables, i.e., $\mathcal X_i$'s are finite sets. Nevertheless, the definition of $\Phi$-ribbon can easily be generalized to the continuous case. Moreover, most of the properties that we proved here, except those that are based on compactness, are generalized for continuous random variables as well. Here, our goal is to compute the MC~ribbon for multivariate normal distributions using Theorem~\ref{thm:mc-equivalent-rep}. Since we have not presented the proof of this theorem in the continuous case, the reader may consider the statement of Theorem~\ref{thm:mc-equivalent-rep} as the definition of MC~ribbon for normal distributions. 


Let $(X_1, \dots, X_k)$ be \emph{real} random variables that are either binary (i.e., the alphabet set of $X_i$ is of size two), or normal (i.e., $X_i$'s form a multivariate normal distribution). Let $R$ be the \emph{covariance matrix} of $X_i$'s. That is, $R$ is a matrix whose $(i,j)$-th entry is the Pearson correlation coefficient between $X_i$ and $X_j$:  
$$R_{ij}=\frac{\Cov(X_i, X_j)}{\sqrt{\Var[X_i]\Var[X_j]}},$$
where $\Cov(X_i, X_j) = \E\big[(X_i - \E[X_i]) (X_j - \E[X_j])\big]$. 
Next, given a  $k$-tuple $(\lambda_1, \lambda_2, \cdots, \lambda_k)\in\fS$, we associate to it a diagonal matrix
$\Lambda$ whose $i$-th entry on the diagonal is equal to $\lambda_i$. 

\begin{theorem}\label{thm:Gaussian}
Suppose that $(X_1, \dots, X_k)$ either form a multivariate normal distribution, or are all binary taking values in an alphabet set of size two. Let $R$ be the covariance matrix of $(X_1, \dots, X_k)$ as defined above. Then,   $(\lambda_1,  \dots, \lambda_k)$ belongs to $\fS(X_1, \dots, X_k)$ if and only if
$R\leq \Lambda^{-1}.$
\end{theorem}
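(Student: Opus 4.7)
The plan is to invoke Theorem~\ref{thm:mc-equivalent-rep}, which reduces the condition $(\lambda_1,\dots,\lambda_k)\in\fS(X_{[k]})$ to the inequality
$$\Var\Big[\sum_{i=1}^k f_i(X_i)\Big]\leq \sum_{i=1}^k \frac{\Var[f_i]}{\lambda_i}$$
for all functions $f_i$ of $X_i$. I would then handle the binary and Gaussian cases separately.

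In the binary case the space of zero-mean functions on each $\mathcal X_i$ is one-dimensional, so every $f_i$ is a scalar multiple of $\tilde X_i := (X_i-\E X_i)/\sigma_i$, which has unit variance and satisfies $\E[\tilde X_i\tilde X_j]=R_{ij}$. Writing $f_i=v_i\tilde X_i$ converts the inequality above into $v^{\top} R v\leq v^{\top} \Lambda^{-1}v$ for all $v\in\mathbb{R}^k$, which is exactly $R\leq \Lambda^{-1}$.

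In the Gaussian case the forward direction is identical: testing with linear $f_i$ recovers $R\leq\Lambda^{-1}$. For the converse I would expand $f_i(X_i)=\sum_{n\geq 1}c_{i,n}\tilde H_n(\tilde X_i)$, where $\tilde H_n$ is the normalized Hermite polynomial so that $\{\tilde H_n\}$ is orthonormal in $L^2$ of the standard Gaussian. By Mehler's classical identity, $\E[\tilde H_n(\tilde X_i)\tilde H_m(\tilde X_j)]=\delta_{mn}R_{ij}^n$, so
$$\Var\Big[\sum_i f_i\Big]=\sum_{n\geq 1}c_n^{\top} R^{\circ n} c_n, \qquad \sum_i\frac{\Var[f_i]}{\lambda_i}=\sum_{n\geq 1} c_n^{\top} \Lambda^{-1} c_n,$$
where $c_n=(c_{1,n},\dots,c_{k,n})$ and $R^{\circ n}$ denotes the entrywise $n$-th Hadamard power of $R$. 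Since the vectors $c_n$ can be chosen independently across $n$, the inequality for all $f_i$ becomes equivalent to the family of conditions $R^{\circ n}\leq\Lambda^{-1}$ for every $n\geq 1$.

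The main obstacle is therefore the key lemma: \emph{if $R$ is a correlation matrix (PSD, unit diagonal) and $D$ is diagonal with $R\leq D$, then $R^{\circ n}\leq D$ for every $n\geq 1$}. My plan is probabilistic. Take i.i.d.\ copies $Y^{(1)},\dots,Y^{(n)}$ of $Y\sim N(0,R)$ and set $W_i=\prod_{m=1}^n Y_i^{(m)}$, so that $\E[W_i]=0$ and $\E[W_i W_j]=R_{ij}^n$, making $R^{\circ n}=\Cov(W)$. Conditioning on $Y^{(2)},\dots,Y^{(n)}$, the quantity $\sum_i v_i W_i$ is linear in $Y^{(1)}$ with coefficients $v_i Y_i^{(2)}\cdots Y_i^{(n)}$, so applying the hypothesis $R\leq D$ to this conditional centred Gaussian yields
$$\Var\Big[\sum_i v_iW_i\,\Big|\,Y^{(2)},\dots,Y^{(n)}\Big]\leq \sum_i d_i v_i^2\,(Y_i^{(2)}\cdots Y_i^{(n)})^2,$$
and taking outer expectation (using $\E[(Y_i^{(m)})^2]=1$ and independence across copies) gives $v^{\top} R^{\circ n}v\leq v^{\top} D v$, which establishes the lemma and completes the theorem. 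The non-obvious step is spotting that tensoring over independent Gaussian copies together with the tower property of variance is exactly what is needed to transfer $R\leq D$ to every Hadamard power.
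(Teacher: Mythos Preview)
Your proposal is correct and follows essentially the same route as the paper: both use Theorem~\ref{thm:mc-equivalent-rep} to reduce to the inequality $\Var[\sum_i f_i]\leq \sum_i \Var[f_i]/\lambda_i$, dispose of the binary case via the one-dimensional zero-mean function space, and in the Gaussian case expand in (normalized) Hermite polynomials and use $\E[\tilde H_n(\tilde X_i)\tilde H_m(\tilde X_j)]=\delta_{mn}R_{ij}^n$ to reduce everything to the family of matrix inequalities $R^{\circ n}\leq \Lambda^{-1}$.

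The only substantive difference is in the proof of the key lemma that $R\leq \Lambda^{-1}$ implies $R^{\circ n}\leq \Lambda^{-1}$ for all $n\geq 1$. The paper argues algebraically via the Schur product theorem: since $R^{\circ(n-1)}$ and $\Lambda^{-1}-R$ are both positive semidefinite, so is their Hadamard product $R^{\circ(n-1)}\circ(\Lambda^{-1}-R)$, and because $R$ has unit diagonal and $\Lambda^{-1}$ is diagonal, $R^{\circ(n-1)}\circ\Lambda^{-1}=\Lambda^{-1}$, giving $R^{\circ n}\leq\Lambda^{-1}$. Your probabilistic argument---realising $R^{\circ n}$ as the covariance of entrywise products of $n$ independent $N(0,R)$ copies and then conditioning on all but one copy---is equally valid and arguably more illuminating, though the paper's Schur-product argument is shorter and needs no auxiliary random variables. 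One small point you leave implicit is the $L^2$-approximation of general $f_i$ by finite Hermite sums (the paper spells this out with an $\epsilon$-argument), but this is routine.
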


The proof of this theorem is based on ideas from~\cite{Lancaster57} and is given in Appendix~\ref{app:Gaussian}.

\subsection{Another multipartite correlation region} 

Motivated by the form of characterization of $\fS(X_1, \dots, X_k)$ given by Theorem~\ref{thm:mc-equivalent-rep}, we define 
another region associate to a multivariate distribution.

\begin{definition}\label{myRepeatedTheorem2}
For any $(X_1, \dots, X_k)$ we define $\widetilde\fS(X_1, \dots, X_k)$ to be the set of $k$-tuples $(\lambda_1,  \dots, \lambda_k)\in[0,1]^k$ such that for all functions $f_i(X_i)$, $i=1, \dots k$, we have
$$\Var\big[f_1+\cdots + f_k\big]\geq \sum_{i=1}^{k}\lambda_i\Var[f_i].$$
\end{definition}

$\widetilde\fS(X_1, \dots, X_k)$ and the MC~ribbon share the properties of monotonicity and tensorization, yet as we will argue later, they are not identical.

\begin{theorem} \label{thm:tensorization2}
$\widetilde{\fS}(X_{1}, \dots, X_k)$ satisfies the monotonicity and tensorization properties the same as $\fS(X_1, \dots, X_k)$.  
\end{theorem}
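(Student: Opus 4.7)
The plan is to mimic the standard strategy used for $\fS$: combine the law of total variance with the defining inequality of $\widetilde\fS$ after conditioning on $X_{[k]}$. The argument is particularly transparent for $\widetilde\fS$ because its defining inequality is already a quadratic estimate on $\Var[\sum_i f_i]$, so no preliminary characterization (of the type in Theorem~\ref{thm:mc-equivalent-rep}) is needed.

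For part (i), fix $\lambda_{[k]}\in \widetilde\fS(X_{[k]})$ and arbitrary $g_i=g_i(Y_i)$. The hypothesis $p(y_{[k]}|x_{[k]})=\prod_i p(y_i|x_i)$ yields two facts: conditionally on $X_{[k]}$, the $Y_i$ are independent and each $Y_i$ depends only on $X_i$. Hence $\E[g_i|X_{[k]}]=\E[g_i|X_i]=: f_i(X_i)$ and $\Var[g_i|X_{[k]}]=\Var[g_i|X_i]$. Applying the law of total variance to $\sum_i g_i$ and then, inside the conditional variance, the conditional independence, one obtains
\begin{align*}
\Var\Bigl[\sum_i g_i\Bigr]
  &= \Var\Bigl[\sum_i f_i\Bigr] + \sum_i \E\bigl[\Var[g_i|X_i]\bigr]\\
  &\geq \sum_i \lambda_i \Var[f_i] + \sum_i\bigl(\Var[g_i]-\Var[f_i]\bigr),
\end{align*}
where the second line applies $\lambda_{[k]}\in\widetilde\fS(X_{[k]})$ to $\sum_i f_i$ and the law of total variance to each $g_i$ separately. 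Since $\lambda_i\leq 1$ and $\Var[f_i]\leq\Var[g_i]$, the right hand side is at least $\sum_i \lambda_i\Var[g_i]$, which is exactly the membership of $\lambda_{[k]}$ in $\widetilde\fS(Y_{[k]})$.

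For part (ii), the inclusion $\widetilde\fS(X_1Y_1,\dots,X_kY_k)\subseteq \widetilde\fS(X_{[k]})\cap \widetilde\fS(Y_{[k]})$ is immediate by restricting the test functions to depend only on $X_i$ (respectively only on $Y_i$). For the reverse inclusion, fix $\lambda_{[k]}$ in the intersection and arbitrary $f_i=f_i(X_i,Y_i)$. Since $X_{[k]}\perp Y_{[k]}$, the conditional law of $Y_{[k]}$ given $X_{[k]}$ equals its marginal, so $\E[f_i|X_{[k]}]=\E[f_i|X_i]$. By the law of total variance,
$$\Var\Bigl[\sum_i f_i\Bigr] = \Var\Bigl[\sum_i \E[f_i|X_i]\Bigr] + \E\Bigl[\Var\Bigl[\sum_i f_i\,\Big|\,X_{[k]}\Bigr]\Bigr].$$
Apply $\lambda_{[k]}\in\widetilde\fS(X_{[k]})$ to the first term. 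For the second, observe that for each fixed $X_{[k]}=x_{[k]}$ the map $Y_i\mapsto f_i(x_i,Y_i)$ is a function of $Y_i$ alone and the conditional law of $Y_{[k]}$ is still its marginal, so $\lambda_{[k]}\in\widetilde\fS(Y_{[k]})$ yields $\Var[\sum_i f_i\,|\,X_{[k]}=x_{[k]}]\geq \sum_i \lambda_i\Var[f_i(x_i,Y_i)]$. Taking expectation over $X_{[k]}$ and invoking once more $\Var[\E[f_i|X_i]]+\E[\Var[f_i|X_i]]=\Var[f_i]$ gives $\Var[\sum_i f_i]\geq \sum_i \lambda_i\Var[f_i]$.

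The only subtlety is bookkeeping on which independence is used where: in (i) it is the conditional independence of the $Y_i$'s given $X_{[k]}$ that splits a conditional variance of a sum into a sum of conditional variances, whereas in (ii) it is the \emph{unconditional} independence $X_{[k]}\perp Y_{[k]}$ (not any conditional independence of the $Y_i$'s) that guarantees the conditional law of $Y_{[k]}$ coincides with its marginal, permitting $\widetilde\fS(Y_{[k]})$ to be invoked inside the conditioning. Beyond that, the argument requires only the defining inequality of $\widetilde\fS$ and the elementary identity $\Var[\cdot]=\Var[\E[\cdot|\cdot]]+\E[\Var[\cdot|\cdot]]$.
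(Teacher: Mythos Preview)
Your proof is correct and follows essentially the same route as the paper's: in both parts you decompose $\Var[\sum_i f_i]$ via the law of total variance after conditioning on $X_{[k]}$, apply the defining inequality of $\widetilde\fS$ to each piece, and then recombine. The only cosmetic difference is in the monotonicity argument: the paper bounds the conditional-variance piece directly by $\sum_i\lambda_i\Var[f_i\mid X_i]$ (using $\lambda_i\leq 1$ implicitly there), whereas you keep the full coefficient $1$ on that piece and invoke $\lambda_i\leq 1$ at the end via $\sum_i(1-\lambda_i)(\Var[g_i]-\Var[f_i])\geq 0$; the two are algebraically equivalent.
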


This theorem is proved in Appendix~\ref{app:tilde-S-proof}.

\begin{proposition}\label{prop:tilde-S-extreme}
$\widetilde \fS(X_{1}, \dots, X_k)$ is equal to $[0,1]^k$ if and only if $X_1, X_2, \dots, X_k$ are \emph{pairwise} independent.\end{proposition}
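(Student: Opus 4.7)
The plan is to reduce the statement to the corner point $(1,1,\dots,1)$ of the cube, and then exploit bilinearity by running the defining inequality on only two of the $k$ variables at a time.

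For the easy direction ($\Leftarrow$), assume $X_1,\dots,X_k$ are pairwise independent. Then for any zero-mean functions $f_i(X_i)$ and any $i \neq j$ we have $\Cov(f_i(X_i), f_j(X_j)) = 0$, so
$$\Var\!\left[\textstyle\sum_{i=1}^k f_i\right] = \sum_{i=1}^k \Var[f_i] \geq \sum_{i=1}^k \lambda_i \Var[f_i]$$
for every $(\lambda_1,\dots,\lambda_k)\in [0,1]^k$. Hence $\widetilde\fS(X_{[k]})=[0,1]^k$.

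For the hard direction ($\Rightarrow$), note that $\widetilde\fS(X_{[k]})=[0,1]^k$ forces the corner point $(1,1,\dots,1)$ to lie in $\widetilde\fS$. So for every choice of functions $f_i(X_i)$ one has $\Var[\sum_i f_i] \geq \sum_i \Var[f_i]$. Fix any two distinct indices $i \neq j$ and set $f_\ell \equiv 0$ for $\ell \notin \{i,j\}$. The inequality reduces to $\Var[f_i+f_j]\geq \Var[f_i]+\Var[f_j]$, which after expanding the left-hand side gives $\Cov(f_i(X_i),f_j(X_j))\geq 0$. Replacing $f_i$ with $-f_i$ reverses the sign, so in fact $\Cov(f_i(X_i),f_j(X_j)) = 0$ for all functions $f_i(X_i)$ and $f_j(X_j)$. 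On the finite alphabets we are working with, taking $f_i = \mathbf{1}_{X_i=x} - p(x)$ and $f_j = \mathbf{1}_{X_j=y} - p(y)$ yields $p(x,y)-p(x)p(y)=0$ for every $(x,y)$, which is exactly pairwise independence of $X_i$ and $X_j$.

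The argument is essentially bookkeeping, so there is no real obstacle; the only thing to be careful about is to remark that requiring $\widetilde{\fS}(X_{[k]})=[0,1]^k$ is equivalent to requiring $(1,\dots,1)\in\widetilde{\fS}(X_{[k]})$, since increasing any $\lambda_i$ only makes the defining inequality harder to satisfy. This reduction to the corner and then the two-variable restriction trick are the only ideas needed.
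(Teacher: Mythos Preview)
Your proof is correct and follows essentially the same approach the paper has in mind: the paper simply refers back to the proof of Proposition~\ref{prop:MC-extreme1}, which is the same two-variable restriction argument with the inequality reversed. Your write-up is in fact slightly more explicit (spelling out the sign-flip and the indicator-function evaluation), but the underlying idea is identical.
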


The proof of the following proposition is similar to that of Proposition~\ref{prop:MC-extreme1}, so we do not repeat it here. 

The above proposition characterizes one extreme of $\widetilde \fS(X_1, \dots, X_k)$ that is common with the MC~ribbon. To characterize the other extreme of $\widetilde \fS(X_1, \dots, X_k)$,  recall that the MC~ribbon contains all
$\lambda_{[k]}\in [0,1]^k$ with $\sum_{i}\lambda_i\leq 1$. Nevertheless, these points  may not belong to $\widetilde \fS(X_1, \dots, X_k)$. 
Indeed, we can even have $\widetilde \fS(X_{[k]})=\{(0,0,\dots, 0)\}$. To see this, let $k=2$ and assume that $X_1=X_2$ with probability one. Let $f_1=-f_2$ be a non-constant function of both $X_1=X_2$. Then, $f_1+f_2=0$, and thus $\Var[f_1+f_2]=0$. But $\Var[f_1]$ and $\Var[f_2]$ are both positive, so $\widetilde \fS(X_1, X_2)$ contains only $(0,0)$.

\begin{theorem}\label{thm:tilde-S-extreme-0}
$\widetilde{\fS}(X_{[k]})=\{(0,\dots, 0)\}$ if and only if there are zero-mean functions $f_1(X_1), \dots, f_k(X_k)$ that are not all zero and $f_1+\dots + f_k=0$.
\end{theorem}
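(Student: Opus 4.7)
The plan is to work in the Hilbert space $L^2(X_{[k]})$ with inner product $\langle f,g\rangle=\E[fg]$, writing $\mathcal F_i^0$ for the subspace of zero-mean functions of $X_i$ and $V_i=\sum_{j\neq i}\mathcal F_j^0$ for the sum of the other coordinate subspaces. The existence of a zero-mean relation $\sum_j f_j(X_j)=0$ with $f_i\neq 0$ is equivalent to $\mathcal F_i^0\cap V_i\neq\{0\}$; as the example preceding the theorem with $k=2$ makes clear, the hypothesis should be read so that each $f_i\neq 0$ (otherwise the relation only forces $\lambda_i=0$ on the support $\{i:\Var[f_i]>0\}$).

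For the ``if'' direction, substitute the given functions into the defining inequality of $\widetilde\fS$: the left side $\Var[\sum_i f_i]$ is zero, so $\sum_i\lambda_i\Var[f_i]\leq 0$, and since every $\Var[f_i]>0$ and $\lambda_i\geq 0$, each $\lambda_i=0$.

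For the ``only if'' direction, assume $\widetilde\fS(X_{[k]})=\{(0,\dots,0)\}$ and fix $i\in[k]$. Since $(0,\dots,\lambda,\dots,0)$ with $\lambda>0$ in slot $i$ lies outside $\widetilde\fS$ for every $\lambda>0$, there is a sequence of zero-mean $(f_j^{(i,n)})_{j\in[k]}$ with $\Var[f_i^{(i,n)}]=1$ and $\Var[\sum_j f_j^{(i,n)}]\to 0$ as $n\to\infty$. This means the distance from $f_i^{(i,n)}$ to the closed subspace $V_i$ tends to zero. By compactness of the unit sphere of the finite-dimensional space $\mathcal F_i^0$ together with closedness of $V_i$, a subsequential limit $f_i^\ast\in\mathcal F_i^0\cap V_i$ has unit variance, yielding a relation $R^{(i)}:\;g_1^{(i)}+\cdots+g_k^{(i)}=0$ with $g_i^{(i)}=f_i^\ast\neq 0$.

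Finally, form a generic linear combination $\sum_{i=1}^k c_iR^{(i)}$; this produces $\sum_j F_j=0$ with $F_j=\sum_i c_i g_j^{(i)}$. Since $g_j^{(j)}\neq 0$, the vanishing locus $\{c\in\bR^k:F_j\equiv 0\}$ is a proper subspace (it misses the standard basis vector $e_j$), so a choice of $c$ outside the union of these $k$ proper subspaces gives a single relation with every $F_j\neq 0$, as desired. The main obstacle is the compactness/closedness step in the only-if direction that promotes a vanishing-variance sequence into an actual element of $\mathcal F_i^0\cap V_i$; the subsequent generic-combination step that merges the $k$ per-coordinate relations into a single one is routine.
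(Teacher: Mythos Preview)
Your proposal is correct, and in fact it repairs a genuine defect in the theorem as stated that the paper's own proof does not address. You are right that ``not all zero'' must be read as ``each $f_i\neq 0$'': under the literal reading (at least one nonzero), the ``if'' direction fails. For instance, take $k=3$ with $X_1=X_2$ almost surely and $X_3$ independent of them; then $f_1=X_1-\E X_1$, $f_2=-f_1$, $f_3=0$ gives a nontrivial linear relation, yet $(0,0,1)\in\widetilde\fS(X_1,X_2,X_3)$ since $\Var[g_1+g_2+g_3]=\Var[g_1+g_2]+\Var[g_3]\ge\Var[g_3]$ for all $g_i$.

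The paper's argument for the ``only if'' direction is simpler than yours but proves less: it uses only that $(\epsilon,\dots,\epsilon)\notin\widetilde\fS$ for every $\epsilon>0$, extracts a sequence with $\sum_i\Var[f_i^{(\epsilon)}]=1$ and $\Var[\sum_i f_i^{(\epsilon)}]\le\epsilon$, and passes to a limit by compactness in the finite-dimensional space $\bigoplus_i\mathcal F_{X_i}^0$. This yields a single relation with $\sum_i\Var[\hat f_i]=1$, hence ``not all zero'' but not ``each nonzero'', so it does not actually match the ``if'' direction that the paper asserts ``clearly''. Your approach instead works coordinate by coordinate (using that each axis point $\lambda e_i$ lies outside $\widetilde\fS$), obtains $k$ relations with the $i$-th one having a nonzero $i$-th component, and then takes a generic linear combination to produce a single relation with every component nonzero. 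This extra step is exactly what is needed to make the equivalence go through, and the finite-dimensionality you rely on for compactness of the unit sphere of $\mathcal F_{X_i}^0$ and closedness of $V_i$ is available from the paper's standing assumption of finite alphabets.
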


The proof of this theorem is given in Appendix~\ref{app:tilde-S-extreme-0}.

Based on Theorem~\ref{thm:tilde-S-extreme-0}, for $k=2$, $\widetilde\fS(X_1, X_2)=\{(0,0)\}$ if and only if $X_1$ and $X_2$ have common data. However, for $k\geq 3$ one can find examples of $(X_1, \dots, X_k)$ that do not have common part, yet we have $\widetilde{\fS}(X_{[k]})=\{(0,\dots, 0)\}$.

\begin{example} 
Let $\mathcal X=\mathcal Y=\mathcal Z=\{0,1\}$. Let $0<a,b<1$ be such that $c=a+b<1$. Define $p(x, y, z)$ by
$$p(000)=a,\quad p(110)=b, \quad p(101)=1- c,$$
and $p(xyz)=0$ for $(x,y,z)\notin\{(0,0,0), (1,1,0), (1,0,1)\}$.
Observe that $X, Y, Z$ do not have common data.
Now define $f_X, g_Y, h_Z$ as follows: 
\begin{align*}
 f_X(0)=1- a,&\qquad f_X(1)=a,\\
 g_Y(0)=-b,&\qquad g_Y(1)=1-{b},\\
 h_Z(0)={c}-1,& \qquad h_Z(1)=c.
\end{align*}
Then, we have $\E[f_X]=\E[g_Y]=\E[h_Z]=0$, $\E[f_X^2]>0, \E[g_Y^2]>0, \E[h_Z^2]>0$ and $f(X)+g(Y)+h(Z)=0$.  Therefore, $\widetilde{\fS}(X,Y,Z)=\{(0,0,0)\}$. 
\end{example}

In the following theorem we use the same notation as we used in Theorem~\ref{thm:Gaussian}. The proof of this theorem is also similar to that of Theorem~\ref{thm:Gaussian}, so we do not repeat it here. 

\begin{theorem}\label{thm:Gaussian-tilde}
Suppose that $(X_1, \dots, X_k)$ either form a multivariate normal distribution, or are all binary $|\mathcal{X}_i|=2$. Let $R$ be the covariance matrix of $(X_1, \dots, X_k)$ as defined above. Then,   $(\lambda_1,  \dots, \lambda_k)$ belongs to $\fS(X_1, \dots, X_k)$ if and only if
$R\leq \Lambda.$
\end{theorem}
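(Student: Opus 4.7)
The plan is to adapt the proof of Theorem~\ref{thm:Gaussian}, using Definition~\ref{myRepeatedTheorem2} in place of Theorem~\ref{thm:mc-equivalent-rep}. After affinely normalizing each $X_i$ to have mean $0$ and variance $1$ (which preserves both $\widetilde\fS$ and the correlation matrix $R$), the condition $\lambda_{[k]}\in\widetilde\fS(X_{[k]})$ becomes
\begin{equation*}
\Var[f_1+\cdots+f_k]\geq\sum_i\lambda_i\Var[f_i]
\end{equation*}
for every tuple of zero-mean functions $f_i(X_i)$.

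In the binary case ($|\mathcal X_i|=2$) the space of zero-mean real functions of $X_i$ is one-dimensional, so one may take $f_i(X_i)=c_iX_i$. Then $\Var[\sum_ic_iX_i]=c^\top Rc$ and $\sum_i\lambda_i\Var[c_iX_i]=c^\top\Lambda c$, so the required inequality for all $c\in\mathbb R^k$ is exactly $\Lambda\leq R$. For the Gaussian case I would expand $f_i$ in the orthonormal basis of normalized Hermite polynomials, $f_i(X_i)=\sum_{n\geq 1}c_{i,n}\tilde H_n(X_i)$. The classical Lancaster identity $\E[\tilde H_m(X_i)\tilde H_n(X_j)]=\delta_{mn}\rho_{ij}^n$ (valid for jointly standard normal $(X_i,X_j)$ of correlation $\rho_{ij}$) decouples the expansion by level: writing $c^{(n)}=(c_{1,n},\ldots,c_{k,n})$ and letting $R^{(n)}$ denote the $n$-fold Hadamard (Schur) power of $R$,
\begin{equation*}
\Var\Big[\sum_i f_i\Big]-\sum_i\lambda_i\Var[f_i]=\sum_{n\geq 1}c^{(n)\top}\bigl(R^{(n)}-\Lambda\bigr)c^{(n)}.
\end{equation*}
Since the vectors $c^{(n)}$ are chosen independently, this is nonnegative for every admissible tuple $(f_i)$ if and only if $\Lambda\leq R^{(n)}$ for every $n\geq 1$.

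The main content is therefore the claim that $\Lambda\leq R$ already forces $\Lambda\leq R^{(n)}$ for every $n\geq 2$. I would prove this by induction on $n$ using the Schur product theorem. Since $\Lambda$ is diagonal and $R_{ii}=1$ for all $i$, one has $R\circ\Lambda=\Lambda$, and therefore
\begin{equation*}
R^{(n)}-\Lambda=R\circ R^{(n-1)}-R\circ\Lambda=R\circ\bigl(R^{(n-1)}-\Lambda\bigr).
\end{equation*}
The Schur product theorem together with $R\geq 0$ and the inductive hypothesis $R^{(n-1)}-\Lambda\geq 0$ yields $R^{(n)}-\Lambda\geq 0$, closing the induction and concluding that both cases reduce to $\Lambda\leq R$. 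The main obstacle lies in the Gaussian half: one must rigorously justify the $L^2$ Hermite expansion and its term-by-term decoupling for the whole function class permitted by Definition~\ref{myRepeatedTheorem2}. This is routine under a mild integrability hypothesis on $f_i$, and otherwise can be reduced to the bounded case by approximation.
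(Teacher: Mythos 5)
Your proposal is correct and follows exactly the route the paper intends: the paper states that the proof is ``similar to that of Theorem~\ref{thm:Gaussian}'' and omits the details, and what you have written is precisely the $\widetilde\fS$-analogue of the argument in Appendix~\ref{app:Gaussian} (one-dimensional reduction for the binary case; Hermite expansion, level-by-level decoupling via Lancaster's identity, and reduction of the infinite family $\Lambda\leq R^{\circ n}$ to the single condition $\Lambda\leq R$ via the Schur product theorem and $R\circ\Lambda=\Lambda$). Your inductive packaging of the Schur-power step is a cosmetic variant of the paper's one-shot computation $R^{\circ(\ell-1)}\circ(R-\Lambda)=R^{\circ\ell}-\Lambda\geq 0$; both are fine. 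One remark worth recording: the condition you correctly derive is $\Lambda\leq R$ (equivalently $R-\Lambda\succeq 0$), whereas the printed statement of Theorem~\ref{thm:Gaussian-tilde} says ``$R\leq\Lambda$'' and also refers to $\fS$ rather than $\widetilde\fS$; both appear to be typographical slips in the paper, and your version is the intended one.
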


\section{Summary of the results}
In this paper, we defined $\Phi$-ribbon, that generalizes both the  MC and the HC ribbons. We showed that the $\Phi$-ribbon satisfies the monotonicity and tensorization properties. Therefore, $\Phi$-entropy can be utilized in any of the known applications of the HC ribbon and the maximal correlation in network information theory, namely the non-interactive distribution simulation problem \cite{KamathAnantharam} or transmission of correlated sources over a noisy network (such as a MAC channel) \cite{KangUlukus}.  It was shown that the $\Phi$-ribbon relates to Raginsky's $\Phi$-strong data processing constant. Next, we showed that the MC ribbon is the maximal $\Phi$-ribbon, \emph{i.e.,} all $\Phi$-ribbons are subsets of the MC ribbon. This fact motivated further study of the properties of the MC ribbon, and its efficient calculation. In particular, an equivalent characterization of the MC ribbon was given. Inspired by the form of this characterization, we defined another multivariate correlation region that is characterized by maximal correlation when $k=2$, and satisfies the data processing and tensorization properties.




\bibliographystyle{IEEEtran}

\begin{thebibliography}{1}

\bibitem{KamathAnantharam} S. Kamath and V. Anantharam,
``Non-interactive Simulation of Joint Distributions: The Hirschfeld-Gebelein-R\'enyi Maximal Correlation and the Hypercontractivity Ribbon,"
Proceedings of the 50th Annual Allerton Conference on Communications, Control and Computing (2012).

\bibitem{KangUlukus}
W. Kang and S. Ulukus, ``A New Data Processing Inequality and Its Applications in
Distributed Source and Channel Coding,'' IEEE Transactions on Information Theory, 57 (1):
56-69, 2011.

\bibitem{OurPaper}
S. Beigi and A. Gohari, ``Monotone Measures for Non-Local Correlations," IEEE Transactions on Information Theory, 61 (9): 5185--5208, 2015.


\bibitem{Hirschfeld}
H. O. Hirschfeld, ``A connection between correlation and contingency,"
Proc. Cambridge Philosophical Soc. \textbf{31}, 520-524 (1935).



\bibitem{Gebelein} H. Gebelein, ``Das statistische problem der Korrelation als variations-und Eigenwertproblem und sein Zusammenhang mit der Ausgleichungsrechnung," Z. f\"ur angewandte Math. und Mech. \textbf{21}, 364--379 (1941).

\bibitem{Renyi1} A. R\'enyi, ``New version of the probabilistic generalization of the large
sieve," Acta Math. Hung. \textbf{10}, 217-226 (1959).

\bibitem{Renyi2} A. R\'enyi, ``On measures of dependence," Acta Math. Hung. \textbf{10},
441-451 (1959).








\bibitem{Witsenhausen75}
H. S. Witsenhausen, ``On sequences of pairs of dependent random
variables," SIAM Journal on Applied Mathematics,  28 (1): 100-113, 1975.

\bibitem{GacsKorner}
P. G\'acs and J. K{\"o}rner, ``Common information is far less than mutual information," Problems of Control and Information Theory, vol. 2, no. 2, pp. 119-162, 1972.

\bibitem{Ali-Silvey}
S. M. Ali and S. D. Silvey, ``A general class of coefficients of divergence of one distribution from another," Journal of the Royal Statistics Society, series B, vol. 28, no. 1, pp. 131--142, 1966.

\bibitem{Csiszar1}
I. Csiszar, ``Eine Informationstheoretische Ungleichung und ihre Anwendung auf den Bewis der Ergodizita ̈t von Markhoffschen Ketten," Publ. Math. Inst. Hungar. Acad. Sci., vol. 8, pp. 85--108, Jan. 1963. 

\bibitem{Csiszar2}
I. Csiszar, ``Information-type measures of difference of probability distributions and indirect observations," Studia Scientiarum Mathematicarum Hungarica, vol. 2, pp. 299--318, Jan. 1967.

\bibitem{DKS01} A. Dembo, A. Kagan, and L. Shepp, ``Remarks on the maximum
correlation coefficient," Bernoulli, vol. 7, pp. 343--350, 2001.

\bibitem{KamathNair15} S. Kamath and C. Nair, ``The strong data processing constant for sums of i.i.d.\ random variables," Information Theory (ISIT), 2015 IEEE International Symposium on, pp. 2550--2552. 



\bibitem{Lancaster57}
H. O. Lancaster,``Some properties of the bivariate normal distribution considered in the form of a contingency table," Biometrika, 44, 1-2: 289--292 (1957).

\bibitem{Nair} C. Nair,  ``Equivalent formulations of Hypercontractivity using Information Measures,"  IZS workshop, 2014, available at \url{http://chandra.ie.cuhk.edu.hk/pub/papers/manuscripts/IZS14.pdf}

\bibitem{AhlswedeGacs} R. Ahlswede and P. G\'acs, ``Spreading of Sets in Product Spaces and Hypercontraction of the Markov Operator," The Annals of Probability 4, 925-939 (1976).

\bibitem{NairPre}
E. A. Carlen and C. E. Dario, ``Subadditivity of the entropy and its
relation to Brascamp-Lieb type inequalities," Geometric and Functional
Analysis, vol. 19, no. 2, pp. 373–405, 2009. 

\bibitem{Raginsky}
M. Raginsky, ``Strong Data Processing Inequalities and $\Phi$-Sobolev Inequalities for Discrete Channels," IEEE Transactions on Information Theory, 62 (6), 3355--3389, 2016.


\bibitem{AGKN}
V. Anantharam, A. Gohari, S. Kamath, C. Nair, ``On maximal correlation, hypercontractivity, and the data processing inequality studied by Erkip and Cover," arXiv preprint arXiv:1304.6133, 2013.

\bibitem{Bhatia}
R. Bhatia, Positive definite matrices. Princeton university press, 2009.


\bibitem{Boucheronetal}
S. Boucheron, C. Lugosi, P. Massart, Concentration inequalities: A nonasymptotic theory of independence. Oxford university press, 2013.


\bibitem{Chafai2}
D. Chafa\"{i},  Entropies, convexity, and functional inequalities, On $\Phi$--entropies and $\Phi$--Sobolev inequalities. Journal of Mathematics of Kyoto University, 44 (2), 325-363, 2004.

\bibitem{Chafai1}
D. Chafa\"{i}, Binomial-Poisson entropic inequalities and the $M/M/\infty$ queue. ESAIM: Probability and Statistics, 10, 317-339, 2006.


\end{thebibliography}

\appendix

\section{SDPI constant}\label{appendix:sphi}

In this appendix we prove Theorem~\ref{thm:tensor-monoton-eta-phi} as well as the claim we made in Example~\ref{example:DSBS}. Let us start with the latter.

\begin{proposition}
If $(X, Y)$ are distributed according to $\DSBS(\lambda)$, then for any $\Phi\in \mathscr F$ we have 
$$\eta_\Phi(X, Y) = \lambda^2.$$
\end{proposition}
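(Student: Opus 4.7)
The plan is to establish both inequalities in $\eta_\Phi(X,Y)=\lambda^2$. The lower bound $\eta_\Phi(X,Y)\geq \lambda^2$ is immediate from Theorem~\ref{thm:s-Phi-rho} together with the classical fact that $\rho(X,Y)=\lambda$ for $\DSBS(\lambda)$, so the whole work lies in the upper bound $\eta_\Phi(X,Y)\leq \lambda^2$: for every function $f=f(X)$ we must show
$$\lambda^2 H_\Phi(f)\geq H_\Phi(\E[f|Y]).$$

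I would exploit the fact that $X$ is binary. Write $\mu=\E f$ and $d=\tfrac12(f(0)-f(1))$, so $f$ takes the values $\mu\pm d$, each with probability $\tfrac12$. Using either the explicit conditionals of the $\DSBS$ or its ``mixture'' description (with probability $\lambda$ output $X$, else output an independent fair coin), one obtains $\E[f|Y=y]=\mu+\lambda(f(y)-\mu)$, which as a function of $Y$ takes the values $\mu\pm\lambda d$ each with probability $\tfrac12$. Substituting, the inequality collapses to
$$\phi(\lambda d)\leq \lambda^2\,\phi(d),\qquad \phi(t):=\Phi(\mu+t)+\Phi(\mu-t)-2\Phi(\mu).$$
Since $\phi$ is even with $\phi(0)=0$, this is equivalent to showing that $t\mapsto \phi(t)/t^2$ is non-decreasing on $t>0$, or equivalently that $A(t):=t\phi'(t)-2\phi(t)$ is non-negative on $t\geq 0$.

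A direct computation gives $A(0)=A'(0)=0$ and
$$A''(t)=t\bigl(\Phi'''(\mu+t)-\Phi'''(\mu-t)\bigr).$$
This is where the hypothesis $\Phi\in\mathscr F$ enters. Condition (vi) of Definition~\ref{def:class-C} states $\Phi''\Phi''''\geq 2(\Phi''')^2\geq 0$, and since $\Phi''>0$ throughout the domain (condition (v) forces strict convexity), we conclude $\Phi''''\geq 0$, so $\Phi'''$ is non-decreasing. Hence $A''(t)\geq 0$ for $t\geq 0$, and integrating twice from $0$ yields $A(t)\geq 0$, as desired. The case $\lambda<0$ is handled by the evenness of $\phi$, and the boundary cases $\lambda=0$ or $d=0$ are trivial.

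The principal obstacle is extracting a concrete pointwise derivative inequality from the abstract hypothesis $\Phi\in\mathscr F$; this is resolved by invoking the equivalent formulation (vi) in Definition~\ref{def:class-C}, which converts $1/\Phi''$ concavity into $\Phi''''\geq 0$. Note that for $\Phi(t)=t^2$ one has $\Phi''''\equiv 0$, so $A\equiv 0$ and the inequality becomes an equality, recovering the classical identity $\eta_{t^2}(X,Y)=\rho^2(X,Y)=\lambda^2$.
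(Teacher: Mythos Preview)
Your proof is correct and follows essentially the same route as the paper's: both reduce to the one-variable inequality $\phi(\lambda d)\leq\lambda^2\phi(d)$ for $\phi(t)=\Phi(\mu+t)+\Phi(\mu-t)-2\Phi(\mu)$ and then establish it via $\Phi''''\geq 0$, deduced from condition (vi) of Definition~\ref{def:class-C}. The only cosmetic difference is that the paper substitutes $t\mapsto\sqrt{t}$ and shows convexity of $\psi(t)=\phi(\sqrt{t})$, whereas you directly show $\phi(t)/t^2$ is non-decreasing; your path is slightly more streamlined but relies on the identical key idea.
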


\begin{proof}
By Theorem~\ref{thm:s-Phi-rho} we know that $\eta_\Phi(X, Y)\geq \rho^2(X, Y)=\lambda^2$. To prove the inequality in the other direction we need to show that for any function $f_X$ we have
$$\lambda^2H_\Phi(f)\geq H_\Phi(\E[f|Y]).$$
Let $m=\E f$, and and define $z$ by $f(0) = m+z$. Then $f(1)= m-z$, and the above inequality reduced to
\begin{align*}
\lambda^2\big(\Phi(m+z) + \Phi(m-z) - 2\Phi(m)\big) \geq \Phi(m+\rho z) + \Phi(m-\rho z) - 2\Phi(m). 
\end{align*}

Let us for $t\geq 0$ define 
$$\psi(t) = \Phi(m+ \sqrt t) + \Phi(m-\sqrt t) - 2\Phi(m).$$
Then the above inequality is equivalent to 
$$\lambda^2\psi(t^2) \geq \psi(\lambda^2 t^2).$$
Since $\psi(0)=0$, this inequality is proven once we show that $\psi$ is convex.  We compute
$$\psi'(t) = \frac{1}{2\sqrt t}\Phi'(m+\sqrt t) - \frac{1}{2\sqrt t} \Phi'(m-\sqrt t),$$
and 
$$\psi''(t) = -\frac{1}{4t^{3/2}} \Big(\Phi'(m+\sqrt t) - \Phi'(m-\sqrt t)\Big)  + \frac{1}{4t}\Big( \Phi''(m+\sqrt t) + \Phi''(m-\sqrt t)  \Big).$$
Then the convexity of $\psi(x)$ is equivalent to  
$$\Phi''(m+\sqrt t) + \Phi''(m-\sqrt t)\geq \frac{1}{\sqrt t} \Big(\Phi'(m+\sqrt t) - \Phi'(m-\sqrt t)\Big).$$
Equivalently we need to show that for any $s\geq 0$ we have
$$\Phi''(m+s) + \Phi''(m-s)\geq \frac{1}{s} \Big(\Phi'(m+s) - \Phi'(m-s)\Big).$$

Define $\xi(s) = \Phi'(m+s) - \Phi'(m-s)$. Then the above inequality can be rewritten as 
$$\xi'(s)\geq \frac{1}{s}\Big( \xi(s) - \xi(0) \Big),$$
which holds if $\xi'(s)$ is increasing since $s\geq 0$. Equivalently we need to prove $\xi''(s)\geq 0$ for all $s\geq 0$. That is, we want 
$$\Phi'''(m+s)\geq \Phi'''(m-s),$$
which holds if $\Phi'''$ is increasing. 

By part (vi) of the definition of class of functions $\mathscr F$, we have $\Phi'''' \Phi''\geq 2\Phi'''^2$. On the other hand, $\Phi$ is convex which means that $\Phi''\geq 0$. Then by this inequality we have $\Phi''''\geq 0$. As a result, $\Phi'''$ is increasing. 
We are done.
 
\end{proof}

Before proving Theorem~\ref{thm:tensor-monoton-eta-phi} let us first generalize the definition of the $\Phi$-SDPI constant. 

\begin{definition}
For any pair of convex functions $\Phi, \Psi$ we define
$\eta_{\Phi, \Psi} (X, Y),$
to be the smallest (the infimum of) $\lambda\geq 0$ such that for any function $f_X$ we have
$$\lambda H_\Phi(f)\geq H_\Psi(\E[f|Y]).$$
\end{definition}

Observe that $\eta_{\Phi, \Psi}(X, Y)$ may be greater than one for arbitrary $\Phi$ and $\Psi$. Moreover,  $\eta_{\Phi, \Psi}(X, Y)$ coincides with $\eta_\Phi(X, Y)$ when $\Psi=\Phi$.

The first property that $\eta_{\Phi, \Psi}$ share with the $\Phi$-SDPI constant is Proposition~\ref{prop:convexity-s-channel}, that for a fixed $p_X$ the function
$$p_{Y|X}\mapsto \eta_{\Phi, \Psi} (X, Y),$$
is convex.  The proof of this fact is again based on Lemma~\ref{lem:convexity-H-Phi-channel}.

\begin{theorem}
Let $\Phi$ and $\Psi$ be convex functions, and assume that at least one of them belongs to~$\mathscr F$. Then the followings hold.
\begin{enumerate}[label={\rm (\roman*)}]
\item Monotonicity: If $X-A-B-Y$ then $\eta_{\Phi, \Psi}(X, Y)\leq \eta_{\Phi, \Psi}(A, B)$.
\item Tensorization: If $p_{ABXY}= p_{AB}\cdot p_{XY}$ then 
$$\eta_{\Phi, \Psi}(AX, BY) = \max\big\{     \eta_{\Phi, \Psi}(A, B), \eta_{\Phi, \Psi}(X, Y) \big\}.$$
\end{enumerate}
\end{theorem}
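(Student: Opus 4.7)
For part (i), I expect only the convexity of $\Phi$ and $\Psi$ to be needed. Given the Markov chain $X - A - B - Y$ and any $f=f(X)$, I would introduce $g := \E[f|A]$. The Markov property yields $\E[f|B] = \E[g|B]$ (from $X \perp B\mid A$) and, combined with the tower rule and the collapsed Markov relation $X \perp Y \mid B$, also $\E[f|Y] = \E[\E[g|B]|Y]$. The argument is then the chain
\[
H_\Psi(\E[f|Y]) \,\leq\, H_\Psi(\E[g|B]) \,\leq\, \eta_{\Phi,\Psi}(A,B)\,H_\Phi(g) \,\leq\, \eta_{\Phi,\Psi}(A,B)\,H_\Phi(f),
\]
where the outer two inequalities are ``conditioning never increases $\Psi$- (resp.\ $\Phi$-) entropy''---a Jensen step via the chain rule---and the middle inequality is the definition of the $(A,B)$-SDPI applied to $g$.

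For part (ii), the bound $\eta_{\Phi,\Psi}(AX,BY) \geq \max\{\eta_{\Phi,\Psi}(A,B),\eta_{\Phi,\Psi}(X,Y)\}$ is immediate from part (i): both $A - AX - BY - B$ and $X - AX - BY - Y$ are Markov chains, because in each case the endpoints are deterministic functions of the adjacent pair. For the reverse direction, set $\lambda := \max\{\eta_{\Phi,\Psi}(A,B),\eta_{\Phi,\Psi}(X,Y)\}$ and take an arbitrary $f = f(A,X)$. Using the chain rule for $\Psi$-entropy I would split
\[
H_\Psi(\E[f|BY]) \,=\, H_\Psi(\E[f|Y]) + H_\Psi(\E[f|BY]\mid Y).
\]
The independence $(A,B) \perp (X,Y)$ gives $\E[f|Y] = \E[\E[f|X]|Y]$, so the $(X,Y)$-SDPI applied to the function $\E[f|X]$ of $X$ bounds the first term by $\eta_{\Phi,\Psi}(X,Y)\,H_\Phi(\E[f|X])$. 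For each fixed $y$, writing $h_y(a) := \E_X[f(a,X)|Y=y]$, a short calculation (using $p(a,x|b,y)=p(a|b)p(x|y)$) gives $\E[f|B=b,Y=y] = \E[h_y(A)|B=b]$; the $(A,B)$-SDPI applied pointwise in $y$ and then averaged over $y$ bounds the second term by $\eta_{\Phi,\Psi}(A,B)\,H_\Phi(\E[f|AY]\mid Y)$.

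Combining these estimates with the chain rule $H_\Phi(f) = H_\Phi(\E[f|X]) + H_\Phi(f\mid X)$, the target $H_\Psi(\E[f|BY])\leq \lambda H_\Phi(f)$ reduces to the single inequality
\[
H_\Phi(\E[f|AY]\mid Y) \,\leq\, H_\Phi(f\mid X).
\]
This is the main obstacle and the one place where the hypothesis that one of $\Phi,\Psi$ lies in $\mathscr F$ enters. In the case $\Phi\in\mathscr F$, for each $y$ the function $h_y(A) = \sum_x p(x|y) f(A,x)$ is a convex combination (in $x$) of the slices $f(\cdot,x)$; iterating property~(i) of Definition~\ref{def:class-C} yields the joint convexity of $(s_x)_x \mapsto \sum_x p(x|y)\Phi(s_x) - \Phi(\sum_x p(x|y)s_x)$, and Jensen in the outer $A$-expectation then gives $H_\Phi(h_y)\leq \sum_x p(x|y)H_\Phi(f(\cdot,x))$; averaging in $y$ via $\sum_y p(y)p(x|y) = p(x)$ closes the argument. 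If instead only $\Psi \in \mathscr F$ is available, I would mirror the decomposition on the $\Phi$-side: split $H_\Phi(f) = H_\Phi(\E[f|A]) + H_\Phi(f\mid A)$, apply the two SDPIs to obtain $\lambda H_\Phi(f) \geq H_\Psi(\E[f|B]) + H_\Psi(\E[f|AY]\mid A)$, and finish via the symmetric inequality $H_\Psi(\E[f|AY]\mid A) \geq H_\Psi(\E[f|BY]\mid B)$, itself a consequence of the same property~(i) now applied to $\Psi$.
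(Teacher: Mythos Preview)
Your proposal is correct and follows essentially the same approach as the paper. The only differences are cosmetic: for the lower bound in~(ii) you invoke monotonicity via the Markov chains $A-AX-BY-B$ and $X-AX-BY-Y$ (the paper just restricts to functions of $A$ or $X$ alone), and in the $\Phi\in\mathscr F$ case you condition the chain rule on $Y$ and apply the $(A,B)$-SDPI pointwise in $y$, whereas the paper conditions on $B$ and applies the $(X,Y)$-SDPI pointwise in $b$---a symmetric swap of the roles of the two pairs. Your key inequalities $H_\Phi(\E[f|AY]\mid Y)\le H_\Phi(f\mid X)$ and $H_\Psi(\E[f|AY]\mid A)\ge H_\Psi(\E[f|BY]\mid B)$ are exactly the content of Lemma~\ref{lem:key-lemma-phi-entropy} (parts (b) and (c), up to the chain rule), which you correctly derive directly from property~(i) of Definition~\ref{def:class-C}.
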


The proof of is theorem is very similar to that of Theorem~\ref{data-process-tens-Phi}.

\begin{proof}
(i) Let $f_X$ be a function of $X$.  Then $\E[f|A]$ is a function of $A$, and by the definition of $\eta_{\Phi, \Psi}(A, B)$ we have 
\begin{align*}
\eta_{\Phi, \Psi}(A, B) H_\Phi(f)\geq \eta_{\Phi, \Psi}(A, B) H_\Phi(\E[f|A ]) & \geq H_\Psi\big(\E\big[\E[ f|A]\big|B\big]\big) 
 =   H_\Psi(\E[ f|B]),
 \end{align*}
 where the equality follows since we have the Markov chain $X-A-B$. 
Letting $g_B=\E[f|B]$ we have
$$H_\Psi(g) \geq H_\Psi(\E[g|Y]) = H_\Psi(\E[f|Y]),$$
since $X-B-Y$ forms a Markov chain.  
Putting these together we arrive at 
$\eta_{\Phi, \Psi}(A, B) H_\Phi(f) \geq H_\Psi(\E[f|Y])$. As a result, $\eta_{\Phi, \Psi}(X, Y)\leq \eta_{\Phi, \Psi}(A, B)$.

\vspace{.2in}
\noindent
(ii) By restricting to functions that depend only on $A$ or on $X$, it is easy to see that 
$$\eta_{\Phi, \Psi}(AX, BY) \geq  \max\big\{     \eta_{\Phi, \Psi}(A, B), \eta_{\Phi, \Psi}(X, Y) \big\}.$$

Let 
$\lambda=\max\big\{     \eta_{\Phi, \Psi}(A, B), \eta_{\Phi, \Psi}(X, Y) \big\}.$
Then we need to show that $\lambda \geq \eta_{\Phi, \Psi}(AX, BY)$. To prove this we need show that for any function $f_{AX}$ we have
\begin{align}\label{eq:lambda-s-phi-psi}
\lambda H_\Phi(f) \geq H_\Psi(\E[f|BY]).
\end{align}

First assume that $\Psi\in \mathscr F$. Since $\lambda\geq \eta_{\Phi, \Psi}(X, Y)$ and the distribution of $(X, Y)$ does not change when we condition on $A$, we have
\begin{align*}
\lambda H_\Phi(f|A) \geq H_\Psi(\E[f|AY]|A).
\end{align*}
On the other hand since $\lambda\geq \eta_{\Phi, \Psi}(A, B)$, for $\E[f|A]$, as a function of $A$, we have
\begin{align*}
\lambda H_\Phi(\E[f|A]) \geq H_\Psi\big(\E\big[ \E[f|A]|B  \big]\big) = H_\Psi\big(\E[f|B  ]\big),
\end{align*}
where the equality follows from the independence of $X$ and $(A, B)$.
Summing up the above two inequalities we obtain 
\begin{align*}
\lambda H_\Phi(f) \geq H_\Psi(\E[f|AY]|A) + H_\Psi(\E[f|B  ]).
\end{align*}
Then it suffices to show that 
$$H_\Psi(\E[f|AY]|A) + H_\Psi(\E[f|B  ])\geq H_\Psi(\E[f|BY]).$$
Let $g_{AY}=\E[f|AY]$. Then we have $\E[f|B] = \E[g|B]$ and $\E[g|BY]=\E[f|BY]$. Therefore, this inequality is equivalent to
$$H_\Psi(g|A) + H_\Psi(\E[g|B])\geq H_\Psi(\E[g|BY]).$$
Using $\Psi\in \mathscr F$, this inequality follows from part~(c) of Lemma~\ref{lem:key-lemma-phi-entropy} and $H_\Psi(g|A)\geq H_\Psi(g|AB)$.

Now we assume that $\Phi\in \mathscr F$ and prove~\eqref{eq:lambda-s-phi-psi}. Since $\lambda\geq \eta_{\Phi, \Psi}(X, Y)$, and the distribution of $(X, Y)$ does not change when we condition on $B$, we have
$$ \lambda H_\Phi(\E[f|XB]|B) \geq H_\Psi\big(\E\big[\E[f|XB]|YB\big]\big|B\big)=H_\Psi(\E[f|BY]|B),$$
Similarly, since $\lambda\geq \eta_{\Phi, \Psi}(A, B)$, for $\E[f|A]$ as function of $A$ we have
$$\lambda H_\Phi(\E[f|A]) \geq H_\Psi\big(\E\big[ \E[f|A] \big|B   \big]  \big)= H_\Psi(\E[f|B]) = H_\Psi\big(\E\big[ \E[f|BY] \big|B   \big]  \big).
$$
Summing up the above two inequalities and using the chain rule we find that 
\begin{align*}
\lambda \big(  H_\Phi(\E[f|XB]|B)+    H_\Phi(\E[f|A]) \big)\geq  H_\Psi(\E[f|BY]).
\end{align*}
Then it suffices to show that 
$$H_\Phi(f)\geq H_\Phi(\E[f|XB]|B)+    H_\Phi(\E[f|A]).$$
Observe that since $\Phi\in \mathscr F$ and $A-B-X$ forms a Markov chain, by part~(b) of Lemma~\ref{lem:key-lemma-phi-entropy} we have
$H_\Phi(\E[f|XB]|B)\leq H_\Phi(f|AB)\leq H_\Phi(f|A)$. The above inequality then follows from the chain rule. 
\end{proof}

Note that in the proof of the monotonicity property we use only the convexity of $\Phi$ and $\Psi$.


\section{Proof of Theorem \ref{thm:newamiphi}}

\label{app:thm:newamiphi}
Given $u\in\mathcal U$, let $$f_u(x_{[k]})=\frac{p(x_{[k]},u)}{p(x_{[k]})p(u)},$$
Then, $f_u\geq 0$ and $\mathbb{E}_{p(x_{[k]})}[f_u(X_{[k]})]=1$, and 
$$\sum_{u}p(u)H_\Phi(f_u)=I_{\Phi}(X_{[k]};U)$$
Furthermore,
\begin{align}\mathbb{E}[f_u(X_{[k]})|X_i=x_i]&=\sum_{x_{\hat{i}}}p(x_{\hat{i}}|x_i)f_u(x_{[k]})
\\&=\sum_{x_{\hat{i}}}p(x_{\hat{i}}|x_i)\frac{p(x_{[k]},u)}{p(x_{[k]})p(u)}
\\&=\frac{p(x_i,u)}{p(x_i)p(u)}
\end{align}
As a result
$$\sum_{u}p(u)H_\Phi(\mathbb{E}[f_u|X_i])=I_{\Phi}(X_i;U).$$
Now, if $\lambda_{[k]}\in \fR'_{\Phi}(X_{[k]})$, for any $u$ we have
$$\sum_i\lambda_iH_{\Phi}(\mathbb{E}[f_u|X_i])\leq  H_{\Phi}(f_u).$$
Multiplying the above in $p(u)$ and adding up over $u$, we get that $\lambda_{[k]}$ is in the 
$\fR_{I_{\Phi}}(X_{[k]})$. Hence, $ \fR'_{\Phi}(X_{[k]})\subseteq \fR_{I_{\Phi}}(X_{[k]})$.

Conversely, to show that $\fR_{I_{\Phi}}(X_{[k]}) \subseteq\fR'_{\Phi}(X_{[k]})$
we adopt the construction from \cite{AGKN}. Assume that $\lambda_{[k]}$ is in $\fR_{I_{\Phi}}(X_{[k]})$. Let $f(x_1,\dots, x_k)$ be an arbitrary function satisfying $f\geq 0$ and $\mathbb{E}[f]=1$. We want to show that
$$\sum_i\lambda_iH_{\Phi}(\mathbb{E}[f|X_i])\leq  H_{\Phi}(f).$$

Without loss of generality we may assume that $f(x_1,\dots, x_n)$ is zero whenever $p(x_1\dots x_n)$ is zero. Define the distribution 
$p_{\epsilon}(u, x_1,\dots, x_k)$ as follows.
Let $U_\epsilon$ be a binary random variable such that $p_\epsilon(U_\epsilon=0)=\epsilon$ and $p_\epsilon(U_\epsilon=1)=1-\epsilon$. Also let 
\begin{align*}
p_{\epsilon}(x_1,\dots ,x_k|U_\epsilon=0)&=p(x_1,\dots ,x_k)f(x_1,\dots ,x_k)
 \\
p_{\epsilon}(x_1,\dots, x_k|U_\epsilon=1)&=p(x_1,\dots ,x_k)\left(\frac{1}{1-\epsilon}  - \frac{\epsilon}{1-\epsilon}f(x_1,\dots ,x_k)\right)
\end{align*}
Observe that for sufficiently small $\epsilon\geq 0$, $p_\epsilon(u, x_1,\dots ,x_k)$ is a probability distribution because  $f\geq 0, \mathbb{E}[f]=1$. Moreover, we have $p_{\epsilon}(x_1,\dots ,x_k)= p(x_1,\dots ,x_k)$. Then since $(\lambda_1,\dots, \lambda_k)$ is in $I_\Phi$-ribbon, we have 
$$\sum_i \lambda_i I_\Phi(U_\epsilon; X_i)\leq I_\Phi(U_\epsilon; X_1\dots X_k).$$
Indeed the function 
$$t(\epsilon)= I_\Phi(U_\epsilon; X_1\dots X_k) - \sum_i \lambda_i I_\Phi(U_\epsilon; X_i)$$
is non-negative for sufficiently small $|\epsilon|$. On the other hand, we have $t(0)=0$. Then we should have $t'(0)\geq 0$. Observe that
\begin{align}I_\Phi(U_\epsilon; X_1\dots X_k)&=\sum_{x_{[k]}}\epsilon p(x_{[k]})\Phi(\frac{\epsilon p(x_{[k]})f(x_{[k]})  }{\epsilon p(x_{[k]})})+\sum_{x_{[k]}}(1-\epsilon) p(x_{[k]})\Phi(\frac{
p(x_{[k]})\left(1- \epsilon f(x_{[k]})\right)
 }{(1-\epsilon) p(x_{[k]})})
\\&=\sum_{x_{[k]}}\epsilon p(x_{[k]})\Phi(f(x_{[k]}))+\sum_{x_{[k]}}(1-\epsilon) p(x_{[k]})\Phi(\frac{
1- \epsilon f(x_{[k]})
 }{1-\epsilon})
\end{align}
Then,\begin{align}\frac{\partial}{\partial \epsilon}I_\Phi(U_\epsilon; X_1\dots X_k)\Big|_{\epsilon=0} &=\sum_{x_{[k]}}p(x_{[k]})\Phi(f(x_{[k]}))-\Phi(1)
\\&=H_{\Phi}(f)
\end{align}
Now, observe that
\begin{align*}
p_{\epsilon}(x_i|U_\epsilon=0)&=p(x_i)\mathbb{E}[f|X_i=x_i]
 \\
p_{\epsilon}(x_i|U_\epsilon=1)&=p(x_i)\left(\frac{1}{1-\epsilon}  - \frac{\epsilon}{1-\epsilon}\mathbb{E}[f|X_i=x_i]\right)
\end{align*}
Thus, similarly,
$$\frac{\partial}{\partial \epsilon}I_\Phi(U_\epsilon; X_i)\Big|_{\epsilon=0} = H_{\Phi}(\mathbb{E}[f|X_i]).$$
Putting these together we obtain the desired equation.

\section{Proof of Theorem \ref{thm:HC-to-MC}}\label{app:HC-to-MC}
The equality $ \fR_{\Phi_\alpha}(X_{[k]})=\fR_{\varphi_\alpha}(X_{[k]})$ for $\alpha=2$ is clear since $\Phi_2=\varphi_2$. So we assume that $\alpha\in (1, 2)$.
Moreover, the inclusion  $ \fR_{\Phi_\alpha}(X_{[k]})\subseteq \fR_{\varphi_\alpha}(X_{[k]})$ is immediate once we note that 
$$\Phi_\alpha(t) = c_1\big(\varphi_\alpha(1+t) +\varphi_\alpha(1-t)\big) - c_2,$$ 
for some positive constants $c_1, c_2$. Then for any function $f_{X_{[k]}}$, we have
$$H_{\Phi_\alpha}(f) = c_1\big(H_{\varphi_\alpha}(1+f)+H_{\varphi_\alpha}(1-f)\big).$$
Writing the above equation for $f$, and $\mathbb{E}[f|X_i]$ for $i\in[k]$, we obtain $ \fR_{\varphi_\alpha}(X_{[k]})\subseteq \fR_{\Phi_\alpha}(X_{[k]})$. 
So we need to prove the inclusion in the other direction.

Let $\lambda_{[k]}\in \fR_{\Phi_\alpha}(X_{[k]})$. We will show that $\lambda_{[k]}\in \fR_{\varphi_\alpha}(X_{[k]})$. 
Let $f\geq 0$ be some non-negative function with $m=\E f$. Define  
$$g_\epsilon = \epsilon f - 1.$$
Then for sufficiently small $\epsilon\geq 0$ the range of  $g_\epsilon$ is inside the domain of $\Phi_\alpha$. For any such $\epsilon$ we have 
\begin{align}\label{eq:H-Phi-alpha-varphi}
H_{\Phi_\alpha}(g_\epsilon) \geq \sum_{i=1}^k\lambda_i H_{\Phi_\alpha}(\E[g_\epsilon|X_i]).
\end{align}
We compute 
\begin{align*}
\Phi_\alpha(g_\epsilon) - \Phi_\alpha(\E g_\epsilon) & = \frac{1}{2^\alpha-2} \Big(  (1+g_\epsilon)^\alpha - (1+\E g_\epsilon)^\alpha + (1-g_\epsilon)^\alpha - (1-\E g_\epsilon)^\alpha     \Big)\\
& =  \frac{1}{2^\alpha-2} \Big(  (\epsilon f)^\alpha - (\epsilon m )^\alpha + (2-\epsilon f)^\alpha - (2-\epsilon m )^\alpha     \Big)\\
& = \frac{1}{2^\alpha-2} \Big(  \epsilon^\alpha ( f^\alpha - m ^\alpha) + (2-\epsilon f)^\alpha - (2-\epsilon m )^\alpha     \Big).
\end{align*}
Taking expectation from both sides we obtain
\begin{align*}
H_{\Phi_\alpha}(g_\epsilon) & = \frac{1}{2^\alpha-2} \Big( \epsilon^\alpha (\E[f^\alpha] - m ^\alpha) + \E[(2-\epsilon f)^\alpha - (2-\epsilon m )^\alpha   ] \Big)\\ 
& = \frac{\epsilon^\alpha}{2^\alpha-2}  H_{\varphi_\alpha}(f) + O(\epsilon^2),
\end{align*}
where the $O(\epsilon^2)$ term is  derived once we take the Taylor expansion of $(2-t)^\alpha$.
We similarly have
\begin{align*}
H_{\Phi_\alpha}\big(\E[g_\epsilon|X_i]\big)& = \frac{\epsilon^\alpha}{2^\alpha-2} 
H_{\varphi_\alpha}\big(\E[f|X_i]\big) + O(\epsilon^2).
\end{align*}
Putting these in~\eqref{eq:H-Phi-alpha-varphi} and using the fact that $1<\alpha<2$ we find that 
$$H_{\varphi_\alpha}(f) \geq \sum_{i=1}^k\lambda_i H_{\varphi_\alpha}(\E[f|X_i]).$$
Therefore, $\lambda_{[k]}\in \fR_{\varphi_\alpha}(X_{[k]})$.

\section{Proof of Theorem \ref{thm:2-rep-MC-ribbon}}\label{app:proof-thm:2-rep-MC-ribbon}

Using Proposition~\ref{prop:lemmaMCrb} and  the inequality $\Var[\E[f|X_i]]\geq \E[f\hat f_i]^2$, the inclusion $\fS(X_{[k]})\subseteq \fS'(X_{[k]})$ is immediate. It suffices to show that $\fS'(X_{[k]})\subseteq \fS(X_{[k]})$.

Let $\lambda_{[k]}\in \fS'(X_{[k]})$ and let $f_i(X_i)$, $i=1,\dots, k$, be arbitrary functions with zero mean. According to the characterization of Theorem~\ref{thm:mc-equivalent-rep} of the MC~ribbon we need to show that 
$$\Var[f_1+\cdots +f_k]\leq  \sum_{i=1}^k \frac{1}{\lambda_i} \Var[f_i].$$

Let  $c_i=\sqrt{\Var[f_i]}$ and define $\hat f_i= f/c_i$. Also define
 $m_{ij}=\E[\hat f_i\hat f_j]$. Observe that
$$\Var[f]=\sum_{i,j=1}^kc_{i}c_{j}m_{ij},$$
and 
\begin{align*}
\E[f\hat f_i]^2=\bigg(\sum_{j=1}^kc_jm_{ij}\bigg)^2.
\end{align*}
Now since  $\lambda_{[k]}\in \fS'(X_{[k]})$ we have 
$$\sum_{i,j=1}^kc_{i}c_{j}m_{ij}\geq \sum_{i=1}^k \lambda_i \bigg(\sum_{j=1}^kc_jm_{ij}\bigg)^2.$$
Indeed, this inequality must hold for all choices of $c_i$'s.

Let $M$ be a matrix whose $(i,j)$-th entry is $m_{ij}$. Also let $\Lambda$ be a diagonal matrix with diagonal entries equal to $\lambda_1, \dots, \lambda_k$.
Then, the above inequality, for all choices of $c_i$'s, is equivalent to
$M\geq M\Lambda M$, which itself is equivalent to $M^{-1}\geq \Lambda$. Next, since $t\mapsto -t^{-1}$ is operator monotone, it is also equivalent to $\Lambda^{-1}\geq M$. Then by simple calculation $\Lambda^{-1}\geq M$ means that for all $c_1, \dots, c_k$ we have 
$$\Var[c_1\hat f_1+\cdots + c_k\hat f_k]\leq \sum_{i=1}^k \frac{1}{\lambda_i} c_i^2 =  \sum_{i=1}^k \frac{1}{\lambda_i} \Var[c_i \hat f_i].$$
This is what we wanted to show. 

\section{Proof of Proposition~\ref{prop:binary-binary-ternary}}\label{app:binary-binary-ternary}

We use Theorem~\ref{thm:mc-equivalent-rep} to prove this proposition. 
Any zero-mean function of $X_i$, $i=1, 2$, is of the form $f_i=a_ig_i$ for some constants $a_1, a_2$. Moreover, the space of zero-mean functions of $X_3$ is two-dimensional. 
Since we assume that $\mathbb{E}[g_1|X_3]$ and $\mathbb{E}[g_2|X_3]$ are linearly independent, this space is spanned by these two functions. That is,  any zero-mean function $f_3(X_3)$ can be expressed as 
$$f_3=a_3  \mathbb{E}[g_1|X_3]+a_4  \mathbb{E}[g_2|X_3],$$ 
for some constants $a_3$ and  $a_4$. 
Then using equations~\eqref{eq:g1g2-E}-\eqref{eq:def-r123}  we have
\begin{align*}
\Var[f_1]&=a_1^2,
\\ \Var[f_2]&=a_2^2,
\\\Var[f]&=a_3 ^2\rho_{13}^2+a_4 ^2\rho_{23}^2+2a_3 a_4  r_{12\rightarrow 3},
\\ \E[f_1 f_2]&=a_1a_2\rho_{12},
\\ \E[f_1f_3]&=a_1\big(a_3  \rho_{13}^2+a_4  r_{1,2\rightarrow 3}\big),
\\ \E[f_2f_3]&=a_2\big(a_3  r_{1,2\rightarrow 3}+a_4  \rho_{2,3}^2\big).
\end{align*}
Hence, 
\begin{align*}
\Var\bigg[\sum_{i=1}^k f_i\bigg]&=a_1^2+a_2^2+a_3 ^2\rho_{1,3}^2+a_4 ^2\rho_{2,3}^2+2a_3 a_4  r_{1,2\rightarrow 3}
\\&\quad\, +2a_1a_2\rho_{1,2}+2a_1\big(a_3  \rho_{1,3}^2+a_4  r_{1,2\rightarrow 3}\big)+2a_2\big(a_3  r_{1,2\rightarrow 3}+a_4  \rho_{2,3}^2\big).
\end{align*}
On the other hand, $\fS(X_1, X_2, X_3)$ is the set of triples $(\lambda_1,\lambda_2, \lambda_3)$ such that
$$\Var \bigg[\sum_{i=1}^k f_i\bigg]\leq \sum_{i=1}^3 \frac{1}{\lambda_i} \Var[f_i],$$
for all choices of $a_1, \dots, a_4$.
Using the previous equations, the above inequality is a quadratic form is terms of $a_1, \dots, a_4$. Indeed, letting $\mathbf{v}=[a_1, a_2, a_3 , a_4 ]$, the above inequality is equivalent to $\mathbf{v} \Delta \mathbf{v}^t\geq 0$ for all $\mathbf{v}$, where
\begin{align}\label{eqn:AAA22224}
\Delta=\begin{bmatrix}
   \frac{1}{\lambda_1}-1       &-\rho_{12} & -\rho_{13}^2 & -r_{12\rightarrow 3} \\
    -\rho_{12}       & \frac{1}{\lambda_2}-1 & -r_{12\rightarrow 3} & -\rho_{23}^2\\
    -\rho_{13}^2       &-r_{1,2\rightarrow 3} & \rho_{13}^2(\frac{1}{\lambda_3}-1)  &  r_{1,2\rightarrow 3}(\frac{1}{\lambda_3}-1) \\
    -r_{1,2\rightarrow 3}       & -\rho_{23}^2 &  r_{1,2\rightarrow 3}(\frac{1}{\lambda_3}-1) & \rho_{23}^2(\frac{1}{\lambda_3}-1)
\end{bmatrix}.
\end{align}
 In other words, $\fS(X_1, X_2, X_3)$ is the set of $(\lambda_1,\lambda_2, \lambda_3)$ for which $\Delta$ is positive semi-definite.

Observe that $\Delta$ can be written in the block form:
\[
\Delta=\begin{bmatrix}
    A & - B
\\- B& (\frac{1}{\lambda_3}-1) B
\end{bmatrix},
\]
where $A$ and $B$ are $2\times 2$ matrices.
Then using \cite[p.14]{Bhatia},  $\Delta$ is positive semi-definite if and only if 
$$ A- B(\frac{1}{\lambda_3}-1)^{-1}=
\begin{bmatrix}
   \frac{1}{\lambda_1}-1-\rho_{1,3}^2(\frac{1}{\lambda_3}-1) ^{-1}        &-\rho_{1,2} -r_{1,2\rightarrow 3}(\frac{1}{\lambda_3}-1)^{-1}  \\
    -\rho_{1,2}  -r_{1,2\rightarrow 3}(\frac{1}{\lambda_3}-1)^{-1}     & \frac{1}{\lambda_2}-1 -\rho_{2,3}^2(\frac{1}{\lambda_3}-1)^{-1} 
\end{bmatrix},$$ 
is positive semi-definite. This is equivalent with the conditions given in the statement of the theorem.
\section{Proof of Theorem \ref{thm:Gaussian}}\label{app:Gaussian}

When $X_i$'s are binary, then any function of $X_i$ with zero mean is of the form $f_i(X_i)= a_i(X_i-\E[X_i])$ for some constant $a_i$. Using this fact it is easy to see that 
\begin{align}\label{eq:var-f-i-01}
\Var\bigg[\sum_i f_i\bigg]\leq \sum_i \frac{1}{\lambda_i}\Var[f_i],
\end{align}
holds for all choices of $a_i$'s if and only if $R\leq \Lambda^{-1}$.

Let us turn to the proof for Gaussian variables. Our proof is an extension of the proof of Lancaster~\cite{Lancaster57} to the multivariate case.

Observe that scaling of and adding a constant to the variables $X_i$ would not change the MC~ribbon. Hence, without loss of generality we assume that $\E[X_i]=0$, $\Var[X_i]=1$, and $\mathbb{E}[X_iX_j]=R_{ij}$.

The Hermite-Tchebycheff polynomials are  defined as follows:
$$\psi_\ell(x)=(-1)^\ell e^{x^2}\frac{d^\ell}{dx^\ell}e^{-x^2}, \qquad \ell\geq 0.$$
The following facts are known about  these polynomials~\cite{Lancaster57}:
\begin{enumerate}[label=(\roman*)]
\item $\psi_0(x)=1$ is a constant function, and $\psi_i(x)$ and $\psi_j(x)$ are  orthonormal with respect to standard normal distribution, i.e., 
$$\frac{1}{2\pi}\int_{-\infty}^{\infty}\psi_i(x)\psi_j(x)e^{-\frac{x^2}{2}}\dd x=\delta_{ij}.\qquad \forall i,j.$$

\item If $X$ is a normal random variable, any function of $X$ denoted by $f(X)$ that has finite variance can be approximated as follows: for any $\epsilon>0$, there is a sequence $\{a_\ell | \,\ell\geq 0\}$ such that $\sum_{\ell} a_\ell^2$ is convergent, and for
$$\hat{f}(x)=\sum_{\ell=0}^\infty a_\ell\psi_\ell(x),$$
we have 
$$\mathbb{E}\Big[\big|f(X)-\hat{f}(X)\big|^2\Big]\leq \epsilon.$$
Furthermore, if $\mathbb{E}[f(X)]=0$, we may take $a_0=0$.

\item If $X$ and $Y$ are unit variance, jointly Gaussian random variables with correlation coefficient $\rho$, then 
$$\mathbb{E}[\psi_\ell(X)\psi_{\ell'}(Y)]=\delta_{\ell \ell'}\rho^\ell.$$
\end{enumerate}

Fix $(\lambda_1, \dots, \lambda_k)\in [0,1]^k$. To verify the validity of~\eqref{eq:var-f-i-01} take
some arbitrary zero-mean functions $f_i(X_i)$, $i=1, \dots, k$, with finite variance. Fix some $\epsilon>0$, and using property~(ii) explained above let
\begin{align}
\hat{f}_i(x_i)=\sum_{\ell=1}^\infty a_{i\ell}\psi_\ell(x_i),
\label{eqnappx}
\end{align}
be such that
\begin{align}\mathbb{E}\Big[\big|f_i(X_i)-\hat{f}_i(X_i)\big|^2\Big]\leq \epsilon, \qquad \forall i.
\label{eqn:edited1}\end{align}
Then, by the Cauchy-Schwarz inequality we have
\begin{align}\mathbb{E}\bigg[\Big| \sum_{i=1}^k f_i(X_i)- \sum_{i=1}^k\hat{f}_i(X_i)\Big|^2\bigg]\leq k\epsilon.\label{eqn:edited2}\end{align}
Then, it is not hard to verify that \eqref{eqn:edited1} implies
$$\Big| \Var[\hat{f_i}(X_i)]-\Var[f_i(X_i)]\Big|=O(\sqrt \epsilon),$$
and similarly \eqref{eqn:edited2} implies
$$\bigg| \Var\Big[ \sum_{i=1}^k \hat f_i(X_i)\Big] -\Var\Big[\sum_{i=1}^k {f}_i(X_i)\Big]\bigg|=O(\sqrt \epsilon).$$
Therefore, it suffices to verify~\eqref{eq:var-f-i-01} for functions of the form~\eqref{eqnappx}. 

Using properties (i) and (iii) we have 
$$\Var[\hat{f}_i(X_i)]=\sum_{\ell=1}^\infty a_{i\ell}^2,$$
and
\begin{align*}
\Var\Big[\sum_{i=1}^k\hat{f}_i(X_i)\Big]&=\sum_{i_1,i_2=1}^k \E\big[\hat{f}_{i_1}(X_{i_1})\hat{f}_{i_2}(X_{i_2})\big]
\\&=\sum_{i_1,i_2=1}^k \sum_{\ell_1, \ell_2=1}^\infty a_{i_1\ell_1}a_{i_2\ell_2}\E\big[\psi_{\ell_1}(X_{i_1})\psi_{\ell_2}(X_{i_2})\big]
\\&=\sum_{i_1,i_2=1}^k\sum_{\ell=1}^\infty a_{i_1\ell}a_{i_2\ell}R_{i_1, i_2}^\ell.
\end{align*}
Thus, we are interested in the set of $k$-tuples $(\lambda_1, \dots, \lambda_k)$ such that for all $a_{i\ell}$'s we have 
$$\sum_{i_1,i_2=1}^k\sum_{\ell=1}^\infty a_{i_1\ell}a_{i_2\ell}R_{i_1, i_2}^\ell \leq \sum_{i=1}^k \frac{1}{\lambda_i}\sum_{\ell=1}^\infty a_{i\ell}^2.$$
This holds if and only if for any $\ell\in\mathbb{N}$ and for any $a_{i\ell}$'s  we have
$$\sum_{i_1,i_2=1}^k a_{i_1\ell}a_{i_2\ell}R_{i_1, i_2}^\ell \leq \sum_{i=1}^k \frac{1}{\lambda_i}a_{i\ell}^2.$$
This can be expressed in matrix form as 
\begin{align}
R^{\circ \ell}\leq \Lambda^{-1} \qquad \forall \ell\geq 1,\label{eqn:Had}\end{align}
where $R^{\circ \ell}$ is the Hadamard product (entry-wise product) of $R$ with itself $\ell$ times. For $\ell=1$, we have the condition
\begin{align}
R\leq \Lambda^{-1}.
\label{eqn:Had2}
\end{align}
Now, we claim that \eqref{eqn:Had2} implies \eqref{eqn:Had} for any $\ell\geq 2$. To prove this, note that $R\leq \Lambda^{-1}$ means that $\Lambda^{-1}-R\geq 0$ is positive semi-definite. Moreover, $R$ is a correlation matrix, so it is positive semi-definite. Since the Hadamard product of two positive semi-definite matrix is positive semi-definite, $R^{\circ (\ell -1)}$ is positive semi-definite as well. Similarly, $R^{\circ (\ell -1)}\circ(\Lambda^{-1}-R)$ is positive semi-definite, \emph{i.e.}, $R^{\circ (\ell -1)} \circ \Lambda^{-1}\geq R^{\circ \ell}$. Now the point is that the diagonal entries of $R$ are all one, and $\Lambda$ is diagonal. Therefore, $R^{\circ (\ell -1)} \circ \Lambda^{-1} = \Lambda^{-1}$. This completes the proof.


\section{Proofs of Theorem \ref{thm:tensorization2}}\label{app:tilde-S-proof}

\paragraph{Tensorization:}
Assuming that $X_{[k]}$ and $Y_{[k]}$ are independent, we would like to show that 
$$\widetilde{\fS}(X_1Y_1,\cdots, X_kY_k)=\widetilde{\fS}(X_1,\cdots, X_k)\cap\fS(Y_1,\cdots, Y_k)$$
We clearly have
$$\widetilde{\fS}(X_1Y_1,\cdots, X_kY_k)\subseteq \widetilde{\fS}(X_1,\cdots, X_k)\cap\fS(Y_1,\cdots, Y_k)$$
since we can restrict to functions $f_i(X_{i},Y_{i})$ to depend only on one of $X_{i}, Y_{i}$. To prove the inclusion in the other direction, take some $(\lambda_1, \cdots, \lambda_k)\in \widetilde{\fS}(X_1,\cdots, X_k)\cap\widetilde{\fS}(Y_1,\cdots, Y_k)$. For arbitrary functions $f_i(X_i, Y_i)$, $i=1,\dots, k$, we compute
\begin{align}
\Var_{X_{[k]}Y_{[k]}}\bigg[\sum_{i=1}^k f_i\bigg]&=
\Var\Bigg[\E\Big[\sum_{i=1}^kf_i\Big|X_{[k]}\Big]\Bigg]+
\Var\Big[\sum_{i=1}^kf_i\Big| X_{[k]}\Big]\label{eqn:totalvar1}
\\&=\Var\Bigg[\sum_{i=1}^k\E\Big[f_i\Big|X_{i}\Big]\Bigg]+
\Var\Big[\sum_{i=1}^kf_i\Big| X_{[k]}\Big]\label{eqn:totalvar2}
\\&\geq \sum_{i=1}^k{\lambda_i}\Var\big[\E[f_i|X_i]+
\sum_{i=1}^k{\lambda_i}\Var[f_i|X_{[k]}]\label{eqn:totalvar3}\\
&=\sum_{i=1}^k{\lambda_i}\Var\big[\E[f_i|X_i]+
\sum_{i=1}^k{\lambda_i}\Var[f_i|X_{i}]\label{eqn:totalvar5}
\\&=\sum_{i=1}^k{\lambda_i}\Var[f_i].\label{eqn:totalvar4}
\end{align}
Here equations~\eqref{eqn:totalvar1} and \eqref{eqn:totalvar4} follow from the law of total variance. Equations~\eqref{eqn:totalvar2} and~\eqref{eqn:totalvar5}  follow from the independence of $X_{[k]}$ and $Y_{[k]}$. Finally,~\eqref{eqn:totalvar3} holds since $(\lambda_1, \dots, \lambda_k)$ is in both $\widetilde \fS(X_{[k]})$
and $\widetilde \fS(Y_{[k]})$.

\paragraph{Monotonicity:}
Let $(\lambda_1, \dots, \lambda_k)\in \widetilde\fS(X_1, \dots, X_k)$ and let $f_i(Y_i)$, $i=1, \dots, k$, be arbitrary functions. We need to show that 
\begin{align}\label{eq:mon-29ij}
\Var\Big[ \sum_{i=1}^k f_i \Big] \geq \sum_{i=1}^k \lambda_i \Var[f_i].
\end{align}

For functions $\E[f_i|X_i]$, $i=1, \dots, k$, we have 
$$\Var\Bigg[\E\Big[\sum_{i=1}^k  f_i\Big|X_{[k]}\Big] \Bigg] =\Var\Big[\sum_{i=1}^k  \E[f_i|X_i] \Big] \geq \sum_{i=1}^k \lambda_i \Var[\E[f_i|X_i]].$$
Moreover, since $Y_i$'s are independent conditioned on $X_{[k]}$ we have 
$$\Var\Big[   \sum_{i=1}^k f_i \Big| X_{[k]}     \Big]=  \sum_{i=1}^k \Var[f_i|X_{[k]}] = \sum_{i=1}^k \Var[f_i|X_i] \geq \sum_{i=1}^k\lambda_i \Var[f_i|X_i].$$
Summing up the above two inequalities and using the law of total variance, we obtain~\eqref{eq:mon-29ij}. 

\section{Proof of Theorem \ref{thm:tilde-S-extreme-0}}\label{app:tilde-S-extreme-0}

If $(f_1, \dots, f_k)$ with the conditions in the theorem exists, we clearly have $\widetilde \fS(X_{[k]})=\{(0,\dots, 0)\}$.

Now suppose that $\widetilde \fS(X_{[k]})=\{(0,\dots, 0)\}$. Then for any $\epsilon>0$ there are functions $f_i^{(\epsilon)}(X_i)$, $i=1,\dots, k$, such that 
$$\Var\Big[f_1^{(\epsilon)}+ \cdots+ f_k^{(\epsilon)}\Big] \leq \epsilon \sum_{i=1}^k \Var\Big[f_i^{(\epsilon)}\Big],$$
and 
$$\sum_{i=1}^k \Var\Big[f_i^{(\epsilon)}\Big]=1.$$
Then by a compactness argument, there are limiting functions $\hat f_i(X_i)$, $i=1,\dots, k$, such that 
$$\sum_{i=1}^k \Var\big[\hat f_i\big]=1,$$
and 
$$\Var[\hat f_1+\cdots +\hat f_k]=0.$$
Since $\E[\hat f_i]=0$ for all $i$, the latter equation means that $\hat f_1+\cdots + \hat f_k=0$. Furthermore, $(\hat f_1, \dots, \hat f_k)$ is non-zero because of $\sum_{i=1}^k \Var[\hat f_i]=1.$
We are done.

\end{document}